\documentclass[a4paper,USenglish,cleveref,autoref,thm-restate,numberwithinsect,nameinlink]{lipics-v2021}

\newif\ifFinal
\Finaltrue

\usepackage[table]{xcolor} 
\usepackage{tikz}
\usetikzlibrary{arrows.meta}
\usepackage{algorithm2e}
\usepackage[T1]{fontenc}

\newcommand{\Oh}{\ensuremath{\mathcal{O}}}

\newcommand{\proofparagraph}[1]{\medskip\noindent\textit{#1.}}

\DeclareMathOperator{\poly}{poly}
\DeclareMathOperator{\twwithoutN}{{tw}}
\DeclareMathOperator{\vretwithoutN}{{ret}}
\DeclareMathOperator{\eretwithoutN}{{e-ret}}
\DeclareMathOperator{\off}{off}
\DeclareMathOperator{\desc}{desc}
\DeclareMathOperator{\anc}{anc}
\DeclareMathOperator{\DP}{DP}
\newcommand{\Dbar}{{\ensuremath{\overline{D}}}\xspace}
\newcommand{\kbar}{{\ensuremath{\overline{k}}}\xspace}
\newcommand{\w}{{\ensuremath{\omega}}\xspace}
\newcommand\lb{}

\newcommand{\yes}{{\normalfont\texttt{yes}}\xspace}
\newcommand{\no}{{\normalfont\texttt{no}}\xspace}

\newcommand{\XP}{{\normalfont\texttt{XP}}\xspace}
\newcommand{\Wh}[1]{{\normalfont\texttt{{W}}}[\ensuremath{#1}]\xspace}
\newcommand{\NP}{{\normalfont\texttt{NP}}\xspace}
\newcommand{\FPT}{{\normalfont\texttt{FPT}}\xspace}
\newcommand{\SETH}{{\normalfont\texttt{SETH}}\xspace}

\newcommand{\Instance}{{\ensuremath{\mathcal{I}}}\xspace}
\newcommand{\Net}{{\ensuremath{\mathcal{N}}}\xspace}

\newcommand{\vret}{{\ensuremath{\vretwithoutN_{\Net}}}\xspace}
\newcommand{\eret}{{\ensuremath{\eretwithoutN_{\Net}}}\xspace}
\newcommand{\tw}{{\ensuremath{\twwithoutN_{\Net}}}\xspace}
\newcommand{\PD}{\PDsub{\Net}}
\newcommand{\PDsub}[1]{{\ensuremath{PD_{#1}}}\xspace}

\newcommand{\take}{\texttt{take}\xspace}
\newcommand{\leave}{\texttt{leave}\xspace}

\newcommand{\Index}{\texttt{ind}\xspace}

\newcommand{\SstarPrime}[1]{\ensuremath{\mathcal{S}^*_{#1}}\xspace}
\newcommand{\Sstar}{\SstarPrime{t,R,G,B,s}}

\newcommand{\problemdef}[3]{
	\begin{quote}
	  \normalsize\textsc{#1} \smallskip \\
	  \begin{tabularx}{0.9\textwidth}{@{}l@{\hspace{3pt}}X}
		\normalsize\textbf{Input:}	& \normalsize#2 \\
		\normalsize\textbf{Question:} & \normalsize#3
	  \end{tabularx}
	\end{quote}
}

{\upshape\itshape}{\upshape\rmfamily}
\newtheorem{rr}{Reduction~Rule}{\upshape\itshape}{\upshape\rmfamily}

\newcommand{\MAPPD}{{\normalfont\textsc{MapPD}}\xspace}
\newcommand{\MAPPDlong}{{\normalfont\textsc{Max-All-Paths-PD}}\xspace}
\newcommand{\cMAPPD}{{\normalfont\textsc{colored-\MAPPD}}\xspace}
\newcommand{\cMAPPDlong}{{\normalfont\textsc{colored-\MAPPDlong}}\xspace}
\newcommand{\wpSC}{{\normalfont\textsc{wpSC}}\xspace}
\newcommand{\wpSClong}{{\normalfont\textsc{Item-Weighted Partial Set Cover}}\xspace}
\newcommand{\rbnb}{{\normalfont\textsc{Red-Blue Non-Blocker}}\xspace}
\newcommand{\SC}{{\normalfont\textsc{Set Cover}}\xspace}

\newcommand{\KP}{{\normalfont\textsc{Knapsack}}\xspace}

\Crefname{rr}{Rule}{Rules}
\crefname{rr}{Rule}{Rules}
\Crefname{theorem}{Theorem}{Theorems}
\crefname{theorem}{Thm.}{Thms.}
\Crefname{corollary}{Corollary}{Corollaries}
\crefname{corollary}{Cor.}{Cors.}

\ifFinal

\usepackage[disable]{todonotes}
\else

\usepackage[notref, notcite]{showkeys}

\usepackage{todonotes}
\fi

\usepackage{lineno}
\linenumbers

\newcommand{\myrowcols}{\myrowcolsnum{0}}
\newcommand{\myrowcolsnum}[1]{\rowcolors{#1}{gray!10!white}{white}}

\newcommand{\kommentar}[1]{}

\title{Maximizing All-Paths Phylogenetic Diversity: Parameterized Approaches for Networks\footnote{A preliminary version of this paper has been presented at IPEC 2023~\cite{MAPPD}.}} 

\titlerunning{Maximizing All-Paths Phylogenetic Diversity}
 \author{Mark Jones}
 	{Middlesex University, London, United Kingdom}
 	{m.e.l.jones@tudelft.nl}
 	{https://orcid.org/0000-0002-4091-7089}
 	{Partially supported by Netherlands Organisation for Scientific Research~(NWO) grant OCENW.KLEIN.125.}
 \author{Jannik Schestag}
 	{University of Bergen, Bergen, Norway\\
 	The research was largely carried out during an extended research visit of Jannik Schestag at Delft University of Technology, Delft, The Netherlands, in 2023. We thank the German Academic Exchange Service (DAAD), project 57556279, for the support. \and \url{www.JannikSchestag.eu}}
 	{j.t.schestag@uni-jena.de}
 	{https://orcid.org/0000-0001-7767-2970}
 	{Supported by the Dutch Research Council (NWO), grant~OCENW.GROOT.2019.015.}

\authorrunning{Jones and Schestag}

\ccsdesc[500]{Theory of computation~Fixed parameter tractability}
\ccsdesc[300]{Theory of computation~W hierarchy}
\ccsdesc[300]{Applied computing~Bioinformatics}

\keywords{Phylogenetic Networks; Phylogenetic Diversity; Parameterized Complexity}

\Copyright{Jones and Schestag}

\hideLIPIcs
\nolinenumbers

\begin{document}

\maketitle

\begin{abstract}
	Phylogenetic Diversity (PD) is a fundamental measure of biodiversity, originally defined on phylogenetic trees and widely used in conservation biology.
	Phylogenetic trees are often generalised to directed acyclic graphs, called phylogenetic networks. As such, a corresponding generalization of PD is needed.
	A natural generalization to edge-weighted phylogenetic networks is the all-paths measure, where the diversity of a set~$S$ of species (taxa) is defined as the total weight of all edges that lie on a path from the root to at least one species in~$S$.
	While maximizing PD on trees can be solved in polynomial time, the corresponding problem on networks is \NP-hard and difficult to approximate.
	We undertake a systematic parameterized complexity study of the \MAPPDlong (\MAPPD) problem.
	We establish W[2]-hardness when parameterized by the number of species that are included in a solution, and W[1]-hardness for the number of species that are excluded.
	On the positive side, we show that the problem is fixed-parameter tractable with respect to the threshold of diversity and the acceptable loss of diversity.
	We further analyze how the network's proximity to a tree influences algorithmic behavior and present single-exponential fixed-parameter algorithms when parameterized by the number of reticulations and by the treewidth of the underlying graph.
	Finally, we present a polynomial kernelization for \MAPPD with respect to the number of reticulation edges.
\end{abstract}


\section{Introduction}
Phylogenetic diversity, first introduced in 1992 by Faith~\cite{FAITH1992}, is a well-established measure of the amount of biodiversity in a set of species. 
It formalizes the intuitive notion that a set of species is likely to have a greater range of biological features when they are distantly related. 
Such a measure is of crucial importance in the field of biological conservation, where there are often insufficient resources available to save every threatened species, one must make hard decisions about which species to prioritize.
Phylogenetic diversity 
forms the basis of the Fair Proportion Index and the Shapley Value~\cite{haake2008shapley,Hartmann2013TheEO,redding2006incorporating}, which are used to evaluate the individual contribution of individual species to overall biodiversity. These measures are used by conservation initiatives such as the IUCN's  Phylogenetic Diversity Task Force (\url{https://www.pdtf .org/})
and the Zoological Society of London's EDGE of Existence program~\cite{LondonEDGE}.

The subject of phylogenetic diversity is the phylogenetic tree; that is, a rooted tree, with weights on the edges, whose leaves represent a subset of present-day species.
The \emph{phylogenetic diversity}~$\PD(S)$ of a set of species~$S$ is the total weight of edge on a (directed) path from the root to one of the leaves in~$S$.
Here the weight of an edge corresponds to phylogenetic distance, which is taken to be proportional to the number of features of interest (e.g. biological characteristics) that emerge along that edge.

Phylogenetic diversity as originally proposed by Faith is defined for phylogenetic trees.
Consequently, it does not allow for models of evolutionary history with reticulation events (where a species inherits genetic data from two or more species), such as hybridization or lateral gene transfer.
Such events are modeled in phylogenetic \emph{networks} (directed acyclic graphs with a single source), which extend the class of phylogenetic trees~\cite{phylogeneticNetworks}.
There are a number of ways to extend phylogenetic diversity to phylogenetic networks~\cite{bordewich,AVGTree,WickeFischer2018}.
In this paper, we consider one of the simplest, \emph{all-paths phylogenetic diversity} (first introduced under the name `phylogenetic subnet diversity' in~\cite{WickeFischer2018} and further studied in~\cite{bordewich}).
Under this measure, given a rooted phylogenetic network~$\Net$ with edge weights, the phylogenetic diversity of~$S$ is again the total weight of all edges on a (directed) path from the root to one of the leaves in~$S$.


Assuming it is not possible to preserve all threatened species (e.g. due to limited resources), we would like to find a subset of species that can be preserved, for which the overall diversity is maximized.
This gives rise to the maximum all-paths phylogenetic diversity problem (\MAPPD): given a network~$\Net$ and integer $k$, find a set of at most $k$ species that maximizes the all-paths phylogenetic diversity score.
Fortunately, in the case of trees, this turns out to be a tractable problem---given as input a phylogenetic tree and number $k$, there is a polynomial-time greedy algorithm that outputs the set of $k$ species with maximum phylogenetic diversity~\cite{Pardi2005,steel}.
Unfortunately, this result does not extend to phylogenetic networks---\MAPPD is \NP-hard, and cannot be approximated in polynomial time with approximation ratio better than $1-\frac{1}{e}$ unless $\texttt{P} = \NP$~\cite{bordewich}.
For this reason, we study the problem from the perspective of parameterized complexity.


%
\paragraph*{Related work}
All-paths phylogenetic diversity as a measure on networks was first introduced in~\cite{WickeFischer2018}.
The computational complexity of \MAPPD was first studied in~\cite{bordewich}, where the authors showed that the problem is NP-hard and cannot be approximated in polynomial time with approximation ratio better than $1-\frac{1}{e}$ unless $\texttt{P} = \NP$, but is polynomial-time solvable on the class of level-1 networks (in which the undirected cycles are pairwise vertex-disjoint).

Phylogenetic diversity forms the basis of the Shapley Value, a measure that describes how much a \emph{single} species contributes to overall biodiversity. The definition of the Shapley Value involves the phylogenetic diversity of every possible subset of species, and so is difficult to calculate directly.
However, it was shown in~\cite{fuchs2015equality} that (on phylogenetic trees) the Shapley Value is equivalent to the Fair Proportion Index~\cite{redding2006incorporating}, which can be calculated in polynomial time.
In the case of phylogenetic networks, it was shown that this result also extends to Shapley Value based on all-paths phylogenetic diversity.
This is in contrast to the NP-hardness result of~\cite{bordewich}---while it is easy to determine the individual species that contributes the most phylogenetic diversity across all sets of species, it is hard to find a \emph{set} of species for which the phylogenetic diversity is maximal.

The phylogenetic networks considered in this paper are \emph{explicit} networks, in which each vertex represents a different species in evolutionary history and the edges represent the transfer of genetic information from one species to another. 
Phylogenetic diversity has also been studied on \emph{split} networks. Such networks do not represent a single explicit evolutionary history, but can represent structural information from several sources (e.g. conflicting phylogenetic trees)---see e.g.~\cite{Chernomor2016,SNM2009,Volkmann2014}.

\paragraph*{Reactions to the previous version of this work}	
The initial publication of this paper in 2023 marked the beginning of a deeper study of phylogenetic diversity on networks.
More realistic models of phylogenetic diversity such as Network-PD, in which each reticulation edge has an inheritance probability~\cite{bordewich,MaxNPD}, and Average-Tree-PD, in which the average PD of all displayed trees is considered~\cite{AVGTree}, have been introduced and analyzed.
As these problem turn out to be computationally even harder,
\MAPPD seems to be a good, tractable core model when analyzing phylogenetic diversity on networks.
Following up on our algorithms, two version of \MAPPD on semi-directed phylogenetic networks have been presented in the software-package \texttt{PaNDA}~\cite{holtgrefe2026tractable,PaNDA}.
Further to our study, \MAPPD has also been analyzed when ecological dependencies are taken into account~\cite{MAPPDD}.
(For more information on phylogenetic diversity with ecological dependencies, see~\cite{faller,1PDDkernel,PDD,moulton,WeightedFW,twvssw}.)

\paragraph*{Our contribution}
We conduct a comprehensive parameterized analysis of the problem \MAPPDlong (\MAPPD), in which the task is to find a set of at most $k$ leaves maximizing the all-paths phylogenetic diversity in a network (see  Section~\ref{sec:prelim} for a formal definition).
Our results delineate a sharp complexity boundary.
We show in Section~\ref{sec:wpSC} that \MAPPD is $\Wh 2$-hard with respect to~$k$, the number of saved species, and \Wh1-hard with resect to $\kbar$, the number of excluded species, by establishing an equivalence between this parameterization of \MAPPD and a  generalization of \textsc{Set Cover} called \wpSClong. 
Both parameterizations are clearly \XP, as it is sufficient to check all sets of size~$k$ as a solution or sets of size~$\kbar$ as the complement of a solution.

On the positive side, we show in Section~\ref{sec:FPTdiversity} that \MAPPD is fixed-parameter tractable (\FPT) with respect to $D$, the total phylogenetic diversity of the desired solution, and also with respect to the `dual'~$\Dbar$, i.e. the acceptable loss in phylogenetic diversity.

Finally we turn to structural parameters. In Section~\ref{sec:FPTtreelike} we give single-exponential fixed-parameter algorithms for \MAPPD with respect to the number of reticulations in the network, and with respect to the treewidth of the underlying graph of the network, two parameters that are small in practice on natural networks that are relatively tree-like.
In the case of reticulations, this algorithm is asymptotically tight under the \texttt{Strong Exponential Time Hypothesis}.
We complete this work by showing in Section~\ref{sec:kernel} that \MAPPD admits a polynomial kernel when parameterized by the number of reticulation edges.

Together, these results provide a detailed parameterized complexity landscape for \MAPPD.
\Cref{tab:results} depicts a summary of the results presented in this paper.

\begin{table}[t]\centering
	\caption{An overview over the parameterized complexity results for \MAPPD.}
	
	\myrowcols
	\begin{tabular}{l ll}
		\hline
		Parameter & \multicolumn{2}{c}{\MAPPDlong} \\
		\hline
		Budget $k$ & \Wh{2}-hard, \XP & \cref{thm:k+maxw} \\
		Diversity $D$ & \FPT & Cor.~\ref{thm:D} \\
		Species-loss $\kbar$ & \Wh{1}-hard, \XP & \cref{thm:kbar} \\
		Diversity-loss $\Dbar$ & \FPT & \cref{thm:Dbar} \\
		\hline
		Number of & \FPT & \cref{thm:ret} \\
		\rowcolor{white}
		reticulations~\vret & poly-kernel \emph{open} &  \\
		\rowcolor{gray!10!white}
		Number of & \FPT & \cref{thm:ret} \\
		ret.-edges~\eret & poly-kernel & \cref{thm:kernel} \\
		Treewidth & \FPT & \cref{thm:tw} \\
		\hline
	\end{tabular}
	\label{tab:results}
\end{table}

\section{Preliminaries}\label{sec:prelim}
For an integer~$\ell$, by~$[\ell]$ we denote the set~$\{1,\dots,\ell\}$ and $[\ell]_0:=\{0\}\cup [\ell]$.

In our running time analyses, we assume a unit-cost RAM model where arithmetic addition of integers~$N$ and~$M$ can be done in~$\Oh(\log N + \log M)$~time.
We use~$\Oh^*$ to denote running time complexity ignoring polynomial factors.

\paragraph*{Phylogenetic Networks and Phylogenetic Diversity}
A  \emph{phylogenetic $X$-network~$\Net=(V,E,\w)$} is a directed, acyclic graph~$G = (V,E)$ with \emph{edge-weight} function~$\w: E\to \mathbb{N}_{>0}$ and a single vertex of in-degree $0$ (the \emph{root}), in which the vertices of out-degree $0$ (the \emph{leaves}) have in-degree $1$ and are bijectively labeled with elements from a set~$X$, and such that all vertices have in-degree at most $1$ or out-degree at most $1$.
For brevity, we usually refer to a phylogenetic $X$-network as an \emph{$X$-network}, or more simply a \emph{network} when the set $X$ is not relevant.
The vertices with in-degree at least $2$ and out-degree $1$ are called \emph{reticulations}; the other non-leaf vertices are called \emph{tree vertices}.
Any edge incoming at a reticulation is a \emph{reticulation edge}.

In biological applications, the set $X$ is a set of \emph{taxa} (species), the internal vertices of~$\Net$ correspond to biological ancestors of these taxa and~$\w(e)$ describes the phylogenetic distance between the endpoints of~$e$ (as these endpoints correspond to distinct species, we may assume this distance is greater than $0$).
Throughout the paper we use the conventions that $n:= |V|$, $m:= |E|$, and~$\rho$ is the root of the network.

For a vertex $v$, the set of~\emph{descendants $\desc(v)$ of~$v$} is the set of vertices $u$ for which there is a path from $v$ to $u$ (note that $v \in \desc(v)$).
We analogously define the set of~\emph{ancestors $\anc(v)$ of~$v$} as the set of vertices $u$ for which there is a path from $u$ to $v$.
The~\emph{offspring~$\off(v)$ of~$v$} is the intersection of $\desc(v)$ and $X$.
Further, for each edge $e=(v,w)$, we define $\anc(e)=\anc(v)$, $\desc(e)=\desc(w)$ and $\off(e)=\off(w)$.
For each set of taxa $Y$, an edge $e$ is \emph{affected by $Y$} if $\off(e)\cap Y \ne \emptyset$ and \emph{strictly affected by $Y$} if $\off(e)\subseteq Y$.
The sets $T_Y$ and $E_Y$ are the sets of edges strictly affected and affected  by $Y$, respectively.
For each set of taxa $Y$, the \emph{all-paths phylogenetic diversity $\PD(Y)$ of $Y$} is

$$
\PD(Y) := \sum_{e\in E_Y} \w(e).
$$

That is, $\PD(Y)$ is the total weight of all edges $(u,v)$ in $\Net$ such that there is a path from $v$ to a vertex in $Y$. 
We refer to $\PD(Y)$ simply as the \emph{phylogenetic diversity of $Y$}.

\paragraph*{Parameterized Complexity and Kernelizations}
For a detailed introduction to parameterized complexity refer to the standard monographs~\cite{cygan,downeybook}.
We only present a small introduction.

There is a natural bijection between decision problems and formal languages over some alphabet $\Sigma$.
A \emph{parameterized language} is subset $L \subseteq \Sigma^* \times \mathbb{N}_0$.
For each pair $(\sigma,k) \in L$, $k$ is called the \emph{parameter} of this pair.
A parameterized language is \emph{fixed-parameter tractable} (\FPT) if there exists an algorithm $\mathcal{A}$ that decides for any input $(\sigma,k) \in \Sigma^* \times \mathbb{N}_0$ whether $(\sigma,k) \in L$ in time $f(k) \cdot \poly(|\sigma|)$, for some $f(\cdot)$ computable function.
A parameterized language is \emph{slicewise polynomial} (\XP) if there exists an algorithm $\mathcal{A}$ that decides for any input $(\sigma,k) \in \Sigma^* \times \mathbb{N}_0$ whether $(\sigma,k) \in L$ in time $|\sigma|^{f(k)}$, for some computable function~$f(\cdot)$.
Every problem that is \FPT is also in \XP.
It is widely believed that $\Wh{h}$-hard problems, for~$h \ge 1$, are not in \FPT.

A \emph{kernel}---also called \emph{kernelization} or \emph{kernelization algorithm}---for a parameterized language $\Pi \subseteq \Sigma^* \times \mathbb{N}_0$ is an algorithm $\mathcal{A}$ which, given a pair $(\sigma,k) \in \Sigma^* \times \mathbb{N}_0$, constructs in~$\poly(|\sigma|+k)$~time another pair $(\tau,\ell) \in \Sigma^* \times \mathbb{N}_0$ such that $\ell \le f(k)$, and $(\sigma,k) \in \Pi$ if and only if $(\tau,\ell) \in \Pi$,
where $f(\cdot)$ is some computable function.
A kernel $\mathcal{A}$ is \emph{polynomial} if $f(\cdot)$ is a polynomial.

\paragraph*{Problem Definitions and Parameterizations}
Our main object of study is the following problem, introduced in~\cite{bordewich}:

\problemdef{\MAPPDlong (\MAPPD)}
{A phylogenetic $X$-network \Net and two integers $k$ and $D$.}
{Is there a subset $Y\subseteq X$ of taxa with size at most $k$ and phylogenetic diversity at least $D$? That is, $|Y|\le k$ and $\PD(Y)\ge D$.}

When $Y$ satisfies the conditions above, we say $Y$ is a \emph{solution} for \MAPPD on $(\Net,k,D)$.

In Section \ref{sec:wpSC}, we show that there is a strong connection between \MAPPD and the problem \wpSClong, which is defined as follows.
\problemdef{\wpSClong (\wpSC)}
{A universe $\mathcal{U}$, a family $\mathcal{F}$ of subsets over $\mathcal{U}$, an integer weight $\w(u)$ for each item $u\in\mathcal U$ and two integers $k$ and $D$.}
{Are there sets $F_1,\dots,F_k\in \mathcal F$ such that sum of the weights of the elements in $L := \bigcup_{i=1}^k F_i$ is at least $D$?
That is $\sum_{u\in L} \w(u) \ge D$.
}
When $F_{1},\dots,F_{\ell}$ satisfy the above condition, then $\{F_{1},\dots,F_{\ell}\}$ is a \emph{solution} for \wpSC on~$(\mathcal{U}, \mathcal{F},\w, k, D)$.

\textsc{Set Cover} is the special case of \wpSC with $D=|\mathcal{U}|$ and $\w(u)=1$ for each $u\in\mathcal U$.

%
In addition to the parameters $k$ and $D$, which are the number of saved taxa and the preserved phylogenetic diversity,
we also study the `dual' parameters which are the minimum number of species that will go extinct $\kbar := |X|-k$ and the acceptable loss of phylogenetic diversity $\Dbar := \PD(X) - D$.
By $\max_\w$, we denote the biggest weight of any edge.
By \vret, we denote the number of reticulations in \Net
and, by  \eret, we denote the number of reticulation-edges which need to be removed such that~\Net is a tree.
More formally, $\eret := \sum_{r \in R} (\deg^-(r) -1) = |E| - |V| + 1$, where~$R$ is the set of reticulations of~\Net and~$\deg^-(v)$ is the in-degree of a vertex~$v$.
Some authors refer to~\eret as the reticulation number, since ~$\eret = \vret$ on binary networks.
By \tw, we denote the treewidth of the underlying undirected graph of \Net.
We formally define treewidth in \Cref{def:tw}.
For further insight into treewidth, we refer to~\cite[Chapter 7]{cygan}.

A phylogenetic $X$-network is called \emph{binary} if the root has a degree of~2 and each other non-leaf has a degree of~3.
Throughout this work, we do \emph{not} assume the network to be binary.
In a previous version of this paper~\cite{MAPPD}, we mistakenly stated that it is possible to transform phylogenetic networks in polynomial time to binary phylogenetic networks without increasing the treewidth.
The construction was false, but the remaining algorithms are correct.

\section{Relationship to \wpSClong}
\label{sec:wpSC}
In this section, we demonstrate a relationship between \MAPPD and \wpSC by presenting reductions in both directions.
Bordewich et al. already proved a similar reduction from \SC to \MAPPD \cite{bordewich}.
\begin{lemma}
	\label{lem:PSC->MAPPD}
	For every instance $\Instance = (\mathcal{U},\mathcal{F},\w,k,D)$ of \wpSC, 
	\begin{enumerate}
		\item\label{it:P->Mweighted} an equivalent instance~$\Instance' =(\Net,k',D')$ of \MAPPD with $k'=k$ and $|X|=\vret=|\mathcal F|$, can be computed in time polynomial in $|\mathcal U|+|\mathcal F|$;
		\item\label{it:P->Munweighted} an equivalent instance~$\Instance_2' = (\Net = (V,E,\w'),k',D')$ of \MAPPD in which $k'=k$ and each edge has weight 1, can be computed in time polynomial in $|\mathcal U|+|\mathcal F|+\max_\w$.
	\end{enumerate}

\end{lemma}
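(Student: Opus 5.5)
Both parts use the standard set-cover-to-network gadget: sets become leaves, items become edges above them, and an item edge counts exactly when some chosen set contains the item.

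\textbf{Construction for part~\ref{it:P->Mweighted}.} Starting from $\Instance=(\mathcal U,\mathcal F,\w,k,D)$, we may first discard items of non-positive weight, since they never help reach $D$ and edge weights must be positive; if $D\le 0$ we output a trivial yes-instance. For every item $u\in\mathcal U$ we create a vertex $v_u$ and an edge $(\rho,v_u)$ of weight $\w(u)$. For every set $F\in\mathcal F$ we create a reticulation $r_F$ with in-edges $(v_u,r_F)$ for all $u\in F$, and a pendant leaf $x_F$ with edge $(r_F,x_F)$. If $F$ is empty we attach $x_F$ directly to $\rho$. To make the graph a valid network:
\begin{itemize}
\item if $v_u$ has out-degree~$0$ (the item lies in no set), we delete $u$, since it can never be covered;
\item if a reticulation would have in-degree~$1$ (i.e.\ $|F|=1$), we subdivide or accept it as a degree-2 vertex; alternatively we add a dummy parent edge from $\rho$, which does not change the argument below.
\end{itemize}
The reticulation and leaf edges get weight~$1$ each, or some tiny uniform amount. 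We then set $k'=k$ and $D'=D+c$, where $c$ accounts for these extra edges.

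\textbf{Main obstacle: the extra edges.} Edge weights must be positive integers, so the cleanest choice is to scale. We multiply every item weight by $M:=2|\mathcal F|+1$ and give every other edge weight~$1$. A set $Y$ of $j\le k$ leaves then has diversity $M\cdot\w(\text{covered items})+c_Y$, where $0\le c_Y\le 2|\mathcal F|<M$ (with dummy edges counted too). Setting $D'=MD$ gives the equivalence:
\begin{itemize}
\item if $\PD(Y)\ge MD$, then the covered item weight is at least $D - c_Y/M > D-1$, and by integrality it is at least $D$;
\item conversely, $k$ sets covering weight $D$ give the corresponding leaves diversity at least $MD$.
\end{itemize}
Padding to exactly $k$ sets or at most $k$ leaves is harmless. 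The network has $|X|=|\mathcal F|$ leaves and one reticulation per set, so $\vret=|\mathcal F|$ (sets of size $\le 1$ get the dummy edge so every $r_F$ is a reticulation). The construction is clearly polynomial in $|\mathcal U|+|\mathcal F|$; the scaled numbers have polynomially many bits.

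\textbf{Construction for part~\ref{it:P->Munweighted}.} We replace each edge $(\rho,v_u)$ of weight $M\w(u)$ by a directed path of that many unit-weight edges; all other edges already have weight~$1$. Diversity is preserved exactly, because a path is affected iff its last edge is. The size is polynomial in $|\mathcal U|+|\mathcal F|+\max_\w$, since $M=\Oh(|\mathcal F|)$. Alternatively, we can avoid the scaling factor in this part by subdividing only the item edges and making the other edges part of the constant. The hardest step is the bookkeeping of the non-item edges, which the scaling argument above handles.
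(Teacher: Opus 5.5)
Your gadget for part~1 is the same as the paper's: item vertices below $\rho$ with scaled weights, and one reticulation with a pendant leaf per set. However, the step you single out as the crux is wrong as written.

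The bound $0\le c_Y\le 2|\mathcal F|$ forgets the reticulation in-edges $(v_u,r_F)$. A set $F$ contributes $|F|$ of them, not one, and all of them are affected as soon as $x_F\in Y$. Hence $c_Y=\sum_{F:\,x_F\in Y}(|F|+1+d_F)$, where $d_F\in\{0,1\}$ records the dummy edge. This can be of order $k\cdot|\mathcal U|$ and need not be below $M=2|\mathcal F|+1$.

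As a result, the reduction with $M=2|\mathcal F|+1$ and $D'=MD$ is genuinely incorrect. Take unit-weight items $u_0,\dots,u_4$, $\mathcal F=\{\{u_1,\dots,u_4\},\{u_0\}\}$, $k=1$ and $D=5$. This is a no-instance, since a single set covers weight at most $4$. But $M=5$, $D'=25$, and the leaf of $\{u_1,\dots,u_4\}$ alone has diversity $4\cdot 5+4+1=25\ge D'$. For the same reason, your side remark about avoiding the scaling by treating the non-item edges as a constant does not work: their contribution depends on which sets are chosen.

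The repair is to take $M$ larger than the total weight of all non-item edges, e.g.\ $M:=\sum_{F\in\mathcal F}|F|+2|\mathcal F|+1$. The paper's $Q=t(s+1)$ together with $D'=DQ+1$ plays exactly this role. Then $c_Y<M$ for every $Y$, your integrality argument goes through, and $M$ is still polynomial in $|\mathcal U|+|\mathcal F|$, so part~2 keeps its size bound.

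With this correction, your part~2 is a legitimate simplification of the paper's. The paper subdivides each item edge and hangs a ``false leaf'' on every subdivision vertex. That forces an exchange argument turning false leaves into true ones, which implicitly needs a true leaf outside $Y$, i.e.\ $k<|\mathcal F|$. Plain subdivision instead leaves $X$ unchanged and preserves $\PD(Y)$ for every $Y$. It is admissible because the paper's definition allows vertices of in- and out-degree~$1$, and its own construction can already create them.

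Your treatment of non-positive item weights, and of singleton sets via a dummy edge from $\rho$, is also more careful than the paper's.
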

This lemma has several implications for the complexity of \MAPPD.
Because \SC is \Wh 2-hard with respect to the size of the solution $k$, \wpSC and consequently \MAPPD are as well. 
\begin{theorem}
	\label{thm:k+maxw}
	\MAPPD is $\Wh 2$-hard when parameterized with $k$, even if $\max_\w=1$.
\end{theorem}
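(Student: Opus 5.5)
The plan is a parameterized reduction from \SC, which is \Wh{2}-hard when parameterized by the solution size~$k$. We go through \wpSC and then apply \Cref{lem:PSC->MAPPD}, part~\ref{it:P->Munweighted}.

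First, every \SC instance $(\mathcal U,\mathcal F,k)$ is turned into the \wpSC instance $(\mathcal U,\mathcal F,\w,k,D)$ with $\w(u)=1$ for every $u\in\mathcal U$ and $D=|\mathcal U|$. As noted after the definition of \wpSC, this is exactly the special case that coincides with \SC. So the two instances are trivially equivalent, the parameter is unchanged, and $\max_\w=1$.

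Second, we apply part~\ref{it:P->Munweighted} of \Cref{lem:PSC->MAPPD} to this instance. It produces an equivalent \MAPPD instance $(\Net,k',D')$ with $k'=k$ in which every edge has weight~$1$, so $\max_\w=1$ in the output network as required. The construction runs in time polynomial in $|\mathcal U|+|\mathcal F|+\max_\w = |\mathcal U|+|\mathcal F|+1$, which is polynomial in the input size.

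Composing the two steps gives a polynomial-time reduction that preserves the parameter exactly ($k'=k$). This is a parameterized reduction, so \Wh{2}-hardness of \SC with respect to~$k$ transfers to \MAPPD with respect to~$k$, even when restricted to unit edge weights.

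There is no real obstacle once \Cref{lem:PSC->MAPPD} is available; the substantive work is the lemma itself. The one point to check is that the running-time bound in part~\ref{it:P->Munweighted} depends on $\max_\w$, which could be exponential in the input size in general. Here the weights come from \SC and all equal~$1$, so this dependence is harmless.
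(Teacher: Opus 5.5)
Your proof is correct and matches the paper's argument: \SC is the unit-weight special case of \wpSC with $D=|\mathcal U|$, and part~\ref{it:P->Munweighted} of \Cref{lem:PSC->MAPPD} then gives, in polynomial time, an equivalent \MAPPD instance with unit edge weights and $k'=k$. Your remark that the dependence on $\max_\w$ in the running time is harmless here, because all weights come from \SC and equal~$1$, is exactly the point the paper leaves implicit.
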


In \rbnb an undirected bipartite graph $G$ with vertex bipartition $V=V_r \cup V_b$ and an integer $k$ are given.
The question is whether there is a set $S\subseteq V_r$ of size at least $k$ such that each vertex $v$ of $V_b$ has a neighbor in $V_r \setminus S$.
%
There is a standard reduction from \rbnb to \SC:
Let $V_b$ be the universe, for each vertex $v\in V_r$ add a set $F_v:=N(v)$ to $\mathcal F$ and finally set $k':=|V_r|-k$.
\rbnb is \Wh 1-hard when parameterized by the size of the solution~\cite{downey}.
Hence, \SC is \Wh 1-hard with respect to $|\mathcal F|-k$ and with \Cref{lem:PSC->MAPPD} we conclude as follows.
\begin{theorem}
	\label{thm:kbar}
	\MAPPD is $\Wh 1$-hard when parameterized with $\kbar=|X|-k$.
\end{theorem}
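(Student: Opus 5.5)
The plan is to chain the standard reduction from \rbnb to \SC with \Cref{lem:PSC->MAPPD}(\ref{it:P->Mweighted}), and to check that the dual parameter carries over exactly. Let $(G,k)$ be an instance of \rbnb with bipartition $V=V_r\cup V_b$. We build the \SC instance $(\mathcal U,\mathcal F,k')$ with $\mathcal U=V_b$, $\mathcal F=\{F_v:=N(v) \mid v\in V_r\}$ and $k'=|V_r|-k$.

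First we verify the equivalence. A set $S\subseteq V_r$ with $|S|\ge k$ is a solution exactly when every $u\in V_b$ has a neighbor in $V_r\setminus S$, that is, when $\{F_v \mid v\in V_r\setminus S\}$ covers $V_b$, and this family has at most $|V_r|-k$ sets. Conversely, a cover by at most $k'$ sets $F_v$ gives a set of vertices $C$ with $|C|\le k'$, and $S:=V_r\setminus C$ has size at least $k$. The sets $F_v$ may coincide for distinct $v$. We therefore keep $\mathcal F$ as an indexed family (a multiset indexed by $V_r$) so that $|\mathcal F|=|V_r|$. Then $|\mathcal F|-k'=k$. Viewing \SC as \wpSC with unit weights and $D=|\mathcal U|$, we obtain a \wpSC instance whose dual parameter $|\mathcal F|-k'$ equals the original parameter $k$.

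Next we apply \Cref{lem:PSC->MAPPD}(\ref{it:P->Mweighted}). It yields, in polynomial time, an equivalent \MAPPD instance $(\Net,k'',D')$ with $k''=k'$ and $|X|=|\mathcal F|$. Hence $\kbar=|X|-k''=|\mathcal F|-k'=k$. This is a parameterized reduction, so \Wh1-hardness of \rbnb with respect to the solution size~\cite{downey} transfers to \MAPPD parameterized by~$\kbar$.

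The main obstacle is ensuring that $|X|=|\mathcal F|$ holds exactly, with no loss when sets are duplicated or empty. I would first try to read the construction of \Cref{lem:PSC->MAPPD} as creating one leaf per member of the indexed family. If the lemma instead requires distinct sets, a fallback is to delete duplicate and empty sets and lower $k'$ by the number removed. This is safe because a cover never needs two copies of the same set, and a positive \rbnb instance remains positive after the corresponding adjustment. The dual parameter $|\mathcal F|-k'$ therefore stays at most~$k$.
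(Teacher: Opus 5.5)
Your proposal is correct and follows the paper's argument: the same standard reduction from \rbnb to \SC (viewed as \wpSC with unit weights and $D=|\mathcal U|$), composed with part~\ref{it:P->Mweighted} of \Cref{lem:PSC->MAPPD}, whose construction does create one leaf $x_j$ per indexed set $F_j$, so $\kbar=|\mathcal F|-k'=k$. One slip in your (unneeded) fallback: when you delete duplicate or empty sets you must keep $k'$ unchanged rather than lower it, since lowering it is unsound (e.g.\ the family $\{a\},\{a\},\{b\}$ with $k'=2$ would become a no-instance), whereas keeping $k'$ only decreases the dual parameter, which is harmless; this is also the clean way to remove empty sets (isolated red vertices), which would otherwise leave $w_j$ without a parent in the constructed network.
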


\MAPPD can be solved in $\Oh^*(2^{|X|})$ with a brute force algorithm that tries every possible subset of species as a solution.
In \Cref{thm:ret}, we prove that \MAPPD can be solved in~$\Oh^*(2^{\vret})$~time.
In order to prove that these algorithms can not be improved significantly, we apply the well-established \texttt{Strong Exponential Time Hypothesis} (\SETH) \cite{ Calabro2009, impagliazzo2001complexity}.

Unless \SETH fails, \SC can not be solved in $\Oh^*(2^{\epsilon \cdot |F|})$ time for any $\epsilon<1$ \cite{cyganSETH,lin}.
By \Cref{lem:PSC->MAPPD}, this implies \MAPPD can not be solved in time $\Oh^*(2^{\epsilon \cdot |X|})$ or $\Oh^*(2^{\epsilon \cdot \vret})$ for any $\epsilon  <1$.
Therefore, the $\Oh^*(2^{|X|})$ and $\Oh^*(2^{\vret})$ time algorithms described above are tight under \SETH.

\begin{corollary}
	\label{cor:X}
	\MAPPD can not be solved in $\Oh(2^{\epsilon\cdot |X|}) \cdot \poly(|\Instance|)$ time or in $\Oh(2^{\epsilon\cdot \vret}) \cdot \poly(|\Instance|)$ time for any $\epsilon < 1$, unless \SETH fails.
\end{corollary}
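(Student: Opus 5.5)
We argue by contraposition, using the reduction of \Cref{lem:PSC->MAPPD}(\ref{it:P->Mweighted}) as a fine-grained reduction. Suppose that for some $\epsilon<1$ there is an algorithm $\mathcal{A}$ solving \MAPPD in time $\Oh(2^{\epsilon\cdot |X|})\cdot\poly(|\Instance|)$, or in time $\Oh(2^{\epsilon\cdot \vret})\cdot\poly(|\Instance|)$. We use it to solve \SC in time $\Oh^*(2^{\epsilon\cdot|\mathcal F|})$, which contradicts \SETH by the lower bound of~\cite{cyganSETH,lin} quoted above.

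Let $(\mathcal U,\mathcal F,k)$ be an instance of \SC. We view it as the instance $(\mathcal U,\mathcal F,\w,k,D)$ of \wpSC with $\w\equiv 1$ and $D=|\mathcal U|$; as noted in the preliminaries, this is exactly \SC. We then apply \Cref{lem:PSC->MAPPD}(\ref{it:P->Mweighted}) and obtain an equivalent instance $(\Net,k,D')$ of \MAPPD. This takes time polynomial in $|\mathcal U|+|\mathcal F|$, and the resulting instance satisfies $|X|=\vret=|\mathcal F|$. Since the construction runs in polynomial time, the encoding size $|\Instance'|$ is polynomial in $|\mathcal U|+|\mathcal F|$. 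Here we need the weights to have polynomially bounded encoding, which holds because all item weights are $1$ and $D'$ is computed in polynomial time.

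Running $\mathcal{A}$ on $(\Net,k,D')$ takes time $\Oh(2^{\epsilon|\mathcal F|})\cdot\poly(|\mathcal U|+|\mathcal F|)$ under either assumed running time, since both exponents equal $\epsilon|\mathcal F|$. Because the two instances are equivalent, this decides the \SC instance within the forbidden time bound, so \SETH fails.

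The only point needing care is that the reduction preserves the exponent exactly. The parameter must become $|X|=|\mathcal F|$ with no additive or multiplicative overhead, and the instance size may grow only polynomially. Both properties are guaranteed by the statement of \Cref{lem:PSC->MAPPD}, so no further difficulty arises.
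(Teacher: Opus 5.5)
Your proposal is correct and follows the paper's argument: you treat \SC as the unit-weight special case of \wpSC, apply \Cref{lem:PSC->MAPPD}(\ref{it:P->Mweighted}) to obtain an equivalent polynomial-size \MAPPD instance with $|X|=\vret=|\mathcal F|$, and transfer the $\Oh^*(2^{\epsilon|\mathcal F|})$ lower bound for \SC under \SETH. Strictly, the construction only guarantees $\vret\le|\mathcal F|$ (a set of size one yields a non-reticulation $w_j$), but that is the direction you need, and otherwise you are just writing out the fine-grained bookkeeping the paper leaves implicit.
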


\noindent So now, without further ado, we prove \Cref{lem:PSC->MAPPD}.

\begin{proof}[Proof of \Cref{lem:PSC->MAPPD}]
	\proofparagraph{Reduction}
	Let $\Instance = (\mathcal{U},\mathcal{F},k,D)$ be an instance of \wpSC. Let $\mathcal{U}$ consist of the items $u_1,\dots,u_s$ and let $\mathcal{F}$ contain the sets $F_1,\dots,F_t$.
	We may assume that for each $u_i$ there is a set $F_j$ which contains $u_i$.
	We define an instance $\Instance' = (\Net,k,D')$ of \MAPPD as follows.
	Let $k' = k$ and define $D' := D\cdot Q + 1$ for $Q := t(s+1)$.
	We define a network~\Net with leaves $x_1,\dots,x_t$, and additional vertices $\rho,v_1,\dots,v_s,w_1,\dots,w_t$.

	Let the set of edges in~\Net consist of the edges $(\rho,v_{i})$ for $i\in [s]$, $(w_j,x_j)$ for $j\in [t]$, and let $(v_i,w_j)$ be an edge if and only if $u_i \in F_j$.
	We define the weight of $(\rho,v_i)$ to be $\w(u_i) \cdot Q$ for each $i\in [s]$ and 1 for each other edge.
	\Cref{fig:PSC->MAPPD} depicts an example of this reduction.
	\begin{figure}[t]
		\centering
		\begin{tikzpicture}[scale=0.9,every node/.style={scale=0.85}]
			\node (root) at (0,3) {$\rho$};
			\begin{scope}[name prefix = u]
				\node (a) at (-6,1.8) {$v_1$};
				\node (b) at (-3.6,1.8) {$v_2$};
				\node (c) at (-1.2,1.8) {$v_3$};
				\node (d) at (1.2,1.8) {$v_4$};
				\node (e) at (3.6,1.8) {$v_5$};
				\node (f) at (6,1.8) {$v_6$};
			\end{scope}
			\begin{scope}[name prefix = F]
				\node (a) at (-4.5,0) {$w_1$};
				\node (b) at (-2.25,0) {$w_2$};
				\node (c) at (0,0) {$w_3$};
				\node (d) at (2.25,0) {$w_4$};
				\node (e) at (4.5,0) {$w_5$};
			\end{scope}
			\begin{scope}[name prefix = x]
				\node (a) at (-4.5,-1) {$x_1$};
				\node (b) at (-2.25,-1) {$x_2$};
				\node (c) at (0,-1) {$x_3$};
				\node (d) at (2.25,-1) {$x_4$};
				\node (e) at (4.5,-1) {$x_5$};
			\end{scope}
			
			\draw[-{Stealth[length=6pt]}] (root) to node[above]{35} (ua);
			\draw[-{Stealth[length=6pt]}] (root) to node[below]{70} (ub);
			\draw[-{Stealth[length=6pt]}] (root) to node[left]{105} (uc);
			\draw[-{Stealth[length=6pt]}] (root) to node[right]{140} (ud);
			\draw[-{Stealth[length=6pt]}] (root) to node[below]{175} (ue);
			\draw[-{Stealth[length=6pt]}] (root) to node[above]{210} (uf);

			\draw[-{Stealth[length=6pt]}] (ub) to (Fa);
			\draw[-{Stealth[length=6pt]}] (uc) to (Fa);
			\draw[-{Stealth[length=6pt]}] (ud) to (Fa);
			\draw[-{Stealth[length=6pt]}] (ua) to (Fb);
			\draw[-{Stealth[length=6pt]}] (uf) to (Fb);
			\draw[-{Stealth[length=6pt]}] (ua) to (Fc);
			\draw[-{Stealth[length=6pt]}] (uc) to (Fc);
			\draw[-{Stealth[length=6pt]}] (ud) to (Fc);
			\draw[-{Stealth[length=6pt]}] (ub) to (Fd);
			\draw[-{Stealth[length=6pt]}] (ue) to (Fd);
			\draw[-{Stealth[length=6pt]}] (uf) to (Fd);
			\draw[-{Stealth[length=6pt]}] (ua) to (Fe);
			\draw[-{Stealth[length=6pt]}] (uc) to (Fe);
			\draw[-{Stealth[length=6pt]}] (ue) to (Fe);

			\draw[-{Stealth[length=6pt]}] (Fa) to (xa);
			\draw[-{Stealth[length=6pt]}] (Fb) to (xb);
			\draw[-{Stealth[length=6pt]}] (Fc) to (xc);
			\draw[-{Stealth[length=6pt]}] (Fd) to (xd);
			\draw[-{Stealth[length=6pt]}] (Fe) to (xe);
		\end{tikzpicture}
		\caption{This figure depicts the network \Net that we reduce to from the instance\lb $(\mathcal{U}:=\{u_1,\dots,u_6\},\mathcal{F}:=\{F_1,\dots,F_5\},\w,k,D)$ of \wpSC with $\w(u_i)=i$, $F_1:=\{u_2,u_3,u_4\}$, $F_2:=\{u_1,u_6\}$, $F_3:=\{u_1,u_3,u_4\}$, $F_4:=\{u_2,u_5,u_6\}$, $F_5:=\{u_1,u_3,u_5\}$.
			Unlabeled edges have a weight of 1.
			Here $s=6,t=5$ and $Q = 35$.
			The value of $k'$ would be $k$ and $D'$ would be $35D + 1$.}
		\label{fig:PSC->MAPPD}
	\end{figure}%
	
	This completes the construction of instance $\Instance'$ in Case~\ref{it:P->Mweighted} of the theorem.
	We now describe how to construct an instance $\Instance_2'$ from $\Instance'$ in which the weight of each edge is~1, completing the construction for Case~\ref{it:P->Munweighted}.
	For each edge $e=(\rho,v_i)$ with $w(e) > 1$, make $\w(e)-1$ subdivisions and attach a new leaf as the child of each subdividing vertex. We call these newly-added leaves \emph{false leaves}, and we call the other leaves of~\Net \emph{true leaves}.

	\proofparagraph{Correctness}
	The instance~$\Instance'$ is computed in time polynomial in $|\mathcal U|+|\mathcal F|$ and
	the instance~$\Instance_2'$ is computed in time polynomial in $|\mathcal U|+|\mathcal F|+\max_\w$.
	Clearly, in $\Instance'$ we observe $k'=k$ and $|X|=\vret=|\mathcal F|$ and
	in $\Instance_2'$ we observe $k'=k$ and $\max_{\omega'}=1$.
	It remains to show the equivalence of the instances.

	Without loss of generality, let $S:=\{F_{1},\dots,F_{\ell}\}$ with $\ell\le k$ be a solution for \wpSC on the instance \Instance that covers the items $u_1,\dots,u_p$.
	That is, we assume $S$ consists of the first~$\ell$ sets in $\mathcal F$ and they cover the first $p$ elements in $\mathcal U$.
	We show that $Y:=\{x_{1},\dots,x_{\ell}\}$ is a solution for \MAPPD on the instances $\Instance'$ and $\Instance_2'$.
	Clearly, the size of $Y$ is at most $k'$.
	Now consider the phylogenetic diversity of $Y$ in the network of $\Instance'$.
	Let $\hat E$ be the edges in \Net between two vertices of $v_1,\dots,v_p,w_1,\dots,w_\ell$.
	Then
	\begin{eqnarray*}
		\PD(Y) &=&  \sum_{i=1}^\ell \w'((w_i,x_i)) + \sum_{e\in \hat E} \w'(e) + \sum_{i=1}^p \w'((\rho,v_i))
		\ge \w'((w_1,x_1)) + \sum_{i=1}^p \w'((\rho,v_i))\\
		&=& 1 + \sum_{i=1}^p \w(u_i)\cdot Q
		= 1 + Q \cdot \sum_{i=1}^p \w(u_i)
		\ge 1 + QD = D'.
	\end{eqnarray*}
	Here, the first equality holds because the sets in $S$ cover the items $u_1,\dots,u_p$ and the last inequality holds because $S$ is a solution.
	It is easy to see that the phylogenetic diversity of the set $Y$ in the network of $\Instance_2'$ is identical.
	Hence, $Y$ is a solution for \MAPPD on $\Instance'$ and $\Instance_2'$.
	
	We now show that the existence of a solution for \MAPPD on $\Instance'$ or $\Instance_2'$ implies the existence of a solution for \wpSC on $\Instance$.
	Without loss of generality, let $Y:=\{x_1,\dots,x_\ell\}$ with $\ell\le k$ be a solution for \MAPPD on instance $\Instance'$.
	We show that $S:=\{F_1,\dots,F_\ell\}$ is a solution for \wpSC on instance~\Instance.
	Clearly, the size of $S$ is at most $k$.
	Without loss of generality, let $v_1,\dots,v_p$ be the ancestors of $Y$ that are children of $\rho$.
	Thus, there is a path from $v_i$ to a taxon $x_j\in Y$, for each $i\in [p]$.
	By the construction of $\Instance'$, we conclude~$u_i \in F_j$.
	Again, let $\hat E$ be the edges between two vertices of $v_1,\dots,v_p,w_1,\dots,w_\ell$. We have $D' \le \PD(Y) = \sum_{i=1}^\ell \w'((w_i,x_i)) + \sum_{e\in \hat E} \w'(e) + \sum_{i=1}^p \w'((\rho,v_i)) \le t + st + \sum_{i=1}^p \w'((\rho,v_i))$.
	Consequently, $\sum_{i=1}^p Q\cdot \w(u_i) = \sum_{i=1}^p \w'((\rho,v_i)) \ge D' - Q = Q(D-1) + 1$.
	We conclude that $\sum_{i=1}^p \w(u_i) \ge D-1 + 1/Q$.
	Since the weights of $u_i$ are integers, it follows that $\sum_{i=1}^p \w(u_i) \ge D$.

	It remains to show that for each solution $Y$ for \MAPPD on instance $\Instance_2'$, an equivalent solution for \MAPPD on instance $\Instance'$ exists.
	If $Y$ does not contain false leaves, then 
	as previously observed, the phylogenetic diversity of $Y$ is the same in $\Instance'$ as in $\Instance_2'$.
	%
	%
	Assume now otherwise and let $z$ be a false leaf in $Y$ and let $p_z$ be the parent of $z$.
	We consider two different cases, $p_z$ has an offspring in $Y\setminus\{z\}$, or not.
	In the former case, we observe $\PD(Y) = \PD(Y\setminus\{z\}) + \w((p_z,z)) = \PD(Y\setminus\{z\}) + 1$.
	Consequently, we can replace~$z$ with any true leaf that is not yet in $Y$ to receive another solution for \MAPPD on $\Instance_2'$.
	In the second case, let the true leaf $x_i$ be an offspring of $p_z$.
	Because of the assumption, $x_i\not\in Y$.
	Then, $\PD(Y) - \w((p_z,z)) \le \PD((Y\setminus \{z\}) \cup \{x_i\}) - \w((w_i,x_i))$.
	Consequently, $Y\setminus \{z\} \cup \{x_i\}$ is also a solution.
	Therefore, we can replace all false leaves and we are done.
\end{proof}

Note that the network constructed in the proof of \Cref{lem:PSC->MAPPD} has one layer of tree vertices and one layer of reticulations.
It might appear that by adding more layers of reticulations and tree vertices to the construction of $\Net$, one could reduce from problems even more complex than \wpSC, and thereby show that \MAPPD has an even higher position in the \texttt{W}-hierarchy.
%
This however is unlikely, because of the reduction to \wpSC that we are about to show.

\begin{lemma} \label{lem:MAPPD->PSC}
	For every instance $\Instance=(\Net,k,D)$ of \MAPPD, we can compute an equivalent instance~$(\mathcal U,\mathcal F,\w,k',D')$ of \wpSC with $k'=k$, $D'=D$ and $\max_{\w'}=\max_{\w}$ in time polynomial in $|\Instance|$.
\end{lemma}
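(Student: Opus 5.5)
The reduction is direct: edges become items and taxa become sets. Given $\Instance=(\Net,k,D)$ with $\Net=(V,E,\w)$ an $X$-network, set the universe to $\mathcal U := E$, with item weight $\w'(e) := \w(e)$ for each $e\in E$. For each taxon $x\in X$, define the set
$$F_x := \{ e\in E : x\in \off(e)\},$$
which is the set of edges lying on some root-to-$x$ path, that is, the edges affected by $\{x\}$. Let $\mathcal F := \{F_x : x\in X\}$, and take $k' := k$ and $D' := D$. The weights are unchanged, so $\max_{\w'}=\max_\w$. Each $F_x$ is found by a reverse reachability search from $x$, giving $\Oh(|X|\cdot(n+m))$ time overall, which is polynomial in $|\Instance|$.

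The central identity is that $E_Y = \bigcup_{x\in Y} F_x$ for every $Y\subseteq X$. This holds because an edge $e$ is affected by $Y$ exactly when $\off(e)\cap Y\neq\emptyset$, which means $x\in\off(e)$ for some $x\in Y$. It follows that
$$\PD(Y)=\sum_{e\in E_Y}\w(e)=\sum_{u\in\bigcup_{x\in Y}F_x}\w'(u).$$

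With this identity, equivalence is immediate in both directions.
\begin{itemize}
\item If $Y$ is a solution of \MAPPD with $|Y|\le k$, then $\{F_x: x\in Y\}$ consists of at most $k$ sets whose union has weight $\PD(Y)\ge D$. If exactly $k$ sets are required, pad by repeating a set; the union is unchanged.
\item Conversely, take sets $F_{x_1},\dots,F_{x_k}$, possibly with repetitions, whose union has weight at least $D$. Then $Y:=\{x_1,\dots,x_k\}$ satisfies $|Y|\le k$ and $\PD(Y)\ge D$.
\end{itemize}

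The one point that needs care is that distinct taxa might give identical sets $F_x$. This is harmless: treat $\mathcal F$ as indexed by $X$ (a multiset), or keep a single representative, since duplicates never help cover more weight.
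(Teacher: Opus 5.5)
Your proposal is correct and is essentially the paper's proof. Each edge becomes an item of the same weight, each taxon $x$ becomes the set $F_x$ of edges affected by $\{x\}$, and the identity $E_Y=\bigcup_{x\in Y}F_x$ gives both directions with $k'=k$ and $D'=D$; your concern about duplicate sets is harmless but moot, since the incoming edge of the leaf $x$ has offspring exactly $\{x\}$ and so lies only in $F_x$.
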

\begin{proof}
	\proofparagraph{Reduction}
	Let $\Instance = (\Net,k,D)$ be an instance of \MAPPD.
	We define an instance $\Instance' = (\mathcal{U},\mathcal{F},\w',k',D')$ of \wpSC as follows.
	Let $k' := k$ and $D':= D$.
	For each edge $e$ of \Net, define an item $u_e$ with weight $\w'(u_e)=\w(e)$ and let $\mathcal{U}$ be the set of these $u_e$.
	For each taxon~$x$, define a set $F_x$ which contains item $u_e$ if and only if $e$ is affected by $\{x\}$.
	Let $\mathcal{F}$ be the family of these $F_x$.

	\proofparagraph{Correctness}
	Clearly, the reduction is computed in polynomial time.
	We show the equivalence of the two instances.
	
	Let $Y$ be a solution for \MAPPD on instance \Instance.
	Without loss of generality, assume $Y=\{x_1,\dots,x_\ell\}$ with $\ell\le k$.
	We show that $F_1,\dots,F_\ell$ is a solution for \wpSC on $\Instance'$.
	By definition, $\ell\le k'$.
	Let $E_Y$ be the edges affected by $Y$.
	Observe that $e$ is in $E_Y$ if and only if $u_e$ is in $F^+ := \bigcup_{i=1}^\ell F_i$.
	Then, $D \le \PD(Y) = \sum_{e\in E_Y} \w(e) = \sum_{u_e\in F^+} \w'(u_e)$.
	Hence, $F_1,\dots,F_\ell$ is a solution for \wpSC on $\Instance'$.

	Now, without loss of generality, let~$F_1,\dots,F_\ell$ be a solution for \wpSC on $\Instance'$.
	Let $u_{e_1},\dots,u_{e_p}$ be the items in the union of $F_1,\dots,F_\ell$.
	By the construction, the edges $e_1,\dots,e_p$ are affected by $Y=\{x_1,\dots,x_\ell\}$.
	Then, $\PD(Y)\ge \sum_{i=1}^p \w(e_i) = \sum_{i=1}^p \w'(u_{e_i}) \ge D$.
	Because the size of $Y$ is at most $k'$, $Y$ is a solution for \MAPPD on $\Instance$.
\end{proof}

To the best of our knowledge, it is unknown if \wpSC is $\Wh 2$-complete, like \textsc{Set Cover}.
Nevertheless, we obtain the following connection between \wpSC and \MAPPD.

\begin{corollary}
	\label{cor:k}
	\MAPPD is $\Wh t$-complete with respect to $k$ if and only if \wpSC is $\Wh t$-complete with respect to $k$.
\end{corollary}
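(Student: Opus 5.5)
The plan is to combine \Cref{lem:PSC->MAPPD} (Item~\ref{it:P->Mweighted}) and \Cref{lem:MAPPD->PSC} into a pair of parameterized reductions with respect to $k$. Both lemmas give polynomial-time transformations to equivalent instances with $k'=k$, so each is a parameterized (indeed polynomial-parameter) reduction between \MAPPD and \wpSC, both parameterized by $k$. Membership in and hardness for $\Wh t$ are closed under such reductions: membership transfers backwards along a reduction, and hardness transfers forwards.

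For the direction ``\wpSC is $\Wh t$-complete $\Rightarrow$ \MAPPD is $\Wh t$-complete'' I argue membership and hardness separately. For membership, I compose the reduction of \Cref{lem:MAPPD->PSC} from \MAPPD to \wpSC with the assumed $\Wh t$ membership of \wpSC. For hardness, I start from any problem in $\Wh t$, reduce it to \wpSC by the assumed hardness, and then apply \Cref{lem:PSC->MAPPD} to reach \MAPPD. Compositions of parameterized reductions are again parameterized reductions. The reverse direction is symmetric: membership of \wpSC uses \Cref{lem:PSC->MAPPD} followed by membership of \MAPPD, and hardness of \wpSC uses hardness of \MAPPD followed by \Cref{lem:MAPPD->PSC}.

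The main point to check is that the lemmas' reductions are genuinely parameterized reductions.
\begin{itemize}
\item The output parameter must be bounded by a function of the input parameter. This holds since $k'=k$.
\item The running time must be FPT. Both are polynomial in the input size: \Cref{lem:MAPPD->PSC} is stated as polynomial in $|\Instance|$, and Item~\ref{it:P->Mweighted} of \Cref{lem:PSC->MAPPD} is polynomial in $|\mathcal U|+|\mathcal F|$.
\end{itemize}
One subtlety is that the weights in \Cref{lem:PSC->MAPPD} are multiplied by $Q=t(s+1)$. This increases their bit-length only by $\Oh(\log(st))$, so the output size remains polynomial in the input size, including the encoding of $\w$. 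I do not use Item~\ref{it:P->Munweighted} here, since its running time depends on $\max_\w$.
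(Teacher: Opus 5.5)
Your proposal is correct and matches the paper's reasoning. The paper states the corollary without a separate proof, as a direct consequence of \Cref{lem:PSC->MAPPD} (Item~\ref{it:P->Mweighted}) and \Cref{lem:MAPPD->PSC}: these are polynomial-time reductions with $k'=k$ in both directions, so membership in and hardness for $\Wh t$ transfer as you describe. Two details in your proposal are worth keeping, since the paper leaves them implicit: you avoid Item~\ref{it:P->Munweighted}, whose running time depends on $\max_\w$, and you note that the factor $Q$ increases the bit-length of the weights only by $\Oh(\log(st))$.
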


\section{Fixed-Parameter Tractability Results}
\label{sec:FPT}

\subsection{Preserved and lost Diversity}
\label{sec:FPTdiversity}
In this subsection, we show that \MAPPD is \FPT with respect to the threshold of phylogenetic diversity $D$ and the acceptable loss of phylogenetic diversity $\Dbar := \PD(X)-D$.

Let \Instance be an instance of \MAPPD. If there is an edge $e$ with $\w(e)\ge D$ and $k\ge 1$,  then for each offspring $x$ of $e$ we have $\PD(\{x\})\ge\w(e)\ge D$, and so $\{x\}$ is a solution for \MAPPD on \Instance.
So, we may assume that $\max_\w < D$.
Therefore, each edge $e$ can be subdivided $\w(e)-1$ times in $\Oh(D\cdot m)$ time such that $\w'(e)=1$ for each edge $e$ of the new  network $\Net'$.
Bläser showed that \wpSC can be solved in $\Oh^*(2^{\Oh(D)})$ time when $\w(u)=1$ for each item $u\in\mathcal{U}$~\cite[Theorem 2]{blaeser}.
Consequently, with \Cref{lem:MAPPD->PSC} and the result from Bläser we conclude the following.
\begin{theorem}
	\label{thm:D}
	\MAPPD can be solved in $\Oh^*(2^{\Oh(D)})$ time.
\end{theorem}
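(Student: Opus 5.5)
The plan is to combine the preprocessing described just before the statement with \Cref{lem:MAPPD->PSC} and Bläser's algorithm for unit-weight \wpSC.

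\emph{Trivial cases.} If $k=0$, we answer \yes exactly when $D\le 0$. If $k\ge 1$ and some edge $e$ has $\w(e)\ge D$, then any single taxon $x\in\off(e)$ satisfies $\PD(\{x\})\ge \w(e)\ge D$, and we answer \yes. Such a taxon exists, since every edge has a leaf below it. Otherwise $\max_\w<D$.

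\emph{Subdivision.} Next we subdivide every edge $e$ into $\w(e)$ edges of weight~$1$. This takes $\Oh(D\cdot m)$ time and yields a network $\Net'$ of size polynomial in $|\Instance|$ and $D$. We must check that $\PD$ is preserved for every set of taxa. A subdividing vertex has in- and out-degree one and is not a leaf, so $\Net'$ is still a phylogenetic $X$-network with the same leaf set. Moreover, each new edge on the subdivided path of $e$ has offspring exactly $\off(e)$. Hence all of these edges are affected by $Y$ precisely when $e$ is, and they contribute $\w(e)$ in total. Therefore $(\Net',k,D)$ is equivalent to $(\Net,k,D)$.

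\emph{Reduction and algorithm.} Applying \Cref{lem:MAPPD->PSC} to $(\Net',k,D)$ gives, in polynomial time, an equivalent \wpSC instance with $k'=k$, $D'=D$, and all item weights equal to $\max_{\w}=1$. Bläser's theorem~\cite[Theorem 2]{blaeser} solves unit-weight \wpSC in $\Oh^*(2^{\Oh(D)})$ time. Because the preprocessing only blows the instance up polynomially in $D$, those factors are absorbed into $2^{\Oh(D)}$.

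\emph{Main obstacle.} The only real care needed is to make sure Bläser's problem coincides with unit-weight \wpSC. Bläser's problem asks to cover at least $D$ elements using $k$ sets, and with unit weights this is exactly our question. We also need the running time to be polynomial in $|\mathcal U|=|E(\Net')|$, which is polynomial in $n$ and $D$. Both hold, so the total running time is $\Oh^*(2^{\Oh(D)})$.
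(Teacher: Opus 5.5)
Your proposal is correct and follows the paper's argument: answer \yes when $k\ge 1$ and some edge has weight at least $D$, otherwise subdivide every edge into unit-weight edges in $\Oh(D\cdot m)$ time, apply \Cref{lem:MAPPD->PSC}, and invoke Bl\"aser's $\Oh^*(2^{\Oh(D)})$ algorithm for unit-weight \wpSC. Your extra checks, namely the $k=0$ case and the verification that subdivision preserves $\PD$ for every set of taxa, are details the paper leaves implicit.
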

As \SC is a special case of \wpSC with $D = \sum_{u\in \mathcal{U}}\w(u)$, \wpSC is \NP-hard even if the dual $\sum_{u\in \mathcal{U}}\w(u) - D$ has a size of~0.
By contrast, we show in the following that \MAPPD is \FPT with respect to \Dbar.
More precisely, we show the following.

\begin{theorem}
	\label{thm:Dbar}
	\MAPPD can be solved in $\Oh(2^{\Dbar + \kbar + o(\Dbar)} \cdot m\log n)$~time.
\end{theorem}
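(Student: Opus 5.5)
We work with the complement. Let $Z := X\setminus Y$ be the set of discarded taxa, and assume w.l.o.g.\ that $|Y| = k$ exactly; enlarging $Y$ never decreases $\PD$, so $|Z| = \kbar$. An edge is lost exactly when it is strictly affected by $Z$, that is, $\PD(X) - \PD(Y) = \sum_{e\in T_Z}\w(e)$. The task therefore becomes: find a set $Z$ of exactly $\kbar$ taxa whose strictly affected edges have total weight at most $\Dbar$.

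The key observation is that every leaf edge $(p_x,x)$ with $x\in Z$ is strictly affected and has weight at least~$1$. Hence $\kbar \le \Dbar$ in every yes-instance, and if $\kbar > \Dbar$ we answer \no immediately. Moreover, the set $T_Z$ is a set of edges closed under the rule ``if all out-edges of $v$ are in $T_Z$, then all in-edges of $v$ are in $T_Z$'', and it has total weight at most $\Dbar$. So every connected part of $T_Z$ is small.

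We build $Z$ by a branching procedure in the style of enumerating connected sets of bounded size. Process taxa in a fixed order and keep a partial set $Z'$ together with its current lost weight $L(Z')=\sum_{e\in T_{Z'}}\w(e)$. The measure is the potential $(\Dbar - L) + (\kbar-|Z'|)$. At each step we take the smallest-index candidate $x$ and branch on \take/\leave.
\begin{itemize}
\item The \take branch adds $x$ to $Z'$. This decreases $\kbar-|Z'|$ by one and raises $L$ by at least $\w(p_x,x)\ge 1$.
\item The \leave branch must also make progress, and this is the delicate part. A naive \leave decreases nothing, so I would organize the search as the enumeration of ``connected'' lost edge sets, in the way one enumerates connected subgraphs of size $d$ containing a given vertex in $2^{d+o(d)}$ time.
\end{itemize}
Concretely, $T_Z$ decomposes into maximal upward-closed components. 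Each component is a connected subgraph whose lowest edges are leaf edges and whose total weight is $d_i$, with $\sum d_i\le\Dbar$. A component is determined by an ordered expansion: starting from a leaf edge, repeatedly decide for the next boundary parent vertex whether it becomes fully lost, which requires including all its other offspring leaves, or not. Each decision either adds weight (\take) or permanently closes a boundary vertex (\leave). The standard bound on the number of connected sets of size $d$ then gives $2^{d+o(d)}$-type counts. The components are chosen independently, and there are at most $\kbar$ of them. Distributing the leaves among components, namely which leaf starts a new component versus being absorbed, costs the extra $2^{\kbar}$ factor. Each branch node is processed in $\Oh(m\log n)$ time, which covers updating $T_{Z'}$ and weights with the logarithmic-cost arithmetic.

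The main obstacle is making this enumeration rigorous. Components can interact through reticulations: a reticulation becomes lost only when all its children's offspring are in $Z$, and then all its in-edges are lost, which may merge components. The hard step is also achieving exactly $2^{\Dbar+o(\Dbar)}$ rather than $4^{\Dbar}$. My first attempt would handle the interaction by processing vertices in reverse topological order, so that the status of a vertex (lost or not) is decided only after all its children are decided. I would bound the leaves of the branching tree by charging every \leave to a boundary vertex adjacent to weight already paid for. I would then invoke the Bollobás-type bound $\binom{d+b}{b}$ on connected sets with $d$ edges and boundary $b$, together with Stirling's approximation, to get the $2^{\Dbar+o(\Dbar)}$ factor. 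Should that bound fail to go through, I would settle for a cruder $2^{\Oh(\Dbar)}$ bound, which still gives the \FPT claim.
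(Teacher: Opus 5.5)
Your reformulation is sound and matches the paper's viewpoint. With $Z:=X\setminus Y$ and $|Y|=k$, the loss is the weight of $T_Z$, and $\kbar\le\Dbar$ because every leaf edge into $Z$ is lost. What follows, however, is a plan rather than an algorithm, and its one concrete structural claim is false. The maximal pieces of $T_Z$ are \emph{not} chosen independently, and the interaction is not specific to reticulations. Take the tree with root $\rho$, children $p$ and $x_3$, and leaf children $x_1,x_2$ of $p$. Set $\w((\rho,p))=100$, $\w((\rho,x_3))=50$, $\w((p,x_1))=\w((p,x_2))=1$, and $k=1$, $D=150$, so $\kbar=\Dbar=2$. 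This is a \no-instance, since the best single taxon gives $101$. Yet $\{x_1\}$ and $\{x_2\}$ are each a valid piece of loss $1$ with the boundary vertex $p$ ``closed'', and together they fit within the budget and reach the required count. Their union actually makes $p$ lost and costs $102$. In general, a boundary vertex that one piece treats as closed becomes lost once the union of the pieces covers all of its offspring. Summing piece weights therefore undercounts the loss. Separately, your \leave branch on taxa makes no progress, so the search tree can have depth $|X|$. Growing pieces from a start leaf also needs a choice among $|X|$ leaves per piece. Without a sound way to compose pieces, you are left enumerating combinations, which is $|X|^{\Oh(\kbar)}$ and hence only \XP. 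Counting a single connected piece (your Bollob\'as/Stirling step) is not the bottleneck; composing pieces is. Falling back to a $2^{\Oh(\Dbar)}$ count would neither fix this nor give the stated bound.

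The missing idea is to guess, together with $Z$, a certificate that keeps every closed boundary vertex closed. Consider each vertex $u$ with an offspring in $S$ and a child $v$ with $\off(v)\subseteq Z$. The paper fixes one witness $x_u\in\off(u)\cap S$ for each such $u$. Distinct such $u$ give distinct lost edges $(u,v)$, so there are at most $\Dbar$ witnesses. An $(n,\Dbar+\kbar)$-universal set (\Cref{thm:universalSet}) then contains a coloring in which all of $Z$ is red and all witnesses are green. For that coloring, delete every edge whose head has a green offspring and split in-degree-$0$ vertices. The remaining weakly connected components are genuinely composable: the loss of any union of their taxon sets is exactly the sum of their weights. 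Selecting components is then a \KP instance with budget $\Dbar$ and target $\kbar$, solved in $\Oh(\Dbar\cdot m\log\Dbar)$ time (\Cref{lem:Dbar}). The factor $2^{\Dbar+\kbar+o(\Dbar)}$ comes from the size of the universal set, not from any enumeration of connected sets.
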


Observe that we may assume~$\Dbar \ge \kbar := |X| - k$, as otherwise no solution exists, because we require each edge to have a positive weight.

To show \Cref{thm:Dbar}, we use the technique of color coding.
Recall that $\off(e):=\off(w)$ for each edge $e=(v,w)$.
%
We define the following auxiliary problem,
in which we assign a \emph{color}, red or green, to each taxon.
A set~$Y\subseteq X$ is \emph{color-fitting} if each taxon $x\in Y$ is red and for each vertex~$v \in V(\Net)$, at least one of the following holds:
\begin{itemize}
	\item $v$ has a green offspring,
	\item all offspring of $v$ are in $Y$, or
	\item all offspring of $v$ are in $X \setminus Y$.
\end{itemize}

Effectively, this condition ensures that if we delete all the edges with green offspring, the resulting graph has a set of connected components where for each component, either all the taxa in the component are in $Y$, or none of them are. 
We will make use of this fact when considering the auxiliary problem below.

\problemdef{\cMAPPDlong (\cMAPPD)}
{A phylogenetic $X$-network \Net, integers $k$ and $D$, and a coloring on the taxa $c: X\to \{\text{red},\text{green}\}$.}
{Is there a subset $S\subseteq X$ of taxa such that $|S|\le k$, $\PD(S)\ge D$, and~$X\setminus S$ is color-fitting?}

\begin{lemma} \label{lem:Dbar}
	\cMAPPD can be solved in~$\Oh(\Dbar \cdot m \cdot \log(\Dbar))$~time.
\end{lemma}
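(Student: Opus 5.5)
The plan is to reduce \cMAPPD to a \KP instance whose items are certain connected pieces of the network, and then solve it by a DP over the budget~$\Dbar$. Write $Z := X\setminus S$ for the set of discarded taxa. In the first step, in $\Oh(m)$ time, I mark every vertex that has a green offspring by propagating a flag from the green leaves upwards in reverse topological order. Let $V_R$ be the set of unmarked vertices. These are exactly the vertices all of whose offspring are red, and every red taxon lies in $V_R$. Let $H$ be the undirected graph underlying $\Net[V_R]$, and let $C_1,\dots,C_q$ be its connected components, with $q\le n$.

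The key structural claim is the following: $Z$ is color-fitting if and only if $Z$ is the set of taxa in $\bigcup_{i\in I} C_i$ for some $I\subseteq[q]$. For the forward direction, every $v\in V_R$ satisfies $\off(v)\subseteq Z$ or $\off(v)\cap Z=\emptyset$. Now take an edge $(v,w)$ inside $H$. Since $\off(w)\subseteq\off(v)$ and $\off(w)\neq\emptyset$ (every vertex reaches a leaf), $v$ and $w$ fall into the same case. Hence each component is uniformly ``in'' or ``out'', and since $Z$ consists only of red taxa, $Z$ is the union of the taxa of the ``in'' components. For the converse, take $Z$ to be such a union and let $v\in V_R$. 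Every vertex on a path from $v$ to one of its offspring also lies in $V_R$, because its offspring are a subset of $\off(v)$. So all of $\off(v)$ lies in the component of $v$, and the fitting condition holds. Vertices outside $V_R$ satisfy the condition trivially, since they have a green offspring.

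Next I express both constraints per component. Since $\PD(S)=\PD(X)-\sum_{e\in T_Z}\w(e)$, I need the strictly affected edges. An edge $e=(u,v)$ satisfies $\off(v)\subseteq Z$ if and only if $v\in V_R$ and the component of $v$ is chosen. For each component $C_i$, I therefore define its cost $c_i$ as the total weight of the edges whose head lies in $C_i$, and its value $a_i$ as the number of taxa in $C_i$. All of these values are computable in $\Oh(m)$ time. The instance is then a yes-instance if and only if some $I$ satisfies $\sum_{i\in I}c_i\le\Dbar$ and $\sum_{i\in I}a_i\ge\kbar$. I discard every component with $c_i>\Dbar$ and run the standard knapsack DP: $\DP[j][b]$ is the maximum value achievable with the first $j$ components and cost at most $b$, for $b\in[\Dbar]_0$. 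This table has $\Oh(q\cdot\Dbar)\subseteq\Oh(m\cdot\Dbar)$ entries, each computed by a constant number of additions and comparisons on numbers bounded by $\Oh(\Dbar)$. Under the stated RAM model each such operation costs $\Oh(\log\Dbar)$, which gives $\Oh(\Dbar\cdot m\cdot\log\Dbar)$ in total. The values $a_i$ are at most $n$ rather than $\Oh(\Dbar)$, but I can cap them at $\kbar\le\Dbar$, so the bound still holds.

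I expect the main obstacle to be the exact correspondence between color-fitting sets and unions of components, in particular the converse direction and the identification of $T_Z$ with the edges entering chosen components. If that argument turns out subtler in nonbinary networks, I would fall back on arguing directly via paths inside $V_R$ as sketched above, which uses only $\off(w)\subseteq\off(v)$ along edges.
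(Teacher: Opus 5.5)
Your proof is correct and is essentially the paper's argument. The paper deletes every edge whose head has a green offspring and splits in-degree-$0$ vertices, which (away from the root) yields exactly your components of $\Net[V_R]$ with the boundary edges charged to the component of their head, and it then solves the same \KP instance with budget $\Dbar$ and target $\kbar$. Your cap $\kbar\le\Dbar$ is justified by the paper's remark that each discarded taxon's pendant edge has weight at least~$1$, so if $\kbar>\Dbar$ the answer is \no.

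The one difference favours you. When no taxon is green, the paper's splitting also splits $\rho$ and so effectively ignores the color-fitting condition at the root; this is harmless for \Cref{thm:Dbar}. Your induced-subgraph formulation keeps $\rho$ in $V_R$ and characterizes color-fitting sets exactly.
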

\begin{proof}
	\proofparagraph{Algorithm}
	Let $\Instance := (\Net:=(V,E,\w),k,D,c)$ be an instance of \cMAPPD.
	Delete all edges~$(u,v)$ for which~$v$ has a green offspring, and then delete any isolated vertices.
	For any vertex~$u$ of in-degree~$0$ with children~$v_1,\dots, v_q$, replace~$u$ with~$q$ vertices~$u_1,\dots, u_q$ and add and edge~$(u_i,v_i)$ of weight~$\w((u,v_i))$ for each~$i \in [q]$, so that each in-degree-$0$ vertex now has one child.
	Let $G'$ be the resulting graph.
	An example of this transformation is depicted in~\Cref{fig:transformation}.
	
	\begin{figure}[t]
		\centering
		\begin{tikzpicture}[scale=0.8,every node/.style={scale=0.7}]
			\draw[-{Stealth[length=6pt]}] (5,8) -> (7,7);
			\draw[-{Stealth[length=6pt]}] (5,8) -> (5,7.4);
			\draw[-{Stealth[length=6pt]}] (5,8) -> (2.5,7);
			
			\draw[-{Stealth[length=6pt]}] (2.5,7) -> (2,6);
			\draw[-{Stealth[length=6pt]}] (2.5,7) -> (3.45,6);
			\draw[-{Stealth[length=6pt]}] (2.5,7) -> node[above] {5} (4.25,6.5);
			
			\draw[-{Stealth[length=6pt]}] (5,7.4) -> node[left] {4} (4.25,6.5);
			\draw[-{Stealth[length=6pt]}] (5,7.4) -> (6,6.5);
			
			\draw[-{Stealth[length=6pt]}] (7,7) -> (6,6.5);
			\draw[-{Stealth[length=6pt]}] (7,7) -> node[left] {6} (7,6);
			
			\draw[-{Stealth[length=6pt]}] (6,6.5) -> (6,5.8);
			\draw[-{Stealth[length=6pt]}] (4.25,6.5) -> (4.25,5.3);

			\draw[-{Stealth[length=6pt]}] (3.45,6) -> node[left] {2} (4.25,5.3);
			\draw[-{Stealth[length=6pt]}] (3.45,6) -> node[left] {3} (3.45,4.8);
			\draw[-{Stealth[length=6pt]}] (3.45,6) -> (2.6,5.3);
			\draw[-{Stealth[length=6pt]}] (2,6) -> (2.6,5.3);
			\draw[-{Stealth[length=6pt]}] (2,6) -> (2,4.8);
			\draw[-{Stealth[length=6pt]}] (4.25,5.3) -> (4.25,4.5);
			\draw[-{Stealth[length=6pt]}] (2.6,5.3) -> (2.6,4.5);

			\draw[-{Stealth[length=6pt]}] (6,5.8) -> (5.3,5.3);
			\draw[-{Stealth[length=6pt]}] (6,5.8) -> (6,5);
			\draw[-{Stealth[length=6pt]}] (6,5.8) -> (6.5,5.3);
			\draw[-{Stealth[length=6pt]}] (7,6) -> (6.5,5.3);
			\draw[-{Stealth[length=6pt]}] (7,6) -> (7,5);
			
			\draw[-{Stealth[length=6pt]}] (6.5,5.3) -> (6.5,4.5);
			\draw[-{Stealth[length=6pt]}] (5.3,5.3) -> (5,4.5);
			\draw[-{Stealth[length=6pt]}] (5.3,5.3) -> (5.5,4.5);
			
			\node[draw,fill=red,inner sep=2pt,circle] at (2,4.7) {};
			\node[draw,fill=green,inner sep=2pt,circle] at (2.6,4.4) {};
			\node[draw,fill=red,inner sep=2pt,circle] at (3.45,4.7) {};
			\node[draw,fill=red,inner sep=2pt,circle] at (4.25,4.4) {};
			
			\node[draw,fill=red,inner sep=2pt,circle] at (5,4.4) {};
			\node[draw,fill=red,inner sep=2pt,circle] at (5.5,4.4) {};
			\node[draw,fill=green,inner sep=2pt,circle] at (6,4.9) {};
			\node[draw,fill=red,inner sep=2pt,circle] at (6.5,4.4) {};
			\node[draw,fill=red,inner sep=2pt,circle] at (7,4.9) {};
			
			\node at (6.5,8) {(1)};
			
			\path[draw] (7.5,8.2) -> (7.5,4.1);
		\end{tikzpicture}
		\begin{tikzpicture}[scale=0.8,every node/.style={scale=0.7}]
			\draw[-{Stealth[length=6pt]}] (2.5,7) -> node[above] {5} (4.25,6.5);
			
			\draw[-{Stealth[length=6pt]}] (5,7.4) -> node[left] {4} (4.25,6.5);
			
			\draw[-{Stealth[length=6pt]}] (7,7) -> node[left] {6} (7,6);
			
			\draw[-{Stealth[length=6pt]}] (4.25,6.5) -> (4.25,5.3);

			\draw[-{Stealth[length=6pt]}] (3.45,6) -> node[left] {2} (4.25,5.3);
			\draw[-{Stealth[length=6pt]}] (3.45,6) -> node[left] {3} (3.45,4.8);
			\draw[-{Stealth[length=6pt]}] (2,6) -> (2,4.8);
			\draw[-{Stealth[length=6pt]}] (4.25,5.3) -> (4.25,4.5);

			\draw[-{Stealth[length=6pt]}] (6,5.8) -> (5.3,5.3);
			\draw[-{Stealth[length=6pt]}] (6,5.8) -> (6.5,5.3);
			\draw[-{Stealth[length=6pt]}] (7,6) -> (6.5,5.3);
			\draw[-{Stealth[length=6pt]}] (7,6) -> (7,5);
			
			\draw[-{Stealth[length=6pt]}] (6.5,5.3) -> (6.5,4.5);
			\draw[-{Stealth[length=6pt]}] (5.3,5.3) -> (5,4.5);
			\draw[-{Stealth[length=6pt]}] (5.3,5.3) -> (5.5,4.5);
			
			\node[draw,fill=red,inner sep=2pt,circle] at (2,4.7) {};
			\node[draw,fill=red,inner sep=2pt,circle] at (3.45,4.7) {};
			\node[draw,fill=red,inner sep=2pt,circle] at (4.25,4.4) {};
			
			\node[draw,fill=red,inner sep=2pt,circle] at (5,4.4) {};
			\node[draw,fill=red,inner sep=2pt,circle] at (5.5,4.4) {};
			\node[draw,fill=red,inner sep=2pt,circle] at (6.5,4.4) {};
			\node[draw,fill=red,inner sep=2pt,circle] at (7,4.9) {};
			
			\node at (6.5,8) {(2)};
			
			\path[draw] (7.5,8.2) -> (7.5,4.1);
		\end{tikzpicture}
		\begin{tikzpicture}[scale=0.8,every node/.style={scale=0.7}]
			\draw[-{Stealth[length=6pt]}] (2,5.8) to (2,4.8);
			\draw[-{Stealth[length=6pt]}] (2.9,5.8) to node[left] {3} (3.45,4.8);
			\draw[-{Stealth[length=6pt]}] (3.5,6.3) to node[left] {2} (4.25,5.3);
			\draw[-{Stealth[length=6pt]}] (3.75,7.5) to node[left] {5} (4.25,6.5);
			\draw[-{Stealth[length=6pt]}] (4.75,7.5) to node[right] {4} (4.25,6.5);
			\draw[-{Stealth[length=6pt]}] (5.3,6.3) to (5.3,5.3);
			\draw[-{Stealth[length=6pt]}] (6.5,6.3) to (6.5,5.3);
			\draw[-{Stealth[length=6pt]}] (7,7) to node[left] {6} (7,6);
			
			\draw[-{Stealth[length=6pt]}] (4.25,6.5) -> (4.25,5.3);
			\draw[-{Stealth[length=6pt]}] (4.25,5.3) -> (4.25,4.5);
			
			\draw[-{Stealth[length=6pt]}] (7,6) -> (6.5,5.3);
			\draw[-{Stealth[length=6pt]}] (7,6) -> (7,5);
			
			\draw[-{Stealth[length=6pt]}] (6.5,5.3) -> (6.5,4.5);
			\draw[-{Stealth[length=6pt]}] (5.3,5.3) -> (5,4.5);
			\draw[-{Stealth[length=6pt]}] (5.3,5.3) -> (5.5,4.5);
			
			\node[draw,fill=red,inner sep=2pt,circle] at (2,4.7) {};
			\node[draw,fill=red,inner sep=2pt,circle] at (3.45,4.7) {};
			\node[draw,fill=red,inner sep=2pt,circle] at (4.25,4.4) {};
			
			\node[draw,fill=red,inner sep=2pt,circle] at (5,4.4) {};
			\node[draw,fill=red,inner sep=2pt,circle] at (5.5,4.4) {};
			\node[draw,fill=red,inner sep=2pt,circle] at (6.5,4.4) {};
			\node[draw,fill=red,inner sep=2pt,circle] at (7,4.9) {};
			
			\node at (4.5,6) {$C_1$};
			\node at (5.55,6) {$C_2$};
			
			\node at (6.5,8) {(3)};
		\end{tikzpicture}
		\caption{In this figure, an example for the transformation in the proof of \Cref{lem:Dbar} is given.
			A hypothetical network \Net with a coloring is given in (1).
			In (2), the network after deleting edges with green offspring is depicted, and,  
			in (3), the graph $G'$ is depicted.
			Some edges are labeled with their weights.
			Unlabeled edges have weight~1.
			The connected components $C_1$ and $C_2$ have weight~$13$ and~$3$ and value~$1$ and~$2$, respectively.}
		\label{fig:transformation}
	\end{figure}%

	Compute the set of weakly connected components of $G'$.
	For every weakly connected components $C = (V_C,E_C)$ of $G'$, proceed as follows.
	Define an item $I_C$ with \emph{weight}~$\w(E_C)$ and \emph{value}~$|Y_C|$, where $Y_C$ is the set of taxa in $V_C$.

	Let~$M$ be the set of these items.
	Now, return \yes if there is a subset of items in $M$ whose total weight is at most $\Dbar$ and whose total value is at least $\kbar=|X|-k$, and \no otherwise.
	Here, \kbar and \Dbar are from the original network~$\Net$.	
	Observe that this can be determined by solving an instance of \KP with set of items~$M$, budget $\Dbar$, and target value $\kbar$, which can be done in~$\Oh(\Dbar \cdot |N| \cdot \log(\kbar)) \in \Oh(\Dbar \cdot |X| \cdot \log(\kbar))$ time~\cite{rehs,weingartner}.\footnote{The $\log(\kbar)$-factor of the running time comes from adding $\log(\kbar)$-digit numbers and is not mentioned in the original papers~\cite{rehs,weingartner}.}

	\proofparagraph{Correctness}
	We show first that if $\Instance$ is a \yes-instance of \cMAPPD, then the algorithm return~\yes and secondly we show the converse.
	
	Assume that \Instance is a \yes-instance of \cMAPPD.
	Thus, there is a set~$S \subseteq X$ of size at most~$k$ with $\PD(S) \ge D$ and $X\setminus S$ is color-fitting.
	%
	We claim that $X\setminus S$ is also color-fitting in $G'$.
	Indeed, suppose for a contradiction that this is not the case.
	Then, there exists some vertex $v$ in $G'$ for which $v$ has an offspring in $X\setminus S$ and an offspring in $S$.
	Let~$v$ be a lowest such vertex.
	Then, $v$ does not have in-degree $0$ in $G'$, as in this case $v$ has a unique child with the same offspring.
	Furthermore, $v$ does not have any green offspring in~$\Net$, as the incoming edges of $v$ were not deleted in the construction of $G'$.
	Then, in fact $v$ has the same offspring in $\Net$ as in $G'$, implying that $X\setminus S$ is not color-fitting.
	This is a contradiction and we conclude that~$X\setminus S$ is color-fitting in~$G'$.
	
	Since $X \setminus S$ is color-fitting in $G'$, and $G'$ has no green taxa, every vertex $v$ in $G$ satisfies $\off(v) \subseteq X\setminus S$ or $\off(v) \subseteq S$.
	Each edge $e$ in $\Net$ that is not affected by $S$ is strictly affected by $X\setminus S$.
	Because $X\setminus S$ is color-fitting, each taxon in $X\setminus S$ is red and so
	$e$ is still an edge in~$G'$.
	Let $C_e$ be the connected component of $G'$ that contains $e$.
	Observe that every edge in $C_e$ is also strictly affected by $X \setminus S$---indeed, for any edge $uv$ in $C$, $u$ has all its offspring in $X\setminus S$ if and only if $v$ has all its offspring in $X\setminus S$.
	Thus, $C_e$ satisfies the conditions to be in $M$ for each edge $e$ that is not affected by $S$.
	Let $C_1,\dots,C_t$ be the unique connected components that contain the edges that are not affected by $S$.
	We conclude that $\w(C_1 \cup \dots \cup C_t) \le \Dbar$ and~$C_1 \cup \dots \cup C_t$ is the set of taxa $X\setminus S$, which has size at least $\kbar$.
	Hence, $I_{C_1},\dots,I_{C_t}$ is a solution for \KP on $(M, \Dbar, \kbar)$ and the algorithm returns \yes.
	
	For the converse, assume that the algorithm returns \yes and let $I_{C_1},\dots,I_{C_t}$ be a solution for \KP on $(M, \Dbar, \kbar)$.
	Let $Y_i$ be the set of taxa of $C_i$ and define $Y := \bigcup_{i=1}^t Y_i$.
	We prove that $S := X \setminus Y$ is a solution for \cMAPPD on instance \Instance. 
	By the construction, we conclude that all taxa in~$Y_i$ are colored red and that $Y_i$ and $Y_j$ are disjoint for any $i\neq j$.
	Each vertex $v \in \Net$ that has some but not all offspring in $Y_i$ has at least one green offspring.
	Consequently, $Y$ is color-fitting.
	Further, for any edge $e = (u,v)$ that is strictly affected by $Y$ in $\Net$, we have that $v$  has no green offspring, and therefore $e$ is not deleted in the construction of $G'$.
	Moreover, as $v$ has offspring in $Y_i$ for some $i \in [t]$, we conclude that $e$ is in $C_i$.
	Because $\sum_{i=1}^t \w(E_{C_i}) \le \Dbar$ the phylogenetic diversity of $S$ is $\PD(S) = \PD(X) - \sum_{i=1}^t \w(E_{C_i}) \ge \PD(X) - \Dbar = D$.
	Likewise, as $\sum_{i=1}^t |Y_i| \ge \kbar$, we conclude $|S| = |X| - \sum_{i=1}^t |Y_i| \le |X| - \kbar = k$.

	\proofparagraph{Running Time}
	The graph $G'$ can be computed from $\Net$ in~$\Oh(m)$~time.
	The weakly connected components of $G'$ can be computed in~$\Oh(m)$~time as well.
	For each connected component~$C = (V_C,E_C)$ the item~$I_C$ of~$M$
	is computed in~$\Oh(|E_C| \cdot \log(\Dbar))$~time.
	Consequently, we can compute~$M$ in~$\Oh(m \cdot \log(\Dbar))$~time.
	As the instance of \KP can be solved in~$\Oh(\Dbar \cdot |X| \cdot \log(\kbar))$~time~\cite{weingartner,rehs}, we have an overall running time of $\Oh(m \cdot \log(\Dbar) + \Dbar \cdot |X| \cdot \log(\kbar)) \in \Oh(\Dbar \cdot m \cdot \log(\Dbar))$.
\end{proof}

To show that \MAPPD is \FPT with respect to \Dbar, we present an reduction from \MAPPD to \cMAPPD using standard color coding techniques.
In particular, we show that there exists a family ${\cal F}$ of $2$-colorings  $c:E\to \{\text{red},\text{green}\}$, with $|{\cal F}|$ bounded by a function of $\Dbar$ times a polynomial in~$n$, such that $(\Net, k, D)$ is a \yes-instance of \MAPPD if and only if $(\Net,k,D,c)$ is a \yes-instance of \cMAPPD for some $c \in {\cal F}$.
We refer the reader to~\cite[Chapter 5.2]{cygan} for an overview of color coding.
 
\begin{definition}
	An \emph{$(n,k)$-universal set} is a family ${\cal U}$ of subsets of $[n]$ such that for any~$S\subseteq [n]$ of size $k$, $\{A \cap S \mid A \in {\cal U}\}$ contains all $2^k$ subsets of $S$.
\end{definition}
 
\begin{theorem}[\cite{Naor1995SplittersAN}]
	\label{thm:universalSet}
	For any $n,k \geq 1$, one can construct an $(n,k)$-universal set of size~$2^k k^{\Oh(\log k)}\log n$ in time $2^kk^{\Oh(\log k)}n\log n$.
\end{theorem}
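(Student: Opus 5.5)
Since this result is quoted from~\cite{Naor1995SplittersAN}, the paper only uses it as a black box. If I had to prove it, I would follow the Naor--Schulman--Srinivasan plan: build the universal set in three layers, so that the $\log n$ dependence comes only from hashing, and the full $2^k$ factor plus the $k^{\Oh(\log k)}$ overhead arise only on a universe whose size is polynomial in~$k$.

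\emph{Step 1 (shrinking the universe).} Construct a family $\mathcal H$ of functions $h:[n]\to[k^2]$ with $|\mathcal H| = k^{\Oh(1)}\log n$ such that every $S\subseteq[n]$ of size $k$ is mapped injectively by some $h\in\mathcal H$. I would use the standard two-level scheme. First map $x\mapsto x \bmod p$ over the primes $p=\Oh(k^2\log n)$; a counting argument over prime divisors of pairwise differences shows that some prime is injective on~$S$. Then apply a pairwise-independent family $x\mapsto ((ax+b)\bmod q)\bmod k^2$, of which some member stays injective, by the birthday bound. Given a universal set $\mathcal U'$ on $[k^2]$, the family $\{h^{-1}(A)\mid h\in\mathcal H, A\in\mathcal U'\}$ is then $(n,k)$-universal. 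Indeed, if $h$ is injective on~$S$, then $h^{-1}(A)\cap S$ ranges over all subsets of~$S$ as $A$ ranges over~$\mathcal U'$. It therefore suffices to build a $(k^2,k)$-universal set of size $2^k k^{\Oh(\log k)}$.

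\emph{Step 2 (splitting).} Set $\ell := \lceil k/\log k\rceil$ and build a splitter: a family $\mathcal G$ of partitions of $[k^2]$ into $\ell$ parts such that for every $k$-set $T$ some partition puts at most $\Oh(\log k)$ elements of $T$ into each part. I would take parts to be consecutive intervals. One fixes the $\ell-1$ cut points among $k^2$ positions, which gives a family of size at most $\binom{k^2+\ell}{\ell} = k^{\Oh(\ell)}$. Inside each part, I would then use a $(k^2,\log k)$-universal set on that part, which may be found by brute force or by a small-bias sample space; its size is $2^{\log k}\cdot k^{\Oh(1)}$. Taking the union of choices over the parts gives $\mathcal U'$.

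\emph{Main obstacle.} The naive product over the $\ell$ parts costs $\prod_{i}(2^{|T\cap P_i|}k^{\Oh(1)}) = 2^k\cdot k^{\Oh(k/\log k)}$, which exceeds the claimed bound. This is exactly the step I expect to be hardest. To fix it, I would first try to decouple the per-part overhead from the pattern: the $k^{\Oh(1)}$ per-part hashing choices are fixed once per partition and encoded jointly, and only the $2^{|T\cap P_i|}$ patterns multiply. The joint encoding would use a $k$-wise independent (or $\epsilon$-biased) family of hash functions shared across all parts, so that the total overhead is $k^{\Oh(\log k)}$ rather than $k^{\Oh(k/\log k)}$. Alternatively, I would try taking $\ell$ parts of size roughly $\log k$ and derandomizing the choice of part boundaries recursively, which produces the $k^{\Oh(\log k)}$ term.

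\emph{Running time.} The construction time $2^k k^{\Oh(\log k)}\, n\log n$ then follows, because each set of the family is written down in $\Oh(n)$ time after evaluating the hash functions.
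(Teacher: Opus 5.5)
The paper gives no proof of this statement; it imports it from Naor, Schulman and Srinivasan as a black box. Your reconstruction has a genuine gap exactly at the point you call the ``main obstacle'', and that obstacle comes from your choice of parameters.

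\textbf{The gap.} With $\ell=\lceil k/\log k\rceil$ parts, the splitter alone is already too large. There are roughly $(k^2/\ell)^{\ell}\approx (k\log k)^{k/\log k}=2^{(1+o(1))k}$ interval partitions, so your ``$k^{\Oh(\ell)}$'' is $2^{\Theta(k)}$, not $k^{\Oh(\log k)}$. The product over the parts then adds your $k^{\Theta(k/\log k)}=2^{\Theta(k)}$ factor on top of the $2^k$ patterns, so the family has size at least $2^{(2-o(1))k}$. Neither remedy you propose is actually worked out, and the obvious instantiations are too large:
\begin{itemize}
\item a $k$-wise independent family on $[k^2]$ already has $k^{\Omega(k)}$ members;
\item an $\epsilon$-biased space only guarantees all $2^k$ patterns on every $k$-set when $\epsilon<2^{-k}$, which with the standard constructions means roughly $4^k$ points;
\item the ``recursive'' alternative still uses $\Theta(k/\log k)$ parts of size about $\log k$, and is not specified.
\end{itemize}

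\textbf{The missing idea.} Use few large parts instead of many small ones. Take $\ell=c\log k$ intervals for a suitably large constant $c$, so each part receives $k'=\lceil k/\ell\rceil=\Oh(k/\log k)$ elements of $T$.
\begin{itemize}
\item The splitter then has only $k^{\Oh(\ell)}=k^{\Oh(\log k)}$ members.
\item A single $(k^2,k')$-universal set $\mathcal U'$ of near-optimal size $2^{k'}k^{\Oh(1)}$ can be found by brute force, for example greedy set cover or conditional expectations over a $k'$-wise independent sample space on $[k^2]$. This takes time $k^{\Oh(k')}=2^{\Oh(k/c)}$, which fits the budget.
\item Using $\mathcal U'$ in every part, the product costs $|\mathcal U'|^{\ell}\le 2^{k+\ell}k^{\Oh(\ell)}=2^k k^{\Oh(\log k)}$. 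The polynomial overhead is now paid $\Oh(\log k)$ times rather than $\Theta(k/\log k)$ times.
\end{itemize}

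\textbf{A secondary problem in Step~1.} Your construction does not give $|\mathcal H|=k^{\Oh(1)}\log n$. You compose $\Theta(k^2\log n/\log(k\log n))$ prime moduli with a pairwise independent family over a field of size $\tilde\Theta(k^2\log n)$. That yields on the order of $k^6\log^3 n$ functions, so your final bound would carry $\log^3 n$ instead of $\log n$. To get a single $\log n$ you need a first-level family of size $k^{\Oh(1)}\log n$. One option is an explicit code of length $k^{\Oh(1)}\log n$, over an alphabet of size at most $2^{\poly(k)}$, whose relative distance exceeds $1-2/(k(k-1))$. Then any $k$ codewords are separated in some coordinate. Follow it with a $\poly(k)$-size family reducing that alphabet to $[k^2]$.
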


\begin{proof}[Proof of \Cref{thm:Dbar}]
	\proofparagraph{Algorithm}
	Let $\Instance := (\Net:=(V,E,\w),k,D)$ be an instance of \MAPPD.
	Arbitrarily order the taxa $x_1,\dots, x_n$. 
	Construct an~$(n,\Dbar + \kbar)$-universal
	
	Now for each $A \in \mathcal{U}$, construct a $2$-coloring  $c_A: X\to \{\text{red},\text{green}\}$ where $x_i$ is colored green if and only if $i \in A$, and solve \cMAPPD on $(\Net, k, D, c_A)$.
	Return \yes, if~$(\Net, k, D, c_A)$ is a \yes-instance for some~$A \in \mathcal{U}$. Otherwise return \no.
	
	\proofparagraph{Correctness}
	First observe that if $(\Net, k, D, c)$ is a \yes-instance of \cMAPPD for any coloring $c: X\to \{\text{red},\text{green}\}$, then $(\Net,k,D)$ is also a \yes-instance of \MAPPD.
	
	Now suppose $(\Net,k,D)$ is a \yes-instance for \MAPPD.
	Let $S\subseteq X$ be a subset of taxa of size at most~$k$ and with~$\PD(S)\ge D$.
	If necessary, add taxa to~$S$ until~$|S| = k$.
	Consequently, $X\setminus S$ has size $\kbar$.
	Let $V_Y$ be the set of vertices~$u$ of \Net which have an offspring~$x_u$ in~$S$ and have a child~$v$ with~$\off(v) \subseteq X\setminus S$.
	Define~$Y := \{ x_u \mid u\in V_Y \}$.
	Observe that if we could define a coloring which colors the taxa in~$X\setminus S$ in red and the taxa in~$Y$ in green, then~$X\setminus S$ would be color-fitting.
	
	Define an operation $\Index : 2^X \to [n]$ by~$\Index(X') := \{ i \mid x_i \in X' \}$ for any set~$X' \subseteq X$.
	
	For each~$u\in V_Y$ and child~$v$ of~$u$ with~$\off(v) \subseteq X\setminus S$,
	the edge~$uv$ is strictly affected by~$X\setminus S$.
	Consequently, $|Y| \le |V_Y| \le \w(uv) \le \Dbar$.
	Therefore, $Z := Y \cup (X\setminus S) \subseteq X$ is a set of size at most~$\kbar + \Dbar$.
	If necessary, add taxa until~$Z$ has size~$\Dbar+\kbar$.
	Consequently, there is a set~$A\in \mathcal{U}$ with~$A \cap \Index(Z) = \Index(Y)$.
	Then, the associated $2$-coloring $c_A$ colors every taxa in $Y$ green and every taxa in $X \setminus S$ red.
	So $X\setminus S$ is color-fitting with respect to $c_A$, and $S$ is a solution for \cMAPPD on instance~$(\Net, k, D, c_A)$.
	
	\proofparagraph{Running Time}
	The construction of $\mathcal{U}$ takes $2^{\Dbar + \kbar + \Oh(\log^2 (\Dbar))} n\log n$ time, and for each of the $2^{\Dbar + \kbar + \Oh(\log^2 (\Dbar))}\log n$ sets in $\mathcal{U}$ we solve an instance of \cMAPPD, which can be done in $\Oh(\Dbar \cdot m \cdot \log(\Dbar))$~time by~\Cref{lem:Dbar}.
	The overall running time is~$\Oh(2^{\Dbar + \kbar + o(\Dbar)} \cdot m\log n)$.
\end{proof}

\subsection{Proximity to trees}
\label{sec:FPTtreelike}
\MAPPD can be solved in polynomial time with Faith's Greedy-Algorithm, if the given network is a tree \cite{FAITH1992,Pardi2005,steel}.
Therefore, in this subsection, we examine \MAPPD with respect to two parameters that classify the network's proximity to a tree, the number of reticulations $\vret$ and the smaller parameter treewidth $\tw$.

\begin{theorem}
	\label{thm:ret}
	\MAPPD can be solved in $\Oh(2^{\vret} \cdot k \cdot m\cdot \log(\max_\w))$ time.
\end{theorem}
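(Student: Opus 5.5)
We guess, for every reticulation, whether it lies on a path to the chosen set, and then solve the resulting tree-like problem by a dynamic program over the network. Let $R$ be the set of reticulations, so $|R|=\vret$. For each $R' \subseteq R$ we look for a solution $Y$ with $|Y|\le k$ such that the reticulations having an offspring in $Y$ are exactly $R'$. We call $R'$ the set of \emph{active} reticulations. A reticulation $r$ is active iff its unique child $c_r$ has an offspring in $Y$. Enumerating all $R'$ accounts for the factor $2^{\vret}$. It remains to solve each guess in $\Oh(k\cdot m\cdot\log(\max_\w))$ time.

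For a fixed $R'$, we build an auxiliary structure as follows.
\begin{itemize}
\item Delete every inactive reticulation together with its incoming and outgoing edges, and delete all of its descendants. Such descendants cannot lie in $Y$, since they would have the reticulation as an ancestor.
\item For each active reticulation $r$, the edges entering $r$ and every edge on a path from the root to $r$ are automatically affected. We fix the weight of all these edges as a constant $W_{R'}$ that every consistent solution collects. We then set the weight of each of these edges to $0$, so that it is not counted again.
\item Cut each active reticulation $r$ off from its parents. This makes $r$ the root of its own subtree.
\end{itemize}

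After these steps, the remaining graph is a forest. Every vertex now has in-degree at most $1$, because all reticulations were either removed or detached from their parents. The forest contains the component of the original root and one component hanging below each active reticulation. The phylogenetic diversity of $Y$ in $\Net$ then equals $W_{R'}$ plus the sum over components of the tree-PD of $Y$ restricted to that component. Edges reachable both from above and from below are counted only once because they were zeroed.

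We then have to enforce consistency: $Y$ must contain at least one leaf below each active reticulation $r$, that is, a leaf in the subtree of $c_r$. We solve the forest problem by a standard tree knapsack DP. For each vertex $v$ and each $j\in[k]_0$, let $\DP[v,j]$ be the maximum diversity within the subtree of $v$ using exactly $j$ leaves. We enforce the constraint by setting $\DP[r,0]=-\infty$ for every active $r$. The subtrees of the components are combined by merging children. Since a merge step of two tables over $k$ values costs $\Oh(k^2)$, this gives a naive $\Oh(k^2 m)$ bound. To reach $\Oh(k\cdot m)$, we instead use the standard bound that subtree sizes are capped at $k$. Alternatively, we use a greedy (Faith-style) argument per component combined with a single knapsack over the components, where each component contributes a concave profile. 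Concavity on trees follows from the greedy optimality of tree PD, so the components can be merged by greedy selection of marginal gains. A tree component with a forced leaf keeps this concavity once the first leaf is taken, via the greedy order starting with the best leaf. The $\log(\max_\w)$ factor accounts for additions of weights. Finally, we answer \yes iff some $R'$ yields a value of at least $D$.

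Correctness in one direction: given an optimal $Y$, take $R'$ to be its set of active reticulations. Then $Y$ is feasible for the guess and its value equals $\PD(Y)$. Conversely, any DP solution for a guess $R'$ contains a leaf below each active reticulation. So every edge that was zeroed is truly affected, and every counted edge is affected as well, which shows the computed value is at most $\PD(Y)$.

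The main obstacle is the double-counting and consistency bookkeeping at active reticulations whose ancestor sets overlap. An edge may lie above several active reticulations and also above leaves in the root component. Zeroing all edges in $\anc$ of the active reticulations handles this, but we must check that no other edge is counted twice and that the forced leaf below each active reticulation really makes $r$'s ancestors affected. The second delicate point is meeting the $\Oh(k\cdot m)$ bound rather than $\Oh(k^2 m)$. If the concavity argument fails with forced leaves, we fall back on a careful size-bounded tree knapsack.
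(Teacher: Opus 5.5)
Your accounting for a fixed guess is fine: if the reticulations with an offspring in $Y$ are exactly $R'$, then $\PD(Y)$ equals $W_{R'}$ plus the diversity of $Y$ in the forest with the zeroed weights. The step that fails is the consistency constraint. You force a chosen leaf in the forest component of \emph{every} active reticulation (setting $\DP[r,0]=-\infty$), treating ``$r$ has an offspring in $Y$'' as ``$Y$ has a leaf in the subtree of $c_r$''. These differ as soon as an active reticulation $r_2$ is a proper descendant of an active reticulation $r$. Then $r$ may be active only through $r_2$, whose leaves lie in another component because you cut $r_2$ off from its parents. So your completeness claim (``take $R'$ to be the active set of an optimal $Y$; then $Y$ is feasible for the guess'') is false.

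Concrete counterexample: let $\rho$ have children $b$ and $r$, and let $b$ have children $r$ and $r_2$. Let $r$ have child $c$, let $c$ have children $y$ (a leaf) and $r_2$, and let $r_2$ have leaf child $x$. Give $(r_2,x)$ weight $100$ and every other edge weight $1$; set $k=1$ and $D=\PD(\{x\})=106$. Every guess with $r$ inactive deletes $x$. The guess $R'=\{r,r_2\}$ demands one leaf in $\{r,c,y\}$ and one in $\{r_2,x\}$, i.e.\ two leaves. So your algorithm finds at best $\{y\}$, of diversity $5$, and answers \no on a \yes-instance. If the child of $r$ is itself a reticulation, the component of $r$ has no leaf at all, and every guess with $r$ active is rejected.

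The fix is to force a leaf only in the components of \emph{minimal} active reticulations, i.e.\ those with no other active reticulation among their proper descendants. You must also explicitly discard guesses in which an active reticulation lies below an inactive one. Otherwise $W_{R'}$ counts edges, such as those entering the inactive reticulation, that no remaining leaf can affect; in the example, the guess $\{r_2\}$ would claim diversity $6$ with no leaves.

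With these changes both directions go through. For completeness, a path from a minimal active $r$ to a leaf of $Y$ contains no other reticulation, since that reticulation would be active and lie below $r$; hence the leaf is in $r$'s component. For soundness, every active $r$ has a minimal active descendant whose forced leaf is an offspring of $r$. Your size-capped $\Oh(km)$ tree knapsack, after hanging all component roots under a super-root with weight-$0$ edges, then gives the stated bound.

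For comparison, the paper avoids this bookkeeping by branching on one reticulation $v$ at a time. \leave deletes the edges strictly affected by $\off(v)$. \take contracts $\anc(v)$ into the root, adds an edge of weight $\w(E^{(\uparrow vu)})+\Dbar$, and raises $D$ by $\Dbar$. This forces an offspring of $v$ \emph{in the network} rather than in a forest component, so nested reticulations need no special care and Faith's algorithm finishes each branch.
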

Observe that by \Cref{cor:X}, \MAPPD can not be solved in $\Oh^*(2^{\epsilon\cdot \vret})$ time for any $\epsilon<1$, unless \SETH fails.
Therefore, the running time of the previous proof is tight, to some extent.

We note that after the initial publication of this paper, it was shown that \MAPPD can be solved in $\Oh(2^{\text{sw}_\Net})$ time, where~$\text{sw}_\Net$ is the scanwidth of the network~\cite{PaNDA}.
A network's scanwidth is at least its treewidth and at most its number of reticulations minus~1.
Consequently, the algorithms in~\cite{PaNDA} and~\cite{holtgrefe2026tractable} present neat improvement to our algorithm presented in~\Cref{thm:ret} and demonstrates the algorithmic power behind scanwidth.
For more detains and a formal definition of scanwidth, we refer to~\cite{berry,holtgrefeThesis}.
\begin{proof}
	\proofparagraph{Algorithm}
	We define two operations, called \take and \leave, that take as input an instance $\Instance = (\Net,k,D)$ and a reticulation $v$ of \Net and return another instance of \MAPPD.
	Every subset of taxa $Y$ that contains an offspring of $v$ should be a solution for \MAPPD on~\Instance if and only if $Y$ is a solution for\MAPPD on $\take(\Instance,v)$.
	Similarly every subset of $Y$ that does not contain an offspring of $v$ should be a solution for \MAPPD on \Instance if and only if $Y$ is a solution for\MAPPD on $\leave(\Instance,v)$.

	Recall that $\off(e)\subseteq X$ is the set of offspring of $w$ for an edge $e=(v,w)$ and
	the strictly affected edges $T_Y$ for a set of taxa $Y\subseteq X$ is the set of edges $e$ with $\off(e)\subseteq Y$.
	We define $\leave(\Instance,v)$ to be the instance $\Instance'=(\Net',k,D)$ of \MAPPD, in which $k$ and $D$ are unchanged and $\Net'$ is the  network that results from deleting the edges $T_{\off(v)}$ and the resulting isolated vertices from \Net.

	For a reticulation $v$ in a network \Net with child $u$, let $E^{(\uparrow vu)}$ be the set of edges of~\Net that are between two vertices of $\anc(v) \cup \{u\}$.
	Recall that $\Dbar=\sum_{e\in E} \w(e)-D$.
	We define $\take(\Instance,v)$ to be the instance $\Instance'=(\Net',k,D')$ of \MAPPD with $D'=D+\Dbar$ and $k$ is unchanged.
	$\Net'$ is the network that results from \Net by deleting the edges $E^{(\uparrow vu)}$, merging all the ancestors of $v$ to a single vertex $\rho$, adding an edge $(\rho,u)$, and setting the weight $\w'((\rho,u))$ to $\w(E^{(\uparrow vu)})+\Dbar$.
	For each vertex $w\ne u$ that, in~\Net, has $t\ge 1$ parents $u_1,\dots,u_t$ in $\anc(v)$, we replace the edges $(u_1,w),\dots, (u_t,w)$ with a single edge $(\rho,w)$ that has weight $\sum_{i=1}^t \w((u_i,w))$.
	Observe that $PD_{\Net'}(X) = \PD(X) + \Dbar$.
	\Cref{fig:leave+take} depicts an example of the operations \take and \leave.
	\begin{figure}[t]
		\centering
		\begin{tikzpicture}[scale=0.8,every node/.style={scale=0.7}]
			\draw[-{Stealth[length=6pt]}] (5,8) -> (7,7);
			\draw[-{Stealth[length=6pt]}] (5,8) -> node[left] {3} (4.5,7);
			\draw[-{Stealth[length=6pt]}] (5,8) -> (3,7);
			\draw[-{Stealth[length=6pt]}] (5,8) -> node[right] {2} (6,6);
			
			\draw[-{Stealth[length=6pt]}] (3,7) -> node[left] {2} (2,6);
			\draw[-{Stealth[length=6pt]}] (3,7) -> node[left] {3} (3.75,6);
			\draw[-{Stealth[length=6pt]}] (4.5,7) -> node[right] {4} (4.5,6);
			\draw[-{Stealth[length=6pt]}] (4.5,7) -> node[above] {2} (6,6);
			\draw[-{Stealth[length=6pt]}] (4.5,7) -> (3.75,6);
			\draw[-{Stealth[length=6pt]}] (7,7) -> (6,6);
			\draw[-{Stealth[length=6pt]}] (7,7) -> (7,6);
			
			\draw[-{Stealth[length=6pt]}] (2,6) -> (2,5);
			\draw[-{Stealth[length=6pt]}] (2,6) -> (3,5);
			\draw[-{Stealth[length=6pt]}] (2,6) -> node[above] {2} (3.75,5.4);
			\draw[-{Stealth[length=6pt]}] (3.75,6) -> (3.75,5.4);
			\draw[-{Stealth[length=6pt]}] (4.5,6) -> (4.5,5);
			\draw[-{Stealth[length=6pt]}] (4.5,6) -> (6,5.4);
			\draw[-{Stealth[length=6pt]}] (6,6) -> (6,5.4);
			
			\draw[-{Stealth[length=6pt]}] (3.75,5.4) -> (3.75,4.8);
			\draw[-{Stealth[length=6pt]}] (6,5.4) -> (6,4.8);
			
			\node at (6.5,8) {(1)};
			\node at (4,6) {$v$};
			\node at (4,5.4) {$u$};
			\node at (6.3,5.9) {$w$};
			\node at (2.5,8) {$k=3$};
			\node at (2.5,7.5) {$D=28$};
			
			\path[draw] (7.5,8) -> (7.5,4.8);
		\end{tikzpicture}
		\begin{tikzpicture}[scale=0.8,every node/.style={scale=0.7}]
			\draw[-{Stealth[length=6pt]}] (5,8) -> (7,7);
			\draw[-{Stealth[length=6pt]}] (5,8) -> node[left] {3} (4.5,7);
			\draw[-{Stealth[length=6pt]}] (5,8) -> (3,7);
			\draw[-{Stealth[length=6pt]}] (5,8) -> node[right] {2} (6,6);
			
			\draw[-{Stealth[length=6pt]}] (3,7) -> node[left] {2} (2,6);
			\draw[-{Stealth[length=6pt]}] (4.5,7) -> node[left] {4} (4.5,6);
			\draw[-{Stealth[length=6pt]}] (4.5,7) -> node[above] {2} (6,6);
			\draw[-{Stealth[length=6pt]}] (7,7) -> (6,6);
			\draw[-{Stealth[length=6pt]}] (7,7) -> (7,6);
			
			\draw[-{Stealth[length=6pt]}] (2,6) -> (2,5);
			\draw[-{Stealth[length=6pt]}] (2,6) -> (3,5);
			\draw[-{Stealth[length=6pt]}] (4.5,6) -> (4.5,5);
			\draw[-{Stealth[length=6pt]}] (4.5,6) -> (6,5.4);
			\draw[-{Stealth[length=6pt]}] (6,6) -> (6,5.4);
			
			\draw[-{Stealth[length=6pt]}] (6,5.4) -> (6,4.8);
			
			\node at (6.5,8) {(2)};
			\node at (2.5,8) {$k=3$};
			\node at (6.3,5.9) {$w$};
			\node at (2.5,7.5) {$D=28$};
			
			\path[draw] (7.5,8) -> (7.5,4.8);
		\end{tikzpicture}
		\begin{tikzpicture}[scale=0.8,every node/.style={scale=0.7}]
			\draw[-{Stealth[length=6pt]}] (5,8) -> (7,7);
			\draw[-{Stealth[length=6pt]}] (5,8) -> node[right] {4} (6,6);
			\draw[-{Stealth[length=6pt]}] (5,8) -> node[right] {2} (2,6);
			\draw[-{Stealth[length=6pt]}] (5,8) -> node[right] {4} (4.5,6);
			\draw[-{Stealth[length=6pt]}] (5,8) -> node[left] {12} (3.75,5.4);
	
			\draw[-{Stealth[length=6pt]}] (7,7) -> (6,6);
			\draw[-{Stealth[length=6pt]}] (7,7) -> (7,6);
			
			\draw[-{Stealth[length=6pt]}] (2,6) -> (2,5);
			\draw[-{Stealth[length=6pt]}] (2,6) -> (3,5);
			\draw[-{Stealth[length=6pt]}] (2,6) -> node[above] {2} (3.75,5.4);
			\draw[-{Stealth[length=6pt]}] (4.5,6) -> (4.5,5);
			\draw[-{Stealth[length=6pt]}] (4.5,6) -> (6,5.4);
			\draw[-{Stealth[length=6pt]}] (6,6) -> (6,5.4);
			
			\draw[-{Stealth[length=6pt]}] (3.75,5.4) -> (3.75,4.8);
			\draw[-{Stealth[length=6pt]}] (6,5.4) -> (6,4.8);
			
			\node at (6.5,8) {(3)};
			\node at (4,5.4) {$u$};
			\node at (6.3,5.9) {$w$};
			\node at (2.5,8) {$k=3$};
			\node at (2.5,7.5) {$D=31$};
		\end{tikzpicture}
		\caption{In this figure, an example for the usage of $\leave$ and $\take$ is given.
			A hypothetical instance \Instance is given in (1).
			Here, the value of $\Dbar$ is $3$.
			In (2) the instance $\leave(\Instance,v)$, and in (3) the instance $\take(\Instance,v)$ is depicted.
			Unlabeled edges have a weight of 1.
			Observe in (3), the weight of the edge $(\rho,w)$ is 4, as $w$ has two edges from ancestors of $v$ in \Instance which have a weight of 2 each.
			The weight of $(\rho,u)$ is $12$, as in \Instance the edges of $E^{(\uparrow vu)}$ have a combined weight of 9.}
		\label{fig:leave+take}
	\end{figure}%

	Now, we are at the position to define the branching algorithm. Let $\Instance=(\Net,k,D)$ be an instance of \MAPPD.
	If $\Net$ is a phylogenetic tree, solve the instance \Instance with Faith's Algorithm~\cite{FAITH1992,Pardi2005,steel}.
	Otherwise, let $v$ be a reticulation of $\Net$.
	
	Then, recursively, solve \MAPPD on $\take(\Instance,v)$ and $\leave(\Instance,v)$.
	If either of these is a \yes-instance of \MAPPD, return \yes.
	Otherwise, return \no.

	\proofparagraph{Correctness}
	The correctness of the base case is given by the correctness of Faith's Algorithm.
	We show that if \Net contains a reticulation $v$, then \Instance is a \yes-instance of \MAPPD if and only if $\take(\Instance,v)$ or $\leave(\Instance,v)$ is a \yes-instance of \MAPPD.

	Consider any set of taxa $Y \subseteq X$.
	Firstly, we claim that if $Y \cap \off(v) = \emptyset$, then $PD_{\Net'}(Y) = \PD(Y)$, where $\Net'$ is the network in $\leave(\Instance,v)$.
	Indeed, $\Net'$ contains all the vertices and edges of \Net that have an offspring outside of $\off(v)$.
	Therefore, the set of edges affected by $Y$ is the same in both networks, and $PD_{\Net'}(Y) = \PD(Y)$.
	Secondly, we claim that if  $Y \cap \off(v) \neq \emptyset$, then $PD_{\Net'}(Y) = \PD(Y) + \Dbar$, where $\Net'$ is the network in $\take(\Instance,v)$.
	Recall that each edge $e=(u_1,u_2)$ with $u_1\ne \rho$ of $E(\Net')$ is also an edge of $\Net$ and $\w'(e)=\w(e)$.
	Further, for each edge $e=(\rho,u_2)\in E(\Net')$ with $u_2\ne u$, there are edges $e_1=(u_{i_1},u_2),\dots,e_t=(u_{i_t},u_2)$ of $E(\Net)$ with $\w'(e)=\sum_{i=1}^t \w(e_i)$.
	Now, let $Q = Q_1 \cup Q_2 \cup \{(\rho,u)\}$ be the edges of $\Net'$ that have at least one offspring in $Y$, of which edges in $Q_1$ have both endpoints in $V(\Net')\setminus\{\rho\}$, and $Q_2$ are outgoing edges of~$\rho$.
	Further, let $P = P_1 \cup P_2 \cup E^{(\uparrow vu)}$ be the edges of $\Net$ that have at least one offspring in~$Y$, of which edges in $P_1$ have both endpoints in $V(\Net')\setminus \{\rho\}$, and $P_2$ are edges with one endpoint in $\anc(v)\setminus \{v\}$ and one endpoint in $V(\Net')\setminus\{\rho\}$.
	Observe that, $Q_1 = P_1$ and $\w'(Q_1) = \w(P_1)$ since any vertex in $V(\Net')\setminus \{\rho\}$ has the same offspring in $\Net$ as in $\Net'$. 
	Further, $\w'(Q_2) = \w(P_2)$ as for each $u_2 \in V(\Net')\setminus\{\rho\}$, the total weight of edges $(u_1,u_2)$ with $u_1 \in \anc(v)\setminus \{v\}$ in $\Net$ is equal to the weight of the edge $(\rho,u_2)$ in $\Net'$.
	It follows that $PD_{\Net'}(Y) = \w'(Q_1) + \w'(Q_2) + \w'((\rho,u)) = \w(P_1) + \w(P_2) + \w(E^{(\uparrow vu)}) + \Dbar = \PD(Y) + \Dbar$.
	%
	%
	
	It follows from the above that if $Y$ is a solution for \MAPPD on \Instance (that is, $|Y|\le k$ and $\PD(Y)\ge D$), then either $Y$ is a solution for \MAPPD on $\leave(\Instance,v)$ or $Y$ is a solution for \MAPPD on $\take(\Instance,v)$.
	Conversely, if~$Y$ is a solution for \MAPPD on~$\leave(\Instance,v)$ then $Y \cap \off(e) = \emptyset$ and thus $ \PD(Y) = PD_{\Net'}(Y) \geq D$, so~$Y$ is also a solution for \MAPPD on~\Instance.
	Finally, if $Y$ is a solution for \MAPPD on $\take(\Instance,v)$ then $Y \cap \off(e) \neq \emptyset$, as otherwise $PD_{\Net'}(Y) \le PD_{\Net'}(X) - \w'((\rho,v)) =  D + 2\Dbar  - (\w(E^{(\uparrow vu)}) + \Dbar) \le D + \Dbar - 1 < D'$.
	Then $PD_{\Net'}(Y) = \PD(Y) + \Dbar$, from which it follows that $\PD(Y) \geq D' - \Dbar = D$ and $Y$ is also a solution for \MAPPD on \Instance.

	%
	%
	%
	%
	%
	%
	%

	\proofparagraph{Running Time}
	Let \Instance be an instance of \MAPPD that contains a reticulation $v$.
	The number of reticulations in \Instance is greater than the number of reticulations in $\take(\Instance,v)$ and $\leave(\Instance,v)$, because at least the reticulation, $v$, is removed and no new reticulations are added.
	Therefore, the search tree contains $\Oh(2^{\vret})$ nodes.
	It can be checked in $\Oh(m)$ time, if $\Net$ contains a reticulation.
	Faith's Algorithm takes $\Oh(k \cdot m\cdot \log(\max_\w))$ time \cite{Pardi2005,steel}.
	
	The sets $\off(v)$ and $\anc(v)$ for a vertex $v$, and $T_Y$ for a set $Y$ can be computed in $\Oh(m)$ time.
	Once $\anc(v)$ is computed, we can iterate over $E$ to find the edges that are outgoing from $\anc(v)$ and compute the total weight for an edge $(\rho,w)$ in $\Net'$ in $\Oh(m\cdot \log(\max_\w))$ time, which is also the time needed to compute $\w((\rho,u))$ which needs \Dbar and the weight of $E^{(\uparrow vu)}$.
	Therefore, the instances $\take(\Instance,v)$ and $\leave(\Instance,v)$ can be computed in $\Oh(m\cdot \log(\max_\w))$ time.
	
	Thus, a solution for \MAPPD can be computed in $\Oh(2^{\vret} \cdot k \cdot m\cdot \log(\max_\w))$ time.
\end{proof}

Bordewich et al. showed that \MAPPD can be solved in polynomial time on level-1 networks \cite{bordewich}.
We extend this result by showing that \MAPPD is fixed-parameter tractable with respect to treewidth.
We note that for every network, the number of reticulations is at least the level, and the level plus one is at least the treewidth.

\begin{theorem}
	\label{thm:tw}
	\MAPPD can be solved in $\Oh(9^\tw \cdot \tw \cdot k^2 \cdot m  \cdot \log D)$ time.
\end{theorem}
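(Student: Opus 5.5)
We give a dynamic program over a nice tree decomposition of the underlying undirected graph of~\Net, with introduce-edge nodes, of width~$\tw$ and $\Oh(\tw\cdot m)$ nodes. Such a decomposition can be computed with a constant-factor approximation in $2^{\Oh(\tw)}\cdot n$ time, which only changes the base of the exponent. Since this is dominated in the analysis, we assume it is given. The guiding idea is that the all-paths diversity of a solution $Y$ is determined by a labelling of the vertices: a vertex $v$ is \emph{active} if $\off(v)\cap Y\neq\emptyset$. An edge $(u,v)$ contributes $\w((u,v))$ exactly when $v$ is active. The activity labelling is characterized locally: a leaf is active iff it is in $Y$, and an internal vertex is active iff at least one of its children is active. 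Every parent of an active vertex is then automatically active.

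The DP state at a bag $B_t$ assigns each vertex of $B_t$ one of three labels. These are \emph{inactive}; \emph{active and already certified}, meaning an active child has already been introduced via an edge in the processed subgraph; and \emph{active but not yet certified}. These are the three labels giving $3^{|B_t|}$ states. For each state $f$ and each $j\in[k]_0$, we store the maximum total weight of processed edges $(u,v)$ with $v$ active, over all labellings of forgotten vertices and choices of forgotten leaves in $Y$ with $|Y\cap\text{(processed leaves)}|=j$. The labelling must be consistent with $f$, and every forgotten vertex must be certified-consistent: inactive with all children inactive, or active with an active child.

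Transitions are as follows. At an introduce-vertex node, a new vertex gets any label except ``certified''. If it is a leaf in state active, we increment $j$; a leaf is treated as certified upon being active. At an introduce-edge node $(u,v)$, we reject the state if $v$ is active and $u$ is inactive. We add $\w((u,v))$ if $v$ is active, and upgrade $u$ to certified if $v$ is active. This upgrade is the mildly subtle step: a state in which $u$ is uncertified must also be reachable from the predecessor state in which $u$ was already certified or not. We handle it by taking the max over predecessor labels $\{$certified, uncertified$\}$ for $u$. At a forget node, we only keep states where the forgotten vertex is inactive or certified. At a join node, inactive labels must agree on both sides. For active vertices, the certified flag is the OR of the two sides, and sizes $j_1+j_2=j$ add with weights summing, since each edge is introduced exactly once. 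The answer is read at the root bag (empty) by checking whether some $j\le k$ has value $\ge D$.

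The main obstacle, and the source of the $9^\tw$, is the join node. Naively combining triples of states costs $3^{\tw}\cdot 3^{\tw}$ pairs only if we enumerate them all. Per vertex, the compatible (left, right) label pairs for an active target are (cert,cert), (cert,unc), (unc,cert), and (unc,unc), plus (inactive,inactive). Enumerating these pairs of child states gives at most $9^{\tw+1}$ combinations. Combining the $j$-dimensions by max-plus convolution costs $k^2$. Weights are bounded by $D$ after capping at $D$, so each arithmetic operation costs $\Oh(\log D)$. Introduce-edge nodes contribute similarly small factors. Multiplying gives the $\Oh(9^\tw\cdot\tw\cdot k^2\cdot m\cdot\log D)$ bound claimed in \Cref{thm:tw}.

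Correctness is proved by the standard invariant argument, in two directions. Every solution $Y$ induces its activity labelling, which survives all transitions and yields value $\PD(Y)$. Conversely, every accepted table entry corresponds to a set $Y$ and a consistent labelling of all vertices. By the local characterization, that labelling is exactly the activity labelling, so the stored value equals $\PD(Y)$. The local characterization holds because in a DAG every active vertex has a path to $Y$ via certified children.
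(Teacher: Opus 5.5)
Your DP is a legitimate alternative to the paper's, but one step fails as written: the leaf counter $j$. You increment $j$ at introduce-vertex nodes and set $j=j_1+j_2$ at join nodes. In a nice tree decomposition every vertex is forgotten exactly once, but it may be introduced several times. If a leaf $x$ lies in a join bag $Q_t$, then $x$ is introduced separately below both children of $t$. When $x$ is active it is therefore counted in both $j_1$ and $j_2$. A solution $Y$ with $|Y|=k$ containing such a leaf is then recorded under some $j>k$ and discarded, so the algorithm can answer \no on a \yes-instance. Your justification ``each edge is introduced exactly once'' is why the \emph{weights} add correctly, but it does not cover vertex-based counts. The fix is easy, and there are three options:
\begin{itemize}
\item count a leaf when its unique incoming edge is introduced, which is what the paper does with $s$ in its introduce-edge rule;
\item count it at its (unique) forget node;
\item subtract the number of active leaves of $Q_t$ at each join.
\end{itemize}
Also drop the remark about a constant-factor approximate decomposition. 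It would turn $9^{\tw}$ into $9^{c\cdot \tw}$, so, as the paper implicitly does, you must assume a decomposition of width $\tw$ is given.

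With that repaired, your route genuinely differs from the paper's. The paper does not track exact activity. It guesses an edge set $F$ and only requires that every selected edge whose head has been forgotten is continued by a selected out-edge of that head. Its colours mean: red is a selected in-edge but no out-edge yet; green is a leaf or a selected out-edge; black is untouched. Such an $F$ satisfies $F\subseteq E_Y$ for the leaves $Y$ it reaches, and $E_Y$ itself is feasible. Maximization therefore yields $\max \w(F)=\max \PD(Y)$, and the paper never has to force every edge into an active vertex to be taken.

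You instead pin down the activity labelling exactly. At introduce-edge nodes you enforce that a parent of an active vertex is active, and at forget nodes you require that an active vertex has an active child. Any accepted root entry therefore comes from exactly the activity labelling of its leaf set $Y$, with value exactly $\PD(Y)$. A side benefit is at join nodes: since inactivity must agree on both sides, only $5$ label pairs per vertex are compatible. Your analysis thus actually gives $5^{\tw}$ instead of $9^{\tw}$, whereas all $9$ pairs are admissible in the paper's relaxation.
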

We first provide a sketch of the main ideas here.

We aim to find a set of edges $E'$ that have an overall weight of at least $D$ and that are incident with at most $k$ leaves.
Further, for each edge $e=(u,v)\in E'$ we require that either~$v\in X$ or there is an edge $(v,w)\in E'$.
In the algorithm, which is a dynamic program over a nice tree decomposition, we index feasible partial solutions by a 3-coloring of the vertices.
At a given node of the tree decomposition, a vertex $v$ is colored:
\begin{itemize}
	\item red, if it is still mandatory that we select an outgoing edge of $v$ (because we have selected an incoming edge of $v$),
	\item green, if we can select incoming edges of~$v$ and do not need to select an outgoing edge of~$v$ (because $v$ is a leaf or we have already selected an outgoing edge of $v$),
	\item black, if we have to not yet selected an edge incident with~$v$ (such that only the selection of an incoming edge of~$v$ makes the selection of an outgoing edge of $v$ necessary).
\end{itemize}
We introduce each leaf as a green vertex and the other vertices as black vertices.
In order to consider only feasible solutions, a vertex must be green or black when it is forgotten.
The most important step of the algorithm is in the introduction of an edge, where colors may be adjusted depending on whether or not the new edge is included in $E'$.

For the proof, we require the notion of a \emph{nice tree decomposition}.
\begin{definition}[Nice tree decomposition, treewidth]
	\label{def:tw}
	A tree decomposition of a graph~$G$ consists of a rooted tree~$T$ and, for every node $t \in T$ a bag $Q_t \subseteq V(G)$
	such that every edge of~$G$ is contained in some bag and such that all the nodes of~$T$ whose bags contain any particular vertex~$v$ is connected.
	
	A tree decomposition is called \emph{nice} \cite{CyganTreeDecomp} if
	the bags of the root and all leafs of~$T$ are empty,
	and every non-leaf bag $Q_t$ has one of the following types
	\begin{description}
		\item[introduce vertex bag:] $t$ has only one child~$t'$ and $Q_{t'} = Q_t \setminus \{v\}$ for some vertex~$v$ which is said to \emph{be introduced} at~$t$;
		\item[introduce edge bag:] $t$ has only one child~$t'$ and $Q_{t'} = Q_t$ and $t$ is labeled with an edge $e \subseteq Q_t$ which is said to \emph{be introduced} at~$t$.
		\item[forget bag:] $t$ has only one child $t'$ and $Q_{t'} = Q_t \cup \{v\}$ for some vertex~$v$;
		\item[join bag:] $t$ has exactly two children $t_1, t_2$ and $Q_t = Q_{t_1} = Q_{t_2}$;
	\end{description}
	Every edge of~$G$ must be introduced exactly once, and we may assume this happens as high in~$T$ as possible,
	i.e., just before some endpoint of that edge is forgotten.
	For a node $t\in T$, we denote by $G_t = (V_t, E_t)$ the subgraph
	of~$G$ that contains exactly those vertices and edges which are introduced at~$t$ or any descendant of~$t$.
	Note that at every join bag the graph $G_t[Q_t]$ is edgeless.
\end{definition}
\begin{proof}[Proof of \Cref{thm:tw}]
	We define a dynamic programming instance over a nice tree-de\-com\-po\-si\-tion~$T$ of the $X$-network $\Net=(V,E,\w)$ in which vertices are introduced without incident edges and all edges are induced exactly once.
	
	\proofparagraph{Definition of the Table}
	Let $\Instance=(\Net,k,D)$ be an instance of \MAPPD and let~$T$ be a nice tree decomposition of  $\Net$.
	We index solutions by a node $t \in T$, a partition $R\cup G\cup B$ of $Q_t$, and a non-negative integer $s\in [k]_0$.
	For a set of edges~$F \subseteq E_t$, we call a vertex $u \in V_t$ \emph{\emph{green} with respect to $F$} if $u$ is a leaf or has an outgoing edge in $F$. We call $u$ \emph{\emph{red} with respect to $F$} if $u$ is not a leaf and has an incoming but no outgoing edge in $F$. Finally, we call $u$  \emph{\emph{black} with respect to $F$} if $u$ is not a leaf and has no incident edges in $F$.
	For a node~$t\in T$, a partition $R\cup G\cup B$ of $Q_t$ and an integer $s$, we call a set of edges~$F \subseteq E_t$ \emph{feasible for~$t,R,G,B,s$}, if all the following conditions hold:
	\begin{enumerate}
		\item[(F1)]\label{it:safeness} If $(u,v)$ is an edge in $F$ and $v\notin Q_t$, then $v$ is a leaf of $\Net$ or $v$ has an outgoing edge in $F$.
		\item[(F2)]\label{it:red} The vertices $R\subseteq Q_t$ are red with respect to $F$.
		\item[(F3)]\label{it:green} The vertices $G\subseteq Q_t$ are green with respect to $F$. 
		\item[(F4)]\label{it:black} The vertices $B\subseteq Q_t$ are black with respect to $F$.
		\item[(F5)]\label{it:taxa} The number of leaves in $\Net$ with an incoming edge in $F$ is $s$.
	\end{enumerate}
	Further, we define \Sstar to be the set of all sets~$F$ that are feasible for $t,R,G,B,s$.
	%

	We define a dynamic programming algorithm over a nice tree decomposition~$T$.
	In a table entry of $\DP[t,A,R,G,B,s]$, we store the greatest weight $\w(F)$ of a set $F \subseteq E_t$ that is feasible for~$t,R,G,B,s$. If there is no feasible $F$, then we store a large negative value.
	For our purposes a value of $-m\cdot \max_\w-1$ will suffice, as even if every edge would be chosen, the entry at the root of the tree decomposition is still negative.
	Let~$r$ be the root of the nice tree-decomposition $T$.
	We claim that $\DP[r,\emptyset,\emptyset,\emptyset,k]$ stores the greatest phylogenetic diversity that can be preserved with a budget of $k$.
	Indeed, for any set $Y$ of $k$ taxa, the set of affected edges $E_Y$ is feasible for $r,\emptyset,\emptyset,\emptyset,k$ and has weight $\PD(Y)$. Conversely if $F$ is feasible for~$r,\emptyset,\emptyset,\emptyset,k$, then letting $Y$ be the set of leaves in $\Net$ with an incoming edge in $F$, we have that $E_Y \supseteq F$ and so $\PD(Y)\geq \w(F)$.
	Hence, we can return \yes if $\DP[r,\emptyset,\emptyset,\emptyset,k]\ge D$ and \no otherwise.

	Now we have everything we need to define the dynamic programming algorithm.
	In the calculations that follow, any time a value $\DP[t,R,G,B,s]$ is called for which $\DP[t,R,G,B,s]$ is not defined (in particular, if $s < 0$), we take $\DP[t,R,G,B,s]$ to be $-m\cdot \max_\w-1$.

	\proofparagraph{Leaf Node}
	For a leaf~$t$ of~$T$ the sets~$Q_t$ and $E_t$ are empty.
	So if $s = 0$, we trivially store
	\begin{eqnarray} \label{tw:leaf}
	\DP[t,\emptyset,\emptyset,\emptyset,s] &=& 0.
	\end{eqnarray}
	Otherwise, we store $\DP[t,R,G,B,s] = -m\cdot \max_\w-1$.

	\proofparagraph{Introduce Vertex Node}
	Suppose now that~$t$ is an \emph{introduce vertex node}, i.e.,~$t$ has a single child~$t'$,~$Q_t = Q_{t'} \cup \{v\}$, and $v$ is an isolated vertex in $\Net_t$.
	If either $v\in B$, or $v\in G$ and $v$ is a leaf, we store
	\begin{eqnarray} \label{tw:insertvertex}
	\DP[t,R,G,B,s] &=& \DP[t',R,G\setminus\{v\},B\setminus\{v\},s].
	\end{eqnarray}
	Otherwise, we store $\DP[t,R,G,B,s] = -m\cdot \max_\w-1$.

	\proofparagraph{Introduce Edge Node}
	Suppose now that~$t$ is an \emph{introduce edge node}, i.e.,~$t$ has a single child~$t'$,~$Q_t = Q_{t'}$, and $e=(v,w)$ is introduced at $t'$.
	The algorithm must decide whether $e$ is an affected edge, or not.
	If $v\notin G$ or $w\in B$,  $e$ can not be an affected edge and so we store $\DP[t,R,G,B,s] = \DP[t',R,G,B,s]$.
	Otherwise, we store
	\begin{eqnarray} \label{tw:insertedge}
	\DP[t,R,G,B,s] &=& \max\{
	\DP[t',R,G,B,s];
	\max_{R',G',B'}\DP[t',R',G',B',s'] + \w(e)
	\},
	\end{eqnarray}
	where the second maximum is over all possible partitions $R'\cup G'\cup B'$ of $Q_t$, such that $S\setminus\{v,w\} = S'\setminus\{v,w\}$ for all $S \in \{R,G,B\}$ (i.e. the two partitions agree on $Q_T\setminus\{v,w\}$), and such that if $w \in G'$ then $w \in G$, and $w \in R$ otherwise.
	Here $s' = s-1$ if $w$ is a leaf and~$s' = s$ if not.

	\proofparagraph{Forget Node}
	Suppose now that~$t$ is an \emph{forget node}, i.e.,~$t$ has a single child~$t'$ and~$Q_t = Q_{t'} \setminus \{v\}$.
	We store
	\begin{eqnarray} \label{tw:forget}
	\DP[t,R,G,B,s] =  \max \{
	\DP[t',R,G,B\cup\{v\},s];
	\DP[t',R,G\cup\{v\},B,s]\}.
	\end{eqnarray}

	\proofparagraph{Join Node}
	Suppose now that~$t\in T$ is an \emph{join node}, i.e.,~$t$ in $T$ has two children~$t_1$ and~$t_2$ with~$Q_t = Q_{t_1} = Q_{t_2}$.
	We call two partitions $R_1\cup G_1\cup B_1$ and $R_2\cup G_2 \cup B_2$ of $Q_t$ \emph{qualified} for $R\cup G\cup B$ if $R=(R_1\cup R_2) \setminus (G_1 \cup G_2)$ and $G = G_1\cup G_2$ (and consequently $B = B_1 \cap B_2$). See~\Cref{fig:joinColorings}.
	We store
	\begin{eqnarray} \label{tw:join}
	\DP[t,R,G,B,s] &=& \max_{(\pi_1, \pi_2) \in{\cal Q},s'} ~ \DP[t_1,R_1,G_1,B_1,s'] + \DP[t_2,R_2,G_2,B_2,s-s'],
	\end{eqnarray}
	where ${\cal Q}$ is the set of pairs of partitions $\pi_1 = R_1\cup G_1\cup B_1$ and $\pi_2 = R_2\cup G_2 \cup B_2$ that are qualified for $R,G,B$ and $s' \in [s]_0$.
	
	\newcommand{\darkgreen}{green!50!black}
	\begin{figure}
	\begin{center}
	\begin{tabular}{c|c|c|c|}\cline{2-4}
	& $B_1$ & \color{red}{$R_1$} & \color{\darkgreen}{$G_1$}\\\hline
	\multicolumn{1}{|l|}{$B_2$}  & $B$ & \color{red}{$R$} & \color{\darkgreen}{$G$} \\\hline
	\multicolumn{1}{|l|}{\color{red}{$R_2$}}  & \color{red}{$R$} & \color{red}{$R$} & \color{\darkgreen}{$G$} \\\hline
	\multicolumn{1}{|l|}{\color{\darkgreen}{$G_2$}}  & \color{\darkgreen}{$G$} & \color{\darkgreen}{$G$} & \color{\darkgreen}{$G$} \\\hline
	\end{tabular} 
	
		\caption{This table shows the relationship between the partitions $R_1\cup G_1\cup B_1$, $R_2\cup G_2 \cup B_2$ and $R\cup G\cup B$ in the case of a join node, when $R_1\cup G_1\cup B_1$ and $R_2\cup G_2 \cup B_2$ are qualified for $R\cup G\cup B$. The table shows which of the set $R,G$, or~$B$ an element $v \in Q_t$ will be in, depending on its membership in $R_1,G_1,B_1,R_2,G_2$, and $B_2$. For example if $v \in R_1$ and $v \in B_2$, then $v \in R$.}
		\label{fig:joinColorings}
	\end{center}
	\end{figure}

	\proofparagraph{Correctness}
	Let $t$ be a node of the nice tree-decomposition $T$, $s$ an integer, and $R\cup G\cup B$ a partition of $Q_t$.
	We show that the value of $\DP[t,R,G,B,s]$ is correct, for each type of node individually.
	As we have already argued that $\DP[r,\emptyset,\emptyset,\emptyset,k]$ stores the greatest phylogenetic diversity that can be preserved with a budget of $k$, this is sufficient to prove the correctness of the algorithm.

	If $t$ is a \emph{leaf node}, then $Q_t$ and $E_t$ are empty.
	Consequently, $R=G=B=\emptyset$ and any feasible set $F$ is also empty.
	Therefore, we also conclude with (F5) that $s=0$ and $\w(F)=0$ for any $F \in \Sstar$.
	Hence, the stored value by Equation (\ref{tw:leaf}) is correct.

	Let $t$ be an \emph{introduce vertex node} with child $t'$ and $Q_t=Q_{t'}\cup\{v\}$.
	Then, the graph~$\Net_t$ is the graph $\Net_{t'}$ with an additional isolated vertex $v$.
	Thus, $v$ is black with respect to any $F \subseteq E_t$, unless $v$ is a leaf in which case $v$ is green.
	Thus, there is no feasible set $F$ for $\Sstar$ unless either $v \in B$ or $v \in G$ and $v$ is a leaf.
	Assuming this is the case, any set $F \subseteq E_t$ is feasible for $t,R,G,B,s$ if and only if it is feasible for $t',R,G\setminus\{v\},B\setminus\{v\},s$.
	Hence, we store the right value.

	Let $t$ be a \emph{introduce edge node} with child $t'$ and $Q_t=Q_{t'}$ and $\Net_t-e=\Net_{t'}$. Define $e:=(v,w)$.
	Clearly, every set $F\subseteq E_t\setminus\{e\}$  is feasible for $t,R,G,B,s$ if and only if $F$ is also feasible for $t',R,G,B,s$, because then $\Net_t[F]$ is isomorphic to $\Net_{t'}[F]$.
	%
	Now consider a set~$F\subseteq E_t$ with $(v,w) \in F$.
	Note that $v$ is green, and $w$ cannot be black, with respect to any such set $F$. Therefore such an $F$ only exists if $v \in G$ and $w \notin B$.
	Now let $F' = F\setminus\{(v,w)\}$. 
	The colors of vertices $u$ that are not $v$ or $w$ are the same with respect to $F$ as with respect to $F'$.
	Observe that if $w$ is green with respect to $F$ if and only if it is green with respect to~$F'$ (as it has the same outgoing edges in $F$ and $F'$).
	If $w$ is red or black with respect to~$F'$, then it is red with respect to~$F$ (as it has an incoming edge and no outgoing edge).
	On the other hand, $v$ could be any color with respect to $F'$, but is necessarily green with resepct to $F$.
	%
	%
	Likewise, we can show the correctness of $s'$.
	Thus, the correct value is stored.

	Let $t$ be a \emph{forget node} with child $t'$ and $Q_t=Q_{t'}\setminus\{v\}$.
	We show that a set $F$ is feasible for $t,R,G,B,s$ if and only if $F$ is also feasible for $t',R,G,B\cup\{v\},s$ or for $t',R,G \cup \{v\},B,s$.
	It then follows that $\DP[t,R,G,B,s]$ stores the correct value.

	To this end, let $F$ be a set of edges that is feasible for $t,R,G,B,s$.
	Since $v \notin Q_t$ and condition (F1) holds, $v$ either has an outgoing edge in $F$ or is a leaf or does not have any incoming edge in $F$. It follows that $v$ is green or black with respect to $F$. From this it is straightforward to confirm that $F$ is feasible for $t',R,G,B\cup\{v\},s$ or for $t',R,G \cup \{v\},B,s$.
	Conversely, if $F$ is feasible for $t',R,G,B\cup\{v\},s$ or for $t',R,G \cup \{v\},B,s$, then $F$ also satisfies condition (F1) for $t,R,G,B,s$ (in particular, the condition is satisfied for $v$ as $v$ is green or black with respect to $F$).
	It is straightforward to confirm that $F$ satisfies the remaining conditions to be feasible with respect to $t,R,G,B,s$.

	Finally, let $t$ be a \emph{join node} with children $t_1$ and $t_2$.
	We show that there is an $F\in \Sstar$ if and only if
	there are partitions $R_1 \cup G_1 \cup B_1$ and $R_2 \cup G_2 \cup B_2$, qualified for $R,G,B$
	and~$s'\le s$ such that
	there are $F_1\in \SstarPrime{t_1,R_1,G_1,B_1,s'}$ and $F_2\in \SstarPrime{t_2,R_2,G_2,B_2,s-s'}$
	with $F_1\cup F_2 = F$.

	First, let $F$ be feasible for $t,R,G,B,s$.
	Let $F_i$ be the subset of edges of $F$ that occur in~$\Net_{t_i}$ for $i\in\{1,2\}$.
	Because every edge is introduced exactly once, the sets $F_1$ and $F_2$ are disjoint.
	We define $s'$ to be the number of edges of $F_1$ that are incident with a leaf.
	Further, we define $R_i$ to be the set of vertices in $Q_t$ that are not leaves in \Net and have an incoming but no outgoing edge in $F_i$ for $i\in \{1,2\}$.
	Similarly, we define $G_i$ to be the set of vertices in $Q_t$ that are leaves or have an outgoing edge in $F_i$, and we define $B_i$ to be the set of vertices in~$Q_t$ that are not leaves and have no incident vertices in $F_i$, for $i\in \{1,2\}$.

	We show that $F_1\in \SstarPrime{t_1,R_1,G_1,B_1,s'}$ and $F_2\in \SstarPrime{t_2,R_2,G_2,B_2,s-s'}$.
	Let $e=(u,v)$ be an edge in $F_i$ such that $v$ is not a leaf and $v\not\in Q_{t_i}$ for~$i\in \{1,2\}$.
	Because $Q_t=Q_{t_i}$, we conclude that $v\not\in Q_{t}$.
	Thus, with (F1) we conclude that~$v$ has an outgoing edge $e^*$ in $F$.
	Because $v$ is not in $Q_t$ and $v$ is in $V_{t_i}$, the vertex $v$ is not in~$V_{t_{3-i}}$.
	Thus, the edge $e^*$ is also in $F_i$ and so $F_i$ satisfies (F1).
	By construction, $F_i$ satisfies (F2) to (F4).
	Also, $F_1$ satisfies (F5), by definition of $s'$.
	As $F_2$ contains the edges of $F$ that are not in $F_1$, there are $s-s'$ edges in $F_2$ that are incident with a leaf.
	Hence, $F_2$ holds (F5) and so, $F_1\in \SstarPrime{t_1,R_1,G_1,B_1,s'}$ and~$F_2\in \SstarPrime{t_2,R_2,G_2,B_2,s-s'}$.
	
	It remains to show that~$R_1\cup G_1 \cup B_1$ and $R_2\cup G_2 \cup B_2$ are qualified for $R,G,B$.
	If $v$ is a leaf, we conclude that $v$ is in $G_1,G_2$ and $G$.
	Further, $v$ has an outgoing edge in $F$ if and only if $v$ has an outgoing edge in $F_1$ or $F_2$.
	We conclude that $G=G_1\cup G_2$.
	For each $v\in R$ there is an incoming edge $e$ but no outgoing edge in $F$.
	Consequently, if $v\in R$ then $v\not\in G_1\cup G_2$.
	Further, $e \in F_i$ for some $i \in \{1,2\}$ but there is no outgoing edge for $v$ in $F_i$.
	Thus, $v\in R_i$ and so~$R\subseteq (R_1\cup R_2) \setminus (G_1 \cup G_2)$.
	If $v\in (R_1\cup R_2) \setminus (G_1 \cup G_2)$ then there is an incoming edge for $v$ in $F_1$ or $F_2$ and therefore in $F$, but no outgoing, and so $v\in R$. 
	We conclude that $R=(R_1\cup R_2) \setminus (G_1 \cup G_2)$.
	From $G=G_1\cup G_2$ and $R=(R_1\cup R_2) \setminus (G_1 \cup G_2)$ we have that $B = B_1 \cap B_2$, and so~$R_1\cup G_1 \cup B_1$ and $R_2\cup G_2 \cup B_2$ are qualified for $R,G,B$, as required.

	For the converse, assume that there are partitions~$R_1\cup G_1 \cup B_1$ and $R_2\cup G_2 \cup B_2$ qualified for $R$ and $G$,
	and $s'\le s$ such that there are $F_1\in \SstarPrime{t_1,R_1,G_1,B_1,s'}$ and $F_2\in \SstarPrime{t_2,R_2,G_2,B_2,s-s'}$.
	We show that $F:=F_1\cup F_2$ is feasible for $t,R,G,B,s$.
	Let $e=(u,v)$ be an edge in $F$ such that~$v$ is not a leaf and $v\not\in Q_t$. 
	Without loss of generality, $e\in F_1$.
	Because $Q_t=Q_{t_1}$, we conclude that $v\not\in Q_{t_1}$.
	Thus, $v$ has an outgoing edge in $F_1$ and therefore in $F$ and $F$ holds (F1).
	Let~$v$ be a vertex of~$Q_t$.
	If~$v\in R=(R_1\cup R_2) \setminus (G_1 \cup G_2)$, then there is an incoming edge $e$ for $v$ in $F_1$ or $F_2$ but~$F_1$ and~$F_2$ do not contain an outgoing edge for $v$. Consequently, $v$ has an incoming edge in~$F$ but no outgoing edge.
	Analogously, we can see that if $v\in G=G_1\cup G_2$ then $v$ has an outgoing edge in $F$ and if $v\in B=B_1\cap B_2$, then~$v$ is not incident with any edge in $F$.
	Because that covers all options, $F$ holds (F2) to (F4).
	Let $\hat E_1$ and $\hat E_2$ be the edges of $F_1$ and $F_2$, respectively, that are incident with a leaf. Thus, $|\hat E_1|=s'$ and $|\hat E_2|=s-s'$.
	Because every edge is introduced exactly once, the sets $F_1$ and $F_2$ and therefore $\hat E_1$ and $\hat E_2$ are disjoint.
	Consequently, $\hat E_1\cup \hat E_2$ contains $s$ edges that are incoming at leaves.
	Thus, $F$ holds (F5).
	Hence, the value that is stored with Equation (\ref{tw:join}) is correct.

	\proofparagraph{Running Time}
	The tree decomposition $T$ has $\Oh(m)$ nodes. Note that comparing and adding numbers takes $\Oh(\log D)$ time, as we may assume we do not deal with numbers larger than~$\Oh(D)$.

	For a leaf node, an introduce vertex nodes and a forget node, the value of each entry~$\DP[t,R,G,B,s]$ can clearly be computed in linear time in $\tw\cdot \log D$.
	Checking the conditions in an introduce edge node requires checking the colors of $v$ and $w$.
	Further, the partition~$R'\cup G'\cup B'$ agrees on all vertices but $v$ and $w$.
	Therefore, we can also compute the values of entries in time linear in $\tw \cdot \log D$ for an introduce edge node.
	Altogether, we can compute the value of all entries of nodes that are not join nodes in~$\Oh(3^\tw \cdot \tw  \cdot k \cdot m  \cdot \log D)$~time.

	For the computation of a join node $t$, we first store $-m\cdot \max_\w-1$ in each entry $\DP[t,R,G,B,s]$ for partitions $R\cup G\cup B$ of $Q_t$ and an integer $s\in [k]_0$.
	Then, iterate over partitions $R_1\cup G_1\cup B_1$ and~$R_2\cup G_2\cup B_2$ of $Q_t$.
	From the partitions~$R_1\cup G_1\cup B_1$ and~$R_2\cup G_2\cup B_2$, compute the implicitly defined partition $R\cup G\cup B$ of $Q_t$ in $\Oh(\tw)$ time. (See \Cref{fig:joinColorings}).
	Iterate over $s\in [k]_0$ and $s'\in [s]_0$.
	If $\DP[t_1,R_1,G_1,B_1,s'] + \DP[t_2,R_2,G_2,B_2,s-s']$ exceeds the value of $\DP[t,R,G,B,s]$, then replace it. Otherwise let $\DP[t,R,G,B,s]$ stay unchanged. 
	After the iterations, we have computed the values of $\DP[t,R,G,B,s]$ for all partitions $R\cup G\cup B$ of $Q_t$ and integer $s\in [k]_0$.
	Therefore, $\Oh(9^{\tw} \cdot \tw \cdot k^2  \cdot \log D)$ time is required to compute all values of entries for each join node $t$.

	Hence, a solution for \MAPPD can be computed in $\Oh(9^\tw \cdot \tw \cdot k^2 \cdot m  \cdot \log D)$ time.
\end{proof}

\section{A Kernelization for Reticulation-Edges}
\label{sec:kernel}
In \Cref{thm:ret}, we presented a branching algorithm that proves that \MAPPD is \FPT when parameterized by the number of reticulations, $\vret$.
In this section, we show that \MAPPD admits a kernelization algorithm of polynomial size with respect to~\eret.
Recall that~\eret is the number of reticulation-edges which need to be removed such that~\Net is a tree.
Observe~$\eret \ge \vret$ and in binary networks they are equal.

We first show how to bound the number of vertices and edges by a polynomial in~\eret, without giving any such bound on the weights of the edges. Afterwards, we will apply a result from \cite{etscheid,frank1987} to get an appropriate bound on the edge weights.

\begin{theorem}
	\label{thm:vkernel}
	Given an instance $\Instance = (\Net, k, D)$ of \MAPPD,
	we can compute an equivalent instance $\Instance^* = (\Net^* = (V^*,E^*,\w^*), k^*, D^*)$ of \MAPPD
	with $|V^*|,|E^*| \in \Oh(\eret^2)$ and $k^* \in \Oh(\eret)$,
	in~$\Oh(m^2 \log^2 m \cdot \log \max_\w)$~time.
\end{theorem}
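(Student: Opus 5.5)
We prove \Cref{thm:vkernel} with a small set of reduction rules, each applied exhaustively in polynomial time and each preserving equivalence. The overall strategy is to turn the network into a ``tree-like skeleton'' whose non-tree parts are charged to reticulation edges.

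\textbf{Rule 1: contract paths.} If a non-root vertex has in-degree $1$ and out-degree $1$, we suppress it. The two incident edges merge into one edge whose weight is the sum of their weights. This does not change $\PD(Y)$ for any $Y$, and it does not change \eret.

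\textbf{Rule 2: compress pendant trees.} Call a vertex $v$ a \emph{tree-top} if $\desc(v)$ contains no reticulation and every ancestor edge of $v$ leads into a reticulation-containing region. The subnetwork below $v$ is then a pendant tree. For such a tree we precompute, using Faith's greedy algorithm, the values $g_v(j)$ = the best PD achievable inside the pendant tree (including the edge into $v$) with exactly $j$ leaves. By greedy optimality the marginal gains $g_v(j)-g_v(j-1)$ are non-increasing. Hence the pendant tree can be replaced by a \emph{caterpillar-free star-like gadget}: a path of leaves hanging from $v$, with edge weights equal to the successive marginal gains, arranged so that any $j$ leaves chosen greedily realise $g_v(j)$. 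This step does not yet bound the size.

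\textbf{Rule 3: cap leaves per gadget.} Globally, take the $k$ best marginal gains over the whole network. There are at most $\Oh(\eret)$ tree-tops, because after Rule 1 each pendant tree hangs from a vertex on the ``reticulated core''. The core is a graph with cyclomatic number \eret, and once suppressed it has $\Oh(\eret)$ vertices and edges. The key observation for the budget is that at most $\Oh(\eret)$ leaves of the solution can be ``interacting''. Specifically, we argue that there is an optimal solution using at most \eret leaves whose choice interacts with reticulations: one leaf per reticulation edge suffices to decide which core edges are affected. All remaining budget is spent purely greedily on tree parts.

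\textbf{Rule 4: precompute the greedy tail.} We therefore guess nothing. Instead, if $k>c\cdot\eret$, we precompute the greedy contribution: we take the top $k - c\cdot\eret$ marginal gains that are ``safe''. Those leaves are deleted, their gain is subtracted from $D$, and $k$ is decreased. Justifying this exchange argument is the main obstacle. The exchange must show that a greedy tail leaf can always be swapped into an optimal solution without loss, in the presence of reticulation-shared ancestry. This is where I expect most of the work.

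\textbf{Rule 5: bound leaves per core vertex.} After Rule 4, $k^*\in\Oh(\eret)$. In each gadget we keep only the first $k^*$ leaves, since more can never be used. This leaves $\Oh(\eret)$ gadgets with $\Oh(\eret)$ leaves each, giving $\Oh(\eret^2)$ vertices and edges.

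\textbf{Running time.} Each rule is applied $\Oh(m)$ times. Each application requires greedy computations with sorting and arithmetic on $\log\max_\w$-bit numbers, which gives the stated $\Oh(m^2\log^2 m\cdot\log\max_\w)$ bound.
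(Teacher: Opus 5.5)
Your proposal does not prove the theorem. The two steps that carry all the difficulty are, respectively, false and missing. (Your Rules~1 and~2 are harmless: read as a star whose leaf weights are the greedy increments, the gadget is correct.)

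\emph{First gap: the count in your Rule~3 is false.} Pendant trees do not only attach to the $\Oh(\eret)$ vertices of the suppressed core. They also attach to vertices in the \emph{interior} of core paths (the paper's side-vertices), and these survive your Rule~1 because each has out-degree at least~$2$. For example, take $\rho\to u_1\to\dots\to u_N\to r$, an edge $(\rho,r)$, a leaf below the reticulation $r$, and a leaf $x_i$ below each $u_i$. This network has $\eret=1$ but at least $N$ tree-tops, so the $\Oh(\eret)\cdot\Oh(\eret)$ bound of your Rule~5 collapses. Faith-style compression cannot repair this. Selecting $x_i$ also covers the path edges above $u_i$, which are shared with everything below $r$, so the attachments along one path are not independent.

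The missing idea is the paper's treatment of side-paths, in three steps:
\begin{itemize}
\item If consecutive side-vertices $v,w$ have leaf children $x_v,x_w$ with $\w((v,x_v))\le\w((v,w))+\w((w,x_w))$, then $x_v$ is re-hung from $\rho$ (\Cref{rr:a>a+b}).
\item After that, consider a non-root side-path $u\to v_1\to\dots\to v_\ell\to w$. A solution using some $x_i$ with $i\ge 2$, but neither $x_1$ nor an offspring of $w$, can swap $x_i$ for $x_1$. This justifies cutting the path after $v_1$, re-hanging $v_2,\dots,v_\ell$ from $\rho$, and adding a weight-$0$ edge $(v_1,w)$ (\Cref{rr:strings}). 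This costs one extra reticulation edge per side-path.
\item On the resulting root-hanging paths, only the first $k$ leaves are usable (\Cref{rr:pathlength}).
\end{itemize}

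\emph{Second gap: the reduction of $k$ (your Rule~4) is the crux, and you leave it unproved.} As stated it is not even well defined. There are no global marginal gains in a network. The gain of the first leaf of a gadget depends on which of its ancestor edges are already covered through other leaves, and that interaction is exactly what makes \MAPPD hard. The claim that ``one leaf per reticulation edge suffices'' has no argument behind it.

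The paper bounds $k$ only after the normalization above. At that point every taxon falls into one of three groups:
\begin{itemize}
\item it lies in $B$, i.e.\ below a reticulation or a non-root core vertex, with $|B|\in\Oh(\eret)$;
\item it is the leaf of a side-vertex in $Y$, i.e.\ a side-vertex that is a child of $\rho$, with $|Y|\in\Oh(\eret)$;
\item it lies in $A\cup Z_X$, i.e.\ it hangs from $\rho$, either directly or lower on a root side-path.
\end{itemize}
If $k>|B|+|Y|$, a solution of size $k$ must contain a taxon of $A\cup Z_X$. A simple exchange then shows it may be assumed to contain $x^*$ or $y^*$, which is saved (\Cref{rr:top-strings}). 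Iterating gives $k^*\le|B^*|+|Y^*|\in\Oh(\eret)$. Only then do $|A^*|\le k^*$ and root side-paths of length at most $k^*$ yield $\Oh(\eret^2)$ vertices.
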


Throughout this section, assume that $\Instance = (\Net = (V,E,\w), k, D)$ is an instance of \MAPPD with $\rho$ being the root of \Net.
We recall that we do not assume \Net to be binary and we even allow vertices with an in- and out-degree of~1.
Here, we slightly bend our own definitions in which we required that $\w(e) > 0$ for each edge~$e$ and that each vertex either has an in-degree of~1 or an out-degree of~1.
We note that after applying all reduction rules exhaustively, we can adjust the instance to ensure these requirements are met again.
We discuss this step after describing all the reduction rules.

We apply the following reduction rules exhaustively, and each rule is applied only if none of the previous rules apply.
After any of the reduction rules let $\Net' = (V',E',w')$ denote the new network.
We assume that at any application of a reduction rule, all previous rules have been applied exhaustively.

\begin{rr}
	\label{rr:trees}
	Let $v\in V$ be a vertex with children $x$ and~$y$ that are leaves.
	Assume~$\w((v,x)) \ge \w((v,y))$.
	If $v\ne \rho$, then replace the edge~$(v,y)$ with an edge~$(\rho,y)$ of weight~$\w'((\rho,y)) = \w((v,y))$.
\end{rr}
\begin{lemma}
	\label{lem:trees}
	\Cref{rr:trees} is correct and can be applied in~$\Oh(|X| + \log\max_{\w})$~time.
\end{lemma}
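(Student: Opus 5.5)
The plan is an exchange argument showing that the map $Y\mapsto Y$ (identity on taxa, since $X$ is unchanged) preserves feasibility, with $\PD$ changing only in a controlled way. Let $e_x=(v,x)$ and $e_y=(v,y)$, and let $\Net'$ be the network after the rule. The weight of an edge not incident with $y$ is unaffected. In $\Net'$, however, $y$ is no longer an offspring of $v$ or of any proper ancestor of $v$ other than $\rho$. Hence, for any $Y$, the edge sets $E_Y$ in $\Net$ and $\Net'$ can differ only on edges in $\anc(v)\cup\{e_y\}$, and these differ only if $y\in Y$.

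First I show that $\PD_{\Net'}(Y)\le \PD(Y)$ for every $Y$, so a solution in $\Net'$ is one in $\Net$. Every edge affected by $Y$ in $\Net'$ lies above some taxon of $Y$. The paths in $\Net'$ are the paths of $\Net$ minus $(v,y)$, plus the single edge $(\rho,y)$, which inherits the weight of $(v,y)$. So $E_Y$ in $\Net'$ maps weight-preservingly into $E_Y$ in $\Net$, identifying $(\rho,y)$ with $(v,y)$: any ancestor edge of $y$ in $\Net'$ other than $(\rho,y)$ is impossible, since $\rho$ is the root.

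Conversely, let $Y$ be a solution in $\Net$. If $y\notin Y$, or if some other taxon of $Y$ lies below $v$, then $\PD_{\Net'}(Y)=\PD(Y)$. In the latter case the edges above $v$ are still affected via that taxon, and $(\rho,y)$ replaces $(v,y)$ with the same weight. The remaining case is that $y\in Y$ is the only taxon of $Y$ below $v$. Then I replace $y$ by $x$, setting $Y'=(Y\setminus\{y\})\cup\{x\}$, which is also a solution in $\Net$ because $\w(e_x)\ge\w(e_y)$. In $\Net'$, $x$ is still a child of $v$, so $Y'$ affects exactly the same edges above $v$ as $Y$ did in $\Net$, together with $e_x$ in place of $e_y$. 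Thus $\PD_{\Net'}(Y')\ge\PD(Y)\ge D$ and $|Y'|=|Y|\le k$. This exchange is the crux; I have to check carefully that no edge above $v$ loses affectedness once $y$ is detached, which holds because $x$ remains below $v$.

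Since $v\ne\rho$, the root $\rho$ stays the unique source, and acyclicity is preserved because the new edge leaves the root. The degree conditions may be temporarily violated, which the section explicitly allows. For the running time, finding two leaf children of some vertex and comparing their weights takes $\Oh(|X|)$ time to scan the leaves' parents plus $\Oh(\log\max_\w)$ for the comparison. The rewiring itself takes constant time.
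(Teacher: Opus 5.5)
Your proof is correct and takes essentially the same approach as the paper. The backward direction is the weight-preserving correspondence of affected edges giving $\PDsub{\Net'}(Y)\le\PD(Y)$, with $(\rho,y)$ standing in for $(v,y)$, and the forward direction is the exchange of $y$ for $x$. The paper swaps whenever $y\in S$ and $x\notin S$, while you swap only when $y$ is the sole taxon of $Y$ below $v$, an immaterial difference; the running-time argument (one scan of the taxa's parents plus a single weight comparison) is also the same.
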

\begin{proof}
	\proofparagraph{Correctness}
	We show first that if \Instance is a \yes-instance of \MAPPD then $\Instance' := (\Net',k,D)$ is a \yes-instance of \MAPPD.
	%
	Let $S\subseteq X$ be a solution for \MAPPD on \Instance.
	If~$y\in S$ but~$x\not\in S$, then set $S' := {(S\cup \{x\})} \setminus \{y\}$.
	Then, because $\PD(S') = \PD(S) + \w((v,x)) - \w((v,y)) \ge \PD(S)$, also~$S'$ is a solution.
	Therefore, we may assume without loss of generality that~$x \in S$ or $y \notin S$.
	If $x \notin S$ and $y \notin S$, then observe that $\PD(S) = \PDsub{\Net'}(S)$ since all edges affected by $S$ appear in both networks with the same weight. 
	Similarly, if $x \in S$ and $y \notin S$, then $\PD(S) = \PDsub{\Net'}(S)$.
	Finally, if $x \in S$ and $y \in S$, then any edge $e \in E(\Net)\setminus \{(v,y)\}$ is affected by $S$ in $\Net$ if and only if $e$ is affected by $S$ in $\Net'$ (in particular, if $y$ is an offspring of $e$ in $\Net$ then $x$ is an offspring of $e$ in $\Net'$).
	It follows that  $\PD(S) = \PDsub{\Net'}(S) - \w'((\rho,y)) + \w((v,y)) = \PDsub{\Net'}(S)$.
	Thus, $S$ is a solution for \MAPPD on $\Instance'$ in all cases.

	Conversely, observe that for any edge $e \in E(\Net')\setminus \{\rho y\}$ and any $S \subseteq X$, if $e$ is affected by $S$ in $\Net'$ then it is also affected by $S$ in $\Net$.
	Thus, $\PDsub{\Net'}(S) \le \PD(S)$, and so each solution for \MAPPD on $\Instance'$ is also a solution for \MAPPD on \Instance.
	Thus, if $\Instance'$ is a \yes-instance of \MAPPD, then so is \Instance.

	\proofparagraph{Running Time}
	In~$\Oh(|X|)$ time, we can determine whether there exists a vertex~$v$ which has two children in~$X$, by iterating over~$X$ and marking each taxon's parent.
	If such a vertex~$v$ exists, we only need to compare~$\w((v,x))$ and $\w((v,y))$ in~$\Oh(\log\max_{\w})$~time.
	%
\end{proof}

\looseness=-1
Note that \Cref{rr:trees} might create degree-2-vertices; these are handled by the next reduction rule.
\begin{rr}
	\label{rr:deg-2}
	Let $v\in V$ be a vertex with an in- and out-degree of~1.
	Let $u$ be the parent and $w$ be the child of $v$.
	Remove~$v$ and its incident edges and create an edge~$(u,w)$ of weight~$\w'((u,w)) = \w((u,v)) + \w((v,w))$.
\end{rr}
\begin{lemma}
	\label{lem:deg-2}
	\Cref{rr:deg-2} is correct and can be applied
	in~$\Oh(n + \log\max_{\w})$~time.
\end{lemma}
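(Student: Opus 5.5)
The plan is to show that suppressing $v$ leaves the phylogenetic diversity of every set of taxa unchanged. Then $\Instance$ and $\Instance' = (\Net',k,D)$ have exactly the same solutions, since $k$ and $D$ are untouched and the taxon set $X$ is preserved. First note that $\Net'$ is still a valid network in our relaxed sense: it is acyclic because we only shortcut a path $u\to v\to w$; it has the same root, since $v$ has in-degree $1$ and so is not $\rho$; and it has the same leaves, since $v$ has out-degree $1$ and so is not a leaf.

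The key observation concerns offspring. For every vertex $z \neq v$ of $\Net'$, $\off_{\Net'}(z) = \off_{\Net}(z)$, because any directed path through $v$ in $\Net$ must use the subpath $u\to v\to w$, which is replaced by the edge $(u,w)$. Moreover, $\off_\Net(v) = \off_\Net(w)$, since $w$ is the unique child of $v$.

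Now fix $Y \subseteq X$ and compare the affected edge sets. Every edge of $\Net$ other than $(u,v)$ and $(v,w)$ also appears in $\Net'$ with the same weight and the same head vertex. By the observation above, it is affected by $Y$ in $\Net$ if and only if it is affected in $\Net'$. The two edges $(u,v)$ and $(v,w)$ both have offspring $\off(w)$, so they are either both affected or both unaffected. In $\Net'$ the single edge $(u,w)$ also has offspring $\off(w)$ and weight $\w((u,v))+\w((v,w))$. Hence $\PDsub{\Net'}(Y) = \PD(Y)$.

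One subtle point needs a remark. If $(u,w)$ already exists in $\Net$, the construction yields parallel edges. I would either allow multi-edges in this intermediate setting, in which case the argument above is unaffected, or merge the parallel edges by summing their weights, which is equally valid because they share the same offspring.

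For the running time, I would find a vertex with in- and out-degree $1$ by scanning the degree information in $\Oh(n)$ time, assuming adjacency lists with stored degrees. The rewiring is constant work. The single addition of two weights costs $\Oh(\log\max_\w)$ under our RAM model. Together this gives $\Oh(n+\log\max_\w)$.

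I expect no real obstacle here. The only step needing care is the offspring-preservation claim, together with making sure the two merged edges are affected simultaneously.
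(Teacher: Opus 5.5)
Your proposal is correct and follows the paper's route. The paper's proof likewise rests on $\PD(S) = \PDsub{\Net'}(S)$ for every $S \subseteq X$, together with an $\Oh(n)$ search plus $\Oh(\log\max_{\w})$ rewiring; your offspring-preservation argument and your remark on parallel edges spell out details the paper leaves implicit.
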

\begin{proof}
	\proofparagraph{Correctness}
	The correctness follows from the observation that~$\PD(S) = \PDsub{\Net'}(S)$ for any~$S \subseteq X$.
	
	\proofparagraph{Running Time}
	In~$\Oh(n)$ time, we can find out whether there exists a vertex of degree~2. If such a vertex $v$ exists, it takes~$\Oh(\log\max_{\w})$~time to remove $v$ and its incident edges and create the new edge $uw$ with the appropriate weight.
\end{proof}

To evaluate the size of the network after applying reduction rules, we adapt the language of \emph{network generators}.
We refer the reader to~\cite{GambetteGenerators2009, GeneratorsDef2009} for an in-depth study.

\begin{definition}
	\label{def:cor-side}
	
	\begin{itemize}
		\item We use $R$ to denote the set of reticulation vertices of \Net (including vertices of in- and out-degree greater than $1$).
		\item A vertex $v \in V\setminus R$ is a \emph{core-vertex} if $v$ has two children, $u_1$ and $u_2$, where $u_1 \ne u_2$, and $\desc(u_i) \cap R \ne \emptyset$ for each $i \in \{1, 2\}$.
		
		\item We use $Q$ to denote the set of core-vertices of \Net.
		
		\item A \emph{side-path} in $\Net$ is a path from $u$ to $w$ for two vertices $u,w \in R \cup Q$ with no internal vertices in $R\cup Q$.
		The internal vertices of a side-path are \emph{side-vertices}.

		\item We use $P$ to denote the set of side-vertices of \Net.
		
	\end{itemize}
\end{definition}

Note that after exhaustively applying \Cref{rr:trees,rr:deg-2}, every non-leaf vertex is either a reticulation or has at least one reticulation as a descendant.
Therefore, every non-leaf vertex is in $Q$, in $R$, or in~$P$.
Every side-vertex has exactly one child which is a leaf.

The following result is similar to one in~\cite{GambetteGenerators2009}. For completeness we prove it here.
\begin{observation}\label{obs:core-retic-bound}
	Every network~\Net has $\Oh(\eret)$ side-paths, and $|R|+|Q| \in \Oh(\eret)$
\end{observation}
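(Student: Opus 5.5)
We bound $|R|$, $|Q|$ and the number of side-paths by charging them to the excess $\eret = |E|-|V|+1 = \sum_{r\in R}(\deg^-(r)-1)$. Since every reticulation has in-degree at least~$2$, each contributes at least~$1$ to this sum, so $|R|\le \eret$ immediately.

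For $Q$, we consider the subgraph $H$ of \Net formed by all vertices of $R\cup Q$ together with the root, and connect them by the side-paths. We count edges versus vertices in the auxiliary multigraph $\Net_c$, obtained from the union of all side-paths by contracting each side-path to a single edge. Each core-vertex $q$ has at least two children leading to distinct reticulations, so it has out-degree at least $2$ in $\Net_c$ and in-degree at least $1$ (if $q\ne\rho$). Each reticulation has in-degree $\deg^-(r)\ge 2$ in $\Net_c$ (every incoming edge of $r$ lies on a path from a vertex of $R\cup Q\cup\{\rho\}$, tracing upward until hitting such a vertex) and out-degree at most~$1$. The lowest reticulations may have out-degree $0$ in $\Net_c$; this is harmless.

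The counting then goes as follows. Since $\Net_c$ arises from a subgraph of \Net by suppressing in/out-degree-$1$ vertices, its cyclomatic number $|E(\Net_c)|-|V(\Net_c)|+1$ is at most that of \Net, namely $\eret$. Summing degrees gives
\[
|E(\Net_c)| = \sum_{v}\deg^+_{\Net_c}(v) \ge 2|Q|,
\qquad
|E(\Net_c)| = \sum_v \deg^-_{\Net_c}(v) \ge |Q| - 1 + \sum_{r\in R}\deg^-(r).
\]
Using the in-degree count, $|E(\Net_c)|-|V(\Net_c)|+1 \ge (|Q|-1+\sum_r\deg^-(r)) - (|Q|+|R|+1) + 1 = \sum_r(\deg^-(r)-1) - 1$. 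This only recovers $\eret$, so we combine it with the out-degree bound: averaging the two sums gives $|E(\Net_c)| \ge \tfrac12\big(3|Q| -1 + \sum_r \deg^-(r)\big)$. Hence $\eret \ge |E(\Net_c)| - |V(\Net_c)| + 1 \ge \tfrac12|Q| + \tfrac12\sum_r\deg^-(r) - |R| - \tfrac32$, which yields $|Q| \le 2\eret + 3 + 2|R| - \sum_r\deg^-(r) \le 2\eret+3$. Combined with $|R|\le\eret$, this gives $|R|+|Q|\in\Oh(\eret)$. The number of side-paths equals $|E(\Net_c)| = |V(\Net_c)|-1+(\text{cyclomatic number}) \le |R|+|Q|+\eret \in \Oh(\eret)$.

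The main obstacle is making sure $\Net_c$ is well defined and that its cyclomatic number does not exceed $\eret$. First, a side-path starting at a core-vertex must end in $R\cup Q$: each child counted in the definition of a core-vertex has a reticulation descendant, so walking down from it we hit a vertex of $R\cup Q$ before reaching a leaf. Second, $\Net_c$ corresponds to a connected subgraph of \Net, namely the union of all such paths, which is connected through $\rho$ since every vertex of $R\cup Q$ is reachable from the root. Deleting edges and vertices from a connected graph while keeping it connected does not increase $|E|-|V|$, and suppressing vertices of in- and out-degree~$1$ preserves it. The root needs separate handling: either it is in $Q$, or it has a unique path down to the first vertex of $R\cup Q$ and we add it as an extra source, which accounts for the $+1$ constants above. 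The remaining steps are routine arithmetic.
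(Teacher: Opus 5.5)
Your argument is correct, but it takes a longer route than the paper. The paper never builds a contracted graph and never uses the cyclomatic number. It observes that every side-path ends at a vertex of $R\cup Q$ through its own incoming edge; a side-path is determined by its last edge, since its internal vertices have in-degree~$1$. Hence there are at most $\sum_{v\in Q\cup R}\deg^-(v)\le |Q|+|R|+\eret\le |Q|+2\eret$ side-paths. The paper compares this with the lower bound $2|Q|$ given by the two side-paths leaving each core-vertex, which yields $|Q|\le 2\eret$ and at most $4\eret$ side-paths.

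In your language, the in-degree sum over $\mathcal{N}_c$ is useful as an \emph{upper} bound, because core-vertices have in-degree at most~$1$. Read that way, it gives exactly the inequality $|E(\mathcal{N}_c)|\le |Q|+|R|+\eret$ that you extract from the cyclomatic number. Moreover, once you have that inequality, $2|Q|\le |E(\mathcal{N}_c)|$ immediately gives $|Q|\le |R|+\eret\le 2\eret$, so averaging the two degree sums is unnecessary.

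What your route buys is a structural picture: the core graph is a multigraph of cyclomatic number at most $\eret$. The cost is three extra facts:
\begin{itemize}
\item the union $H$ is connected;
\item the cyclomatic number does not increase when passing to a connected subgraph;
\item every internal vertex of a side-path has in- and out-degree exactly~$1$ in $H$.
\end{itemize}
The third fact is needed so that suppression is valid. It holds because such a vertex has one parent and exactly one child with a reticulation descendant, but you should say so. The paper's direct count needs none of this.

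Minor points:
\begin{itemize}
\item The two children of a core-vertex need not lead to distinct reticulations. What you need, and what holds, is that they start distinct side-paths, which may be parallel edges in $\mathcal{N}_c$.
\item Your claim that reticulations have out-degree at most~$1$ is never used. It is also false in the setting of Section~\ref{sec:kernel}, where $R$ explicitly includes vertices of in- and out-degree greater than~$1$.
\item The constant in your final chain should be $-1/2$ rather than $-3/2$; this is harmless.
\item When you add the root as an extra source, the number of side-paths is at most $|E(\mathcal{N}_c)|$, not equal to it.
\end{itemize}
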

\begin{proof}
	Observe that $|R| = \vret \leq \eret$. 
	
	To show the other two claims, we give a bound on the number of side-paths from above and below.
	As every side-path in $\Net$ ends at a core vertex or reticulation, we have that the number of side-paths is at most 
	$$\sum_{v \in Q\cup R} \deg^-(v)
	= \sum_{r \in  R} (\deg^-(r) - 1) + |Q| + |R| 
	= \eret +  |Q| + |R| \leq 2\cdot \eret + |Q|.$$
	To give a bound from below, consider that each core-vertex has at least two side-paths leaving it, from which it follows that the number of side-paths is at least $2|Q|$.
	It follows that $2|Q| \leq  2\cdot \eret + |Q|$ and so $|Q| \leq 2\eret$.
	This in turn implies that the number of side-paths is at most $4\cdot \eret$.
\end{proof}

\begin{rr}
	\label{rr:a>a+b}
	Let~$v$ and~$w$ be side-vertices and let~$v$ be the parent of~$w$.
	Further, let~$x_v$ and~$x_w$ be the only leaf-children of $v$ and~$w$, respectively.
	If~$\w((v,x_v)) \le \w((v,w)) + \w((w,x_w))$, then replace the edge~$(v,x_v)$ with an edge~$(\rho,x_v)$ of weight~$\w'((\rho,x_v)) = \w((v,x_v))$.
\end{rr}
\begin{lemma}
	\label{lem:a>a+b}
 	\Cref{rr:a>a+b} is correct and can be applied
	in~$\Oh(n \cdot \log\max_{\w})$~time.
\end{lemma}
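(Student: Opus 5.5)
We follow the template of the proof of \Cref{lem:trees}: an exchange argument turning any solution of $\Instance$ into one that behaves identically in $\Net'$, plus a monotonicity argument for the converse. Set $a:=\w((v,x_v))$ and $b:=\w((v,w))+\w((w,x_w))$, so the hypothesis is $a\le b$. The structural facts we use are that $v$ and $w$ are side-vertices, each with exactly one leaf-child ($x_v$ resp.\ $x_w$). Hence $w$ is the unique non-leaf child of $v$, so $\off(v)=\{x_v\}\cup\off(w)$. Moreover, $(v,w)$ and $(w,x_w)$ are the only edges whose offspring are contained in $\off(w)\cup\{x_v\}$ and that lie strictly between $v$ and $x_w$.

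\textbf{Direction from $\Instance'$ to $\Instance$.} Take any $S\subseteq X$ and any edge $e\in E(\Net')\setminus\{(\rho,x_v)\}$. If $e$ is affected by $S$ in $\Net'$, it is also affected in $\Net$, since $\Net'$ only loses the path $v\to x_v$ and its offspring sets can only shrink. The edge $(\rho,x_v)$ has the same weight as $(v,x_v)$ and is affected exactly when $(v,x_v)$ is. Therefore $\PDsub{\Net'}(S)\le\PD(S)$, and every solution of $\Instance'$ is a solution of $\Instance$.

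\textbf{Direction from $\Instance$ to $\Instance'$.} Let $S$ be a solution of $\Instance$. First normalise it.
\begin{itemize}
\item If $x_v\in S$ and $x_w\notin S$, replace $x_v$ by $x_w$. All edges above $v$ stay affected, since $x_w\in\off(v)$. We lose at most $a$ and gain at least $b$, the latter because $(v,w)$ and $(w,x_w)$ were previously unaffected unless some other taxon of $\off(w)$ was in $S$. That subcase needs care: if $(v,w)$ was already affected, we gain only $\w((w,x_w))$, so the exchange may fail. In that case we instead argue directly that $\PDsub{\Net'}(S)=\PD(S)$. Indeed, $(v,x_v)$ is moved but stays affected with equal weight, and every ancestor edge of $v$ remains affected via $\off(w)\cap S\ne\emptyset$.
\item Hence we may assume either $x_w\in S$, or $x_v\notin S$, or $\off(w)\cap S\neq\emptyset$.
\end{itemize}

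\textbf{Checking $\PD$ is preserved.} In each normalised case we verify $\PD(S)=\PDsub{\Net'}(S)$ edge by edge, as in \Cref{lem:trees}. Edges not on the removed connection keep their affected status, because whenever $x_v\in S$ some taxon of $\off(w)$ is also in $S$, and it witnesses every ancestor of $v$. The edge $(\rho,x_v)$ contributes exactly when $(v,x_v)$ did. When $x_v\notin S$ the claim is immediate. Since $|S|$ is unchanged by the normalisation, $S$ becomes a solution of $\Instance'$.

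\textbf{Main obstacle and running time.} The delicate point is the case split in the exchange, namely handling $x_v\in S$, $x_w\notin S$ while other descendants of $w$ are in $S$; the plan is to show that this case needs no exchange at all. For the running time, we scan all side-vertices in $\Oh(n)$ time, identify each side-vertex's leaf-child and side-vertex child, and compare the weight sums in $\Oh(\log\max_\w)$ per candidate. This gives $\Oh(n\cdot\log\max_\w)$ in total.
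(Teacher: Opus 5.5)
Your proof is correct and follows essentially the same route as the paper. When $x_v\in S$ and $S\cap\off(w)=\emptyset$, you exchange $x_v$ for $x_w$, losing only $\w((v,x_v))$ and gaining $\w((v,w))+\w((w,x_w))$; you then observe that $\PD(S)=\PDsub{\Net'}(S)$ whenever $x_v\notin S$ or $S\cap\off(w)\neq\emptyset$, and you use $\PDsub{\Net'}(S)\le\PD(S)$ for the converse. Your separate treatment of the subcase $x_w\notin S$, $S\cap\off(w)\neq\emptyset$ only spells out what the paper covers by stating the exchange condition as $S\cap\off(w)=\emptyset$ rather than $x_w\notin S$.
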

\begin{proof}
	\proofparagraph{Correctness}
	The proof follows similar lines as the proof of \Cref{lem:trees}.
%
%
%
%
	Let $S\subseteq X$ be a solution for \MAPPD on $\Instance$.
	Assume that $x_v \in S$ but $S \cap \off(w) = \emptyset$.
	Then, define  $S' := {(S\cup \{x_w\})} \setminus \{x_v\}$.
	Observe that $(w,x_w)$ and $(v,w)$ are affected by~$S'$ but not by~$S$, while the only edge affected by~$S$ and not by~$S'$ is~$(v,x_v)$.
	Thus $\PD(S') = \PD(S) - \w((v,x_v)) + \w((w,x_w)) + \w((v,w))$, which, by the condition of the reduction rule, is at least $\PD(S)$.
	Therefore, also~$S'$ is a solution for \MAPPD on \Instance with $x_v\not\in S'$.
	Thus, we may now assume that $S$ contains a taxon from $\off(w)$ or $x_v\not\in S$. 
	In either case we can observe that $\PD(S) = \PDsub{\Net'}(S)$.

	Conversely, observe that any edge $e \in E(\Net')\setminus\{(\rho,x_v)\}$ affected by $S$ in $\Net'$ is also affected by $S$ in $\Net$.
	Thus, $\PDsub{\Net'}(S) \le \PD(S)$ for any $S \subseteq X$, and so each solution for \MAPPD on~$\Instance'$ is also a solution for \MAPPD on \Instance.
	
	\proofparagraph{Running Time}
	In~$\Oh(n \cdot \log\max_{\w})$ time, we can find out whether there exists a vertex satisfying the conditions of~$v$ in \Cref{rr:a>a+b}, and if so edit the network accordingly.
\end{proof}

\begin{rr}
	\label{rr:strings}

	Let~$u$ and~$w$ be core-vertices or reticulations with $u \neq \rho$ and
	for some~$\ell > 1$
	let~$v_1,\dots,v_\ell$ be side-vertices such that $(u,v_1), (v_\ell,w), (v_i,v_{i+1}) \in E$ for each $i\in [\ell-1]$.
	Remove the edge~$(v_1,v_2)$ and add edges~$(\rho,v_2)$ and $(v_1,w)$ of respective weight $\w'((\rho,v_2)) = \w((v_1,v_2))$ and $\w'((v_1,w)) = 0$.
\end{rr}
An application of this reduction rule is depicted in Figure~\ref{fig:strings}.

\begin{figure}[t]
	\centering
	\begin{tikzpicture}[scale=0.7,every node/.style={scale=0.7}]
		\node (u) {$u$}
		child {node {} edge from parent[draw=none]}
		child {node {$v_1$} edge from parent[-{Stealth[length=6pt]}]
			child {node {$x_1$} edge from parent[-{Stealth[length=6pt]}]}
			child {node {$v_2$} edge from parent[-{Stealth[length=6pt]}]
				child {node {$x_2$} edge from parent[-{Stealth[length=6pt]}]}
				child {node {$v_\ell$} edge from parent[dotted,-{Stealth[length=.1pt]}]
					child {node {$x_\ell$} edge from parent[solid,-{Stealth[length=6pt]}]}
					child {node {$w$} edge from parent[solid,-{Stealth[length=6pt]}]};
				};
			};
		};
		
		\node (root) at (2,0) {$\rho$};
		\draw[bend right, dashed,-{Stealth[length=6pt]}] (root) to (u);
		
		\node at (3,0) {(1)};
		\draw (4,.25) to (4,-6.25);
	\end{tikzpicture}
	\begin{tikzpicture}[scale=0.7,every node/.style={scale=0.7}]
		\node (root) {$\rho$}
		child {node (v2) {$v_2$} edge from parent[-{Stealth[length=6pt]}]
			child {node {$v_3$} edge from parent[-{Stealth[length=6pt]}]
				child {node {$v_\ell$} edge from parent[dotted,-{Stealth[length=.1pt]}]
					child {node (w) {$w$} edge from parent[solid,-{Stealth[length=6pt]}]}
					child {node {$x_\ell$} edge from parent[solid,-{Stealth[length=6pt]}]}
				}
				child {node {$x_3$} edge from parent[-{Stealth[length=6pt]}]}
			}
			child {node {$x_2$} edge from parent[-{Stealth[length=6pt]}]}
		};
		\node[left of=v2, xshift=-35mm] (u) {$u$}
		child {node {} edge from parent[draw=none]}
		child {node (v1) {$v_1$} edge from parent[-{Stealth[length=6pt]}]
			child {node {$x_1$} edge from parent[-{Stealth[length=6pt]}]}
			child {node {} edge from parent[draw=none]}
		};
		\draw[-{Stealth[length=6pt]}] (v1) to (w);
		\draw[bend right, dashed,-{Stealth[length=6pt]}] (root) to (u);
		
		\node at (-.75,-.5) {$\w((u,v_1))$};
		\node at (-3.25,-4.75) {$0$};
		
		\node at (1,0) {(2)};
	\end{tikzpicture}
	\caption{This figure in (1) depicts the path from a core-vertex~$v$ to another core vertex~$w$ and in~(2) an application of \Cref{rr:strings} to the path depicted in (1).}
	\label{fig:strings}
\end{figure}%

\begin{lemma}
	\label{lem:strings}
	\Cref{rr:strings} is correct and can be applied 
	in $\Oh(m + \log\max_\w)$ time.
\end{lemma}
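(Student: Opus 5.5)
The plan is to show that $\PD(S) = \PDsub{\Net'}(S)$ for every $S \subseteq X$, where $\Net'$ is the network after applying \Cref{rr:strings}. This single identity gives equivalence of the two instances, since $k$ and $D$ are unchanged. Because every rule is applied only after the earlier ones are exhaustive, I may use the structure guaranteed by \Cref{rr:trees,rr:deg-2,rr:a>a+b}. In particular, each $v_i$ is a side-vertex with exactly one leaf-child $x_i$ and one further child $v_{i+1}$ (or $w$ for $i=\ell$). Moreover, $\w((v_i,x_i)) > \w((v_i,v_{i+1})) + \w((v_{i+1},x_{i+1}))$, although I expect this inequality not to be needed for this rule.

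The first step is to compare offspring sets edge by edge. In $\Net$, write $O_w := \off(w)$. Then $\off(v_i) = \{x_i,\dots,x_\ell\}\cup O_w$, and every ancestor of $v_1$ (reached through $u$) has offspring containing $\off(v_1)$. In $\Net'$, the new offspring sets are as follows.
\begin{itemize}
\item $\off'(v_1) = \{x_1\}\cup O_w$, using the new weight-$0$ edge $(v_1,w)$.
\item $\off'(v_i) = \{x_i,\dots,x_\ell\}\cup O_w$ for $i\ge 2$, which is unchanged.
\item $\off'(\rho) = X$.
\end{itemize}

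The key step is to check ancestors of $u$. Each such vertex $a$ had $\off(a) \supseteq \off(v_1)$ in $\Net$, but now only reaches $\{x_1\}\cup O_w$ through $v_1$. I must argue that $a$ still reaches $x_2,\dots,x_\ell$ in $\Net'$, or, more precisely, that whenever $S$ meets $\off(e)$ in $\Net$ it meets $\off'(e)$ in $\Net'$ and conversely. The point is that $\{x_2,\dots,x_\ell\}$ is now reachable only from $\rho$ and from $v_2,\dots,v_\ell$. So an edge $e$ above $v_1$ that was affected by $S$ only through some $x_j$ with $j\ge 2$ would lose its affectedness. I expect this to be the main obstacle.

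To resolve it, I would use that $w$ is a core-vertex or reticulation, which by definition has a reticulation among its descendants. After \Cref{rr:trees,rr:deg-2}, every vertex in the network has some leaf descendant. I would first try to show that $\off(w)\neq\emptyset$ makes every edge above $v_1$ affected whenever $S$ meets $\off(v_1)$. This does not literally hold, so the fallback is a direct accounting of the total weight, as follows.

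Edges of the form $(v_1,v_2)$, and those inside the chain, change as follows.
\begin{itemize}
\item The edge $(v_1,v_2)$ is affected in $\Net$ iff $S\cap(\{x_2,\dots,x_\ell\}\cup O_w)\neq\emptyset$.
\item Its replacement $(\rho,v_2)$, of the same weight, is affected in $\Net'$ under exactly that condition.
\item The new edge $(v_1,w)$ has weight $0$ and contributes nothing.
\item Edges above $v_1$ remain affected iff $S$ meets their offspring.
\end{itemize}
If the ancestor issue really breaks the identity, I would modify the argument to a weaker two-sided bound. First, $\PDsub{\Net'}(S)\le\PD(S)$, since each affected edge in $\Net'$ corresponds to one affected in $\Net$ of equal weight. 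Second, for any solution $S$ I would use an exchange argument, as in \Cref{lem:trees,lem:a>a+b}, to obtain a solution $S'$ that contains $x_1$ or meets $O_w$ whenever it meets $\{x_2,\dots,x_\ell\}$. For such an $S'$ the values coincide. The running time follows because locating a chain of side-vertices and rewiring three edges takes $\Oh(m)$ time plus $\Oh(\log\max_\w)$ for copying weights.
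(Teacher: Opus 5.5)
Your fallback has the same two-part structure as the paper's proof. First, offspring sets of old edges can only shrink, $(\rho,v_2)$ is affected exactly when $(v_1,v_2)$ was, and $(v_1,w)$ has weight $0$, so $\PDsub{\Net'}(S)\le\PD(S)$ for every $S$. Second, a solution of $\Instance$ is turned into one on which the two measures agree, which holds whenever $S$ contains $x_1$, meets $\off(w)$, or avoids $\{x_2,\dots,x_\ell\}$. Your primary plan, the identity $\PD(S)=\PDsub{\Net'}(S)$ for all $S$, is false: for $S=\{x_2\}$ the edge $(u,v_1)$ and all edges above $u$ are lost. So only the fallback can work.

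\textbf{The gap.} The converse direction is entirely the exchange step. You do not carry it out, and you explicitly set aside the one ingredient it needs. Suppose $S$ meets $\{x_2,\dots,x_\ell\}$ but contains neither $x_1$ nor a taxon of $\off(w)$. Pick $x_i\in S$ with $i\ge 2$ and set $S':=(S\cup\{x_1\})\setminus\{x_i\}$. The only ancestors of $x_i$ are $v_i,\dots,v_1$ and $\anc(u)$, and every edge above $v_1$ stays affected through $x_1$. Hence
$\PD(S')\ge\PD(S)+\w((v_1,x_1))-\w((v_i,x_i))-\sum_{j=1}^{i-1}\w((v_j,v_{j+1}))$.
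This is at least $\PD(S)$ only because exhaustive application of \Cref{rr:a>a+b} gives $\w((v_j,x_j))>\w((v_j,v_{j+1}))+\w((v_{j+1},x_{j+1}))$ for each $j\in[\ell-1]$. Telescoping yields $\w((v_1,x_1))\ge\w((v_i,x_i))+\sum_{j=1}^{i-1}\w((v_j,v_{j+1}))$. Then $x_1\in S'$ gives $\PDsub{\Net'}(S')=\PD(S')\ge D$.

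Your remark that this inequality should not be needed is wrong: without it \Cref{rr:strings} is incorrect. Take the following network, with every unmentioned edge of weight $1$:
\begin{itemize}
\item $(\rho,u)$ has weight $100$, and $u$ has children $v_1$ and $w$;
\item $v_1\to v_2\to w$, where $v_1$ and $v_2$ have leaf children $x_1$ and $x_2$ with $\w((v_1,x_1))=1$ and $\w((v_2,x_2))=10$;
\item $w$ has a single leaf child.
\end{itemize}
With $k=1$ and $D=112$, the set $\{x_2\}$ is a solution in $\Net$. After applying the rule, every single taxon has diversity at most $105$.
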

\begin{proof}
	\proofparagraph{Correctness}
	Observe that any edge $e \in E(\Net')\setminus\{(\rho,v_2), (v_1,w)\}$ that is affected by~$S \subseteq X$ in $\Net'$ is also affected by~$S$ in $\Net$.
	Furthermore, $(\rho,v_2)$ is affected by $S$ in $\Net'$ if and only if $(v_1,v_2)$ is affected by $S$ in $\Net$.
	Therefore, $\PDsub{\Net'}(S) \le \PD(S) + \w'((v_1,w)) = \PD(S)$ for every set~$S \subseteq X$, and each solution for \MAPPD on~$\Instance'$ is also a solution for \MAPPD on \Instance.
	
	It remains to show that 
	if $\Instance$ is a \yes-instance of \MAPPD, then so is $\Instance'$.
	We first observe some facts.
	Let $x_i$ denote the leaf child of $v_i$ for each $i \in [\ell]$.
	Note that due to \Cref{rr:a>a+b} we may assume that $\w((v_i,x_i)) \ge \w((v_i,v_{i+1})) + \w((v_{i+1},x_{i+1}))$ for each~$i \in [\ell - 1]$.
	Consequently,
	\begin{eqnarray}
		\w((v_1,x_1))
		&\geq& \w((v_2,x_2)) + \w((v_1,v_2))\\
		&\geq& \w((v_3,x_3)) + \w((v_2,v_3)) + \w((v_1,v_2))\\
		&\geq& \w((v_i,x_i)) + \sum_{j = 1}^{i-1}\w((v_j,v_{j+1}))
	\end{eqnarray}
	for each $i \in [\ell - 1]$. 
	
	Let $S$ be a solution for \MAPPD on \Instance.
	Observe that if $S$ contains $x_1$ or an offspring of~$w$, then~$\PD(S) = \PDsub{\Net'}(S)$.
	Similarly, if $S$ does not contain a taxon in~$x_1, \dots, x_\ell$ then~$\PD(S) = \PDsub{\Net'}(S)$.
	So assume $S$ does not contain $x_1$ nor an offspring of~$w$ but~$x_i \in S$ for some~$i\in [\ell]$ with~$i>1$.
	Define~$S':= (S \cup \{x_1\}) \setminus \{x_i\}$.
	Then, $\PD(S') \ge \PD(S) + \w((v_1,x_1)) - \w((v_i,x_i)) - \sum_{j = 1}^{i-1}\w((v_j,v_{j+1}))$, which by the observation above is at least~$\PD(S)$.
	As $x_1 \in S'$ we have $ \PDsub{\Net'}(S') = \PD(S')$, and so $S'$ is  a solution for \MAPPD on $\Instance'$.

	\proofparagraph{Running Time}
	To find the vertices~$u$ and~$w$ we iterate over the vertices in~$Q$ as vertex~$u$ and consider each outgoing edge as a path to~$u$.
	Therefore, all possible combinations of~$u$ and~$w$ are found in~$\Oh(m)$~time.
	Each operation can be executed in  $\Oh( \log\max_\w)$ time.
\end{proof}

We now categorize the vertices in some sets to be able to easier refer to them.

\begin{definition}
	\label{def:top-strings}
	\begin{itemize}
		\item Define $A$ to be the set of leaf-children of $\rho$.
		\item Define $B$ to be the set of descendants of any vertex in $R \cup (Q \setminus \{\rho\})$.
		\item Define $Y$ to be the side-vertices which are children of $\rho$ and define $Y_X$ to be the leaf-children of $Y$.
		\item Define $Z$ to be the side-vetices which are not in $B\cup Y$---that is, vertices that are in a side which is outgoing of the root, but not the top vertex of that side---and define $Z_X$ to be the leaf-children of $Z$.
		\item Define $x^* := \arg\max_{x\in A} \w((\rho,x))$ and~$a^* := \w((\rho,x^*))$.
		\item Define $y^* := \arg\max_{y\in Y_X} \w((\rho,v_y)) + \w((v_y,y))$ and~$c^* := \w((v_{y^*},y^*))$, where $v_y$ is the parent of $y$.
	\end{itemize}
\end{definition}

\Cref{rr:trees,rr:a>a+b,rr:strings} have only transformed the instance and not made it smaller. We are now ready to start removing vertices.
In the following, for a taxon~$x\in X$, we use an operation~\emph{remove~$x$} when we delete~$x$ from~$X$ and~$V$ as well as the incoming edge of~$x$ from~$E$.
Further, we use an operation \emph{save~$x$} consisting of these steps:
\begin{itemize}
	\item Reduce $k$ by $1$;
	\item Reduce $D$ by $\PD(\{x\})$;
	\item Delete~$x$ from~$X$;
	\item Delete all edges affected by~$x$ (denoted with~$E_{\{x\}}$); and
	\item Identify all vertices with no incoming edge to a single root.
\end{itemize}

\begin{lemma}
	\label{lem:save-remove}
	For a given instance \Instance of \MAPPD and a taxon~$x$,
	\begin{itemize}
		\item \Instance has a solution~$S$ with~$x \in S$ if and only if~$S\setminus \{x\}$ is a solution for the instance after~$x$ is saved; and
		\item \Instance has a solution~$S$ with~$x \notin S$ if and only if~$S$ is a solution for the instance after~$x$ is removed.
	\end{itemize}
\end{lemma}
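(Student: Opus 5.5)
The plan is to prove both items by a direct comparison of phylogenetic diversities. Throughout, $\Net'$ denotes the network after the operation, and $k'$, $D'$ denote the new budget and threshold.

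\emph{Remove.} For $x \notin S$ I will show that $\PDsub{\Net'}(S) = \PD(S)$. The only deleted edge is the incoming edge of $x$, and $x \notin S$, so that edge is not affected by $S$. Every other edge keeps its head, and hence its offspring set minus $x$. Thus an edge is affected by $S$ in $\Net'$ if and only if it is affected in $\Net$, and it keeps its weight. Since $k$ and $D$ are unchanged, $S$ is a solution in one instance exactly when it is a solution in the other.

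\emph{Save.} Write $S' := S\setminus\{x\}$. The key identity to establish is
$$\PD(S) = \PD(\{x\}) + \PDsub{\Net'}(S'),$$
that is, $E_S = E_{\{x\}} \,\dot\cup\, \bigl(E_{S'}\setminus E_{\{x\}}\bigr)$, and the second part is exactly the set of edges of $\Net'$ affected by $S'$. The steps are as follows.
\begin{itemize}
\item The edges of $\Net'$ are precisely $E\setminus E_{\{x\}}$, apart from the renaming of tails caused by identifying root vertices.
\item Identifying the in-degree-0 vertices only changes tails, never heads. Deleting $E_{\{x\}}$ removes only edges whose head is an ancestor of $x$.
\item An edge $e\notin E_{\{x\}}$ has a head $h$ that is not an ancestor of $x$. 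Every path below $h$ avoids ancestors of $x$, so it avoids all deleted edges. Hence $\off_{\Net'}(e) = \off(e)$, with $x$ not contained in either.
\item Consequently $e$ is affected by $S'$ in $\Net'$ if and only if it is affected by $S$ in $\Net$.
\end{itemize}
The identity then follows. Since $|S'| = |S|-1$ and $D' = D-\PD(\{x\})$, we get $|S|\le k$ if and only if $|S'|\le k'$, and $\PD(S)\ge D$ if and only if $\PDsub{\Net'}(S')\ge D'$. This proves the first item.

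The main obstacle is checking that $\Net'$ is still a legitimate network: it needs a single root and no orphaned vertices, and the offspring sets must be preserved. Identifying all sources into one root handles the first point. For the second, I will argue that every surviving vertex either still has its original incoming path structure, or becomes a child of the new root because all its former parents were ancestors of $x$. In either case its descendant set is unchanged, which is exactly what the identity uses.
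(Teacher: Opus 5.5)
Your proposal is correct and follows essentially the same route as the paper. Both identify the edges of the saved network with $E\setminus E_{\{x\}}$ (same weights, same offspring sets) to get $PD(S)=PD(\{x\})+PD_{\mathcal{N}'}(S\setminus\{x\})$ for $x\in S$: the paper states this as two inequalities, while you state one identity and justify more explicitly why offspring sets are preserved. Both treat the removal case as immediate from the definition of $PD$.
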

\begin{proof}
	To see the first claim, let $\Net'$ be the network after saving $x$ and let $S \subseteq X$ contain~$x$.
	Any edge in $\Net'$ affected by $S\setminus\{x\}$ has a corresponding edge in $E(\Net)\setminus E_{\{x\}}$ with the same offspring, from which it follows that $\PD(S) \geq \PDsub{\Net'}(S\setminus\{x\}) + \PD(\{x\})$.
	
	Conversely, if $x \in S \subseteq X$ then any edge in $E_S \setminus E_{\{x\}}$ has a corresponding edge in $\Net'$ affected by $S \setminus \{x\}$, from which it follows that $\PDsub{\Net'}(S\setminus \{x\}) \geq \PD(S) - \PD(\{x\})$.
	
	The second claim follows immediately from the definition of $\PD(S)$.
\end{proof}

\begin{rr}
	\label{rr:top-strings}
	If $k > |B| + |Y|$ and~$a^* > c^*$, then save~$x^*$.
	If $k > |B| + |Y|$ and~$a^* \le c^*$, then save $y^*$.
\end{rr}
\begin{lemma}
	\label{lem:top-strings}
	Reduction Rule~\ref{rr:top-strings} is correct and can be applied 
	in $\Oh(m \log\max_\w)$ time.
\end{lemma}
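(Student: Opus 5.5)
By \Cref{lem:save-remove}, it suffices to show that when $k > |B| + |Y|$ some optimal solution contains the taxon being saved. Let $S$ be a solution, which we may pad to size exactly $k$ without decreasing $\PD$. First I classify the taxa. After exhaustive application of the earlier rules, every taxon is of one of the following kinds:
\begin{itemize}
	\item a leaf-child of $\rho$, so it lies in $A$;
	\item a descendant of a vertex in $R\cup(Q\setminus\{\rho\})$, so it lies in $B$;
	\item a leaf-child of a side-vertex in $Y$, so it lies in $Y_X$;
	\item a leaf-child of a side-vertex in $Z$, so it lies in $Z_X$.
\end{itemize}
Since $Y$ consists of children of $\rho$ that are the top vertex of a side-path out of $\rho$, there are no vertices in $Z$ at all. \Cref{rr:strings} requires $u \neq \rho$, so I do not rely on it here. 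Instead I will argue directly via \Cref{rr:a>a+b}: on a side-path starting at $\rho$, every lower leaf contributes less than the top leaf together with the path edges. In any case, $|S| = k > |B| + |Y|$, so by pigeonhole $S$ contains at least one taxon outside $B$ that is not one of the $|Y|$ chosen top representatives.

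The exchange argument goes as follows. Let $z^\circ$ be the saved taxon, that is, $x^*$ if $a^* > c^*$ and $y^*$ otherwise. Suppose $z^\circ \notin S$. I will find a taxon $z \in S$ with $z\notin B$ whose marginal contribution $\PD(S)-\PD(S\setminus\{z\})$ is at most the gain of $z^\circ$ with respect to $S\setminus\{z\}$. The key facts are these. Taxa in $A$ contribute exactly $\w((\rho,x))\le a^*$. A taxon in $Y_X$ contributes at most $\w((\rho,v_y))+\w((v_y,y))$, and at most $c^*$ if its side-vertex is already covered by another chosen taxon. The first step is to show that if $S$ contains two taxa under the same $Y$-vertex, or under a side-path hanging from $\rho$, then the lower one can be swapped for the top via the \Cref{rr:a>a+b} inequality. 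With that normalization, $S\setminus B$ contains more than $|Y|$ taxa, and each $Y$-side holds at most one of them, so some chosen taxon lies in $A$, or $S$ covers every $Y$-side and has an extra taxon.

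The cases then split according to the comparison of $a^*$ and $c^*$. If $a^*>c^*$ and $x^*\notin S$, I swap some $a\in S\cap A$ for $x^*$, which is fine since $a^*\ge \w((\rho,a))$. If $S\cap A=\emptyset$, then every $Y$-side is covered and a surplus taxon exists, which must be a second taxon on some side; its marginal contribution is at most $c^* < a^*$, so I swap it for $x^*$. If $a^*\le c^*$, then $y^*$ gives at least $c^*$ even when its side is already hit. The weight $\w((\rho,v_{y^*}))+c^*$ is maximal among the $Y$-sides, and any element of $A$ contributes at most $a^*\le c^*$, so the swap works again.

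The main obstacle will be making the classification rigorous, in particular that no taxon lies outside $A\cup B\cup Y_X\cup Z_X$, and handling the marginal contribution of taxa deeper on side-paths from $\rho$ (the set $Z_X$) correctly via \Cref{rr:a>a+b}. For the running time, computing $A$, $B$, $Y$, $x^*$ and $y^*$ takes linear passes with $\Oh(\log\max_\w)$ weight comparisons, and the save operation needs $\Oh(m\log\max_\w)$ to update $D$ and the network. Together these give the claimed $\Oh(m\log\max_\w)$ bound.
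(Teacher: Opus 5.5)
\textbf{The proposal has a genuine gap at the swap bound, and the gap cannot be closed because the statement itself is false.}

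\medskip
\noindent\textbf{Where your argument fails.} You claim that a $Y_X$-taxon whose side-vertex is already covered, and hence the surplus second taxon on a side, contributes at most $c^*<a^*$. But $c^*=\w((v_{y^*},y^*))$ is the pendant weight of the single side that maximizes $\w((\rho,v_y))+\w((v_y,y))$. It bounds nothing on the other sides. After your \Cref{rr:a>a+b} normalization, a lower taxon $y_{j,i}$ with $y_{j,1}\in S$ has marginal contribution bounded only by $\w((v_{j,1},y_{j,1}))$, which may far exceed $a^*$.

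The case $a^*\le c^*$ with $S\cap A=\emptyset$ fails symmetrically. Inserting $y^*$ may gain only $c^*$, because $(\rho,v_{y^*})$ can already be covered by a taxon of $B$ below that side.

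Two auxiliary claims are also false:
\begin{itemize}
\item $Z$ need not be empty, precisely because \Cref{rr:strings} excludes $u=\rho$.
\item The normalization only forces $y_{j,1}\in S$ whenever some lower taxon of side $j$ is in $S$. It does not make each side hold at most one chosen taxon.
\end{itemize}

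\medskip
\noindent\textbf{Why it cannot be patched.} The paper's proof makes the same move. When it swaps $y_{j,i}$ for $y^*$, it credits $y^*$ with the full $\w((\rho,v_{y^*}))+\w((v_{y^*},y^*))$, which is its gain only if $(\rho,v_{y^*})$ is uncovered. Here is a concrete counterexample to the rule:
\begin{itemize}
\item $\rho$ has a leaf $x_a$ with $\w((\rho,x_a))=2$.
\item A side-path $\rho\to v_1\to u_1\to u_2\to u_3\to r$ has all five edges of weight $1$. Its pendant leaves $y_1,z_1,z_2,z_3$ hang from $v_1,u_1,u_2,u_3$ with weights $80,40,20,10$.
\item A side-path $\rho\to v_2\to r$ has $\w((\rho,v_2))=1000$ and $\w((v_2,r))=1$, with a pendant leaf $y_2$ of weight $1$.
\item A leaf $x_r$ below $r$ has $\w((r,x_r))=1$.
\end{itemize}
None of \Cref{rr:trees,rr:deg-2,rr:a>a+b,rr:strings} applies. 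We have $|B|+|Y|=4$, $y^*=y_2$ and $c^*=1<2=a^*$, so for $k=5$ the rule saves $x_a$. Yet $\{y_1,z_1,z_2,z_3,x_r\}$ has diversity $1157$, whereas every $5$-set containing $x_a$ has diversity at most $1149$. So with $D=1157$ a \yes-instance becomes a \no-instance. Setting $\w((v_2,y_2))=3$ and $\w((r,x_r))=5$ breaks the case $a^*\le c^*$ in the same way.

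\medskip
\noindent\textbf{What your swaps do prove.} They establish the rule with $c^*$ replaced by $c':=\max_{y\in Y_X}\w((v_y,y))$: save a maximizer of $c'$ if $a^*<c'$, and save $x^*$ otherwise. In that version every swap goes through:
\begin{itemize}
\item a lower taxon contributes less than $\w((v_{j,1},y_{j,1}))\le c'$;
\item an $A$-taxon contributes at most $a^*$;
\item the maximizer gains at least $c'$, whatever is already covered.
\end{itemize}
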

\begin{proof}
	\proofparagraph{Correctness}
	Let $v_{1,1}, v_{2,1}, \dots v_{|Y|,1}$ denote the vertices of $Y$.
	Let~$v_{j,1} v_{j,2},\dots v_{j, \ell_j}$ be the side-vertices on the path from $v_{j,1}$ to a core vertex for each $j \in [|Y|]$.
	Let~$z_{j,i}$ be the leaf child of $v_{j,i}$ for each $j \in [|Y|]$, and~$i \in [\ell_j]$.
	This mapping is unique after~\Cref{rr:trees,rr:deg-2} have been applied exhaustively.
	Observe that $Z$ is the set~$\{v_{j,i} \mid 2 \leq j \leq |Y|, i \in [\ell_j]\}$.
	
	Due to~\Cref{rr:a>a+b}, we have $\w((v_{j,1},y_{j,1})) > \w((v_{j,i},y_{j,i})) + \sum_{h=1}^{i-1}\w((v_{j,h},v_{j,h+1}))$ for each $j \in [|Y|]$, and each~$i \in [\ell_j]$.
	Consequently, we may assume that if $y_{j,i}$ is in a solution for some~$i > 1$, then so is~$y_{j,1}$.
	
	Furthermore, for any solution that contains $y_{j,i}$ and $y_{j,1}$ for some $i>1$, we can assume that the solution contains~$y^*$ as otherwise replacing $y_{j,i}$ with $y^*$ gives another solution, because
	\begin{eqnarray}
		&& \w((\rho,v_{y^*}))+ \w((v_{y^*},y^*))\\
		&\geq& \w((\rho,v_{j,1})) + \w((v_{j,1},y_{j,1}))\\
		&>& \w((\rho,v_{j,1})) + \sum_{h=1}^{i-1} \w((v_{j,h},v_{j,h+1})) + \w((v_{j,i},y_{j,i}))
	\end{eqnarray}
	where~$v_{y^*}$ is the parent of~$y^*$.
	%
	Thus, if a solution~$S$ contains any element of~$Z_X$ we can assume~$S$ also contains $y^*$.
	
	Now, suppose $k > |B| + |Y|$.
	Then, any solution~$S$ contains at least one element of~$A \cup Z_X$.
	This implies that $S$ contains at least one taxon of $A \cup \{y^*\}$ as~$S$ contains~$y^*$ if it contains any taxon in $Z_X$.
	If $a^* > c^*$, then $S$ contains $x^*$, as otherwise we could replace a taxon from~$(A\setminus \{x^*\}) \cup \{y^*\}$ with $x^*$.
	So, in this case, $S$ contains $x^*$, and we can save $x^*$ by~Lemma~\ref{lem:save-remove}.
	Otherwise, we may assume $S$ contains $y^*$, as otherwise we can replace an element from $A$ with $y^*$, and we can save $y^*$ by~Lemma~\ref{lem:save-remove}.

	\proofparagraph{Running Time}
	We can compute the size of~$B$ and~$Y$ and find $a^*$ and~$c^*$ in $\Oh(n \log\max_\w)$ time.
	Saving $x^*$ or $y^*$ takes $\Oh(m \log\max_\w)$ time.
\end{proof}

\begin{rr}
	\label{rr:A}
	Let $x_1,\dots,x_{|A|}$ be an ordering of the taxa in $A$ such that $\w((\rho,x_i)) \ge \w((\rho,x_{i+1}))$ for each $i\in [|A|-1]$.
	If $k > |A|$, then remove $x_{k+1},\dots,x_{|A|}$ and their incident edges from~$\Net$.
\end{rr}
\begin{lemma}
	\label{lem:A}
	Reduction Rule~\ref{rr:A} is correct and can be applied 
	in $\Oh(n\log n \cdot \log\max_{\w})$~time.
\end{lemma}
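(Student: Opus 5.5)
The plan is an exchange argument combined with the second item of \Cref{lem:save-remove}. The key observation is that every taxon $x\in A$ is a leaf-child of the root $\rho$. Its only incoming edge is $(\rho,x)$, and $\off((\rho,x))=\{x\}$. So for every $S\subseteq X$ and every $x\in A\setminus S$ we have
$$\PD(S\cup\{x\}) = \PD(S) + \w((\rho,x)),$$
and the contribution of $x$ is independent of everything else in $S$. I read the rule as acting only when there are taxa $x_{k+1},\dots,x_{|A|}$ to delete, that is, when $|A|>k$; otherwise it does nothing.

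For the direction from $\Instance$ to the reduced instance, I take a solution $S$ of $\Instance$ that minimizes the number of taxa in $S\cap\{x_{k+1},\dots,x_{|A|}\}$. Suppose some $x_j$ with $j>k$ lies in $S$. Since $|S|\le k$ and $x_j\in S$, at most $k-1$ of the $k$ taxa $x_1,\dots,x_k$ are in $S$, so some $x_i$ with $i\le k$ is missing from $S$. Set $S' := (S\setminus\{x_j\})\cup\{x_i\}$. By the additivity above,
$$\PD(S') = \PD(S) - \w((\rho,x_j)) + \w((\rho,x_i)) \ge \PD(S),$$
because the ordering gives $\w((\rho,x_i))\ge \w((\rho,x_j))$. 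Also $|S'|=|S|$, so $S'$ is a solution with fewer removed taxa, contradicting minimality. Hence there is a solution avoiding every removed taxon. Applying the second item of \Cref{lem:save-remove} once per removed taxon shows it is a solution of the reduced instance.

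For the converse, any solution of the reduced instance is a set $S$ avoiding the removed taxa. Removing leaves hanging directly from $\rho$ does not change which of the remaining edges are affected by $S$, so $\PD(S)$ is the same in both networks. By \Cref{lem:save-remove} again, $S$ is a solution of $\Instance$. Note that removing these leaves may lower the root's degree, but this does not break the argument; any such degenerate vertices are handled by the general clean-up discussed after the rules.

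For the running time, computing $A$ takes $\Oh(n)$ time. Sorting $A$ by edge weight takes $\Oh(n\log n)$ comparisons, each costing $\Oh(\log\max_\w)$, for $\Oh(n\log n\cdot\log\max_\w)$ in total. Deleting the tail of the sorted list is linear. There is no real obstacle here; the only point needing care is the counting step that guarantees some $x_i$ with $i\le k$ is outside $S$ whenever a removed taxon is inside $S$.
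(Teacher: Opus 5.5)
Your proposal is correct and follows essentially the same route as the paper. The paper also argues the converse directly, and for the forward direction swaps a deleted taxon $x_j$ ($j>k$) in a solution for a missing $x_i$ with $i\le k$, using $\w((\rho,x_i))\ge\w((\rho,x_j))$; its running time likewise comes from sorting $A$. Your additivity observation for leaf-children of $\rho$, your minimality formulation, and your reading of the rule's trigger as $|A|>k$ (as literally stated, ``$k>|A|$'' makes the deleted range empty) only make explicit what the paper leaves implicit.
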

\begin{proof}
	\proofparagraph{Correctness}
	Clearly, any solution for \MAPPD on instance~$\Instance'$ is also a solution for \MAPPD on~\Instance.
	Conversely, let~$S$ be a solution for \MAPPD on~$\Instance$.
	If $S \cap \{x_{k+1},\dots,x_{|A|}\} = \emptyset$, then $S$ is also a solution for \MAPPD on~$\Instance'$.
	Assume that $x_i \in S$ for some~$i\in \{k+1, \dots, |A|\}$.
	As~$|S| \le k$ we conclude that there is a taxon~$x_j$ for $j\in [k]$ with $x_j \not\in S$.
	Because~$\w((\rho,x_j)) \ge \w((\rho,x_i))$, we conclude that $(S \cup \{x_j\}) \setminus \{x_i\}$ is also a solution for \MAPPD on~\Instance.
	Thus, we may assume that~$S \cap \{x_{k+1},\dots,x_{|A|}\} = \emptyset$.

	\proofparagraph{Running Time}
	We can sort $A$ in $\Oh(n\log n \cdot \log\max_{\w})$~time.
\end{proof}

\begin{rr}
	\label{rr:pathlength}
	
	Let $z_0,\dots,z_{k+1}\in Y\cup Z$ be vertices with $z_{i}$ being the parent of $z_{i+1}$ for $i\in [k]_0$.
	Remove $x_k$, the leaf-child of~$z_k$ (and apply \Cref{rr:deg-2} afterwards).
	%
\end{rr}
\begin{lemma}
	\label{lem:pathlength}
	\Cref{rr:pathlength} is correct and can be applied exhaustively in $\Oh(n + \log\max_\w)$~time.
\end{lemma}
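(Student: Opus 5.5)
Like the earlier rules, we argue in both directions, with an exchange argument for the nontrivial one. Let $\Instance'$ denote the instance after removing $x_k$, the leaf-child of $z_k$, and then contracting $z_k$ via \Cref{rr:deg-2}.

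\emph{Easy direction.} By \Cref{lem:save-remove}, every solution of $\Instance'$ is a solution of \Instance. \Cref{rr:deg-2} preserves $\PD$ by \Cref{lem:deg-2}, so the contraction step causes no trouble.

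\emph{Hard direction.} We must show that if \Instance has a solution $S$, then it has one avoiding $x_k$; \Cref{lem:save-remove} then finishes. Since all earlier rules are exhausted, we may use the inequalities of \Cref{rr:a>a+b} along the path $z_0,\dots,z_{k+1}$. Write $x_i$ for the leaf-child of $z_i$. For each $i$ we have $\w((z_i,x_i)) > \w((z_i,z_{i+1})) + \w((z_{i+1},x_{i+1}))$, and by telescoping, for $j<i$,
\[
\w((z_j,x_j)) > \sum_{h=j}^{i-1}\w((z_h,z_{h+1})) + \w((z_i,x_i)).
\]
Suppose $x_k \in S$. We split into two cases.
\begin{itemize}
\item If some $x_j$ with $j<k$ is missing from $S$, take the largest such $j$ and set $S' := (S\setminus\{x_k\})\cup\{x_j\}$. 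The edge path from $z_j$ down to $z_k$ remains affected in $S'$, because $x_{k-1} \in S$ or $j=k-1$. So the loss is exactly $\w((z_k,x_k))$ plus at most the path edges above $z_k$ that only $x_k$ needed, which is none. The gain is $\w((z_j,x_j)) > 0$. Hence $\PD(S') \ge \PD(S)$. A cleaner bound compares $\w((z_j,x_j))$ with $\w((z_k,x_k))$ directly via the telescoped inequality.
\item Otherwise $x_0,\dots,x_k \in S$, so $|S| \ge k+1 > k$, which is impossible.
\end{itemize}

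\emph{Possible subtlety: $x_{k+1}$ or the part below $z_{k+1}$.} Removing $x_k$ does not alter which edges above $z_k$ are affected for sets containing a descendant of $z_{k+1}$, so no further case is needed. If, however, the lemma intends that $z_k$'s incoming edge may lose affectedness when $x_k$ is removed, the exchange above already accounts for it, since $x_j$ lies above.

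\emph{Main obstacle.} The step I expect to be hardest is carefully justifying the exchange inequality when $j<k-1$ and some intermediate $x_i$ are in $S$. Choosing $j$ maximal avoids this, since then $x_{j+1},\dots,x_{k-1} \in S$ and the edges between $z_j$ and $z_k$ stay affected.

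\emph{Running time.} Exhaustive application needs only one top-down scan of each side-path from the root, tracking depth. Every vertex at depth $k$ on a chain of length at least $k+2$ loses its leaf. After contraction the chain shortens, but repeated application just truncates each chain, and this is doable in a single pass in $\Oh(n)$ total plus one weight addition per contraction, giving $\Oh(n+\log\max_\w)$ as claimed, with the additions amortized.
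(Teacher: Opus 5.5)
Your strategy is the paper's: the inequalities left by \Cref{rr:a>a+b} push any solution up the chain, so a size-$k$ solution containing $x_k$ would have to contain $x_0,\dots,x_k$. However, the exchange step as written has a false claim and a non sequitur.

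When you replace $x_k$ by $x_j$, the edge $(z_{k-1},z_k)$ can stop being affected. Its offspring set is $\off(z_k)=\{x_k\}\cup\off(z_{k+1})$, and neither $x_{k-1}$ nor $x_j$ lies in it. So it stays affected only if $S$ happens to contain an offspring of $z_{k+1}$. For a concrete case, take $j=k-1$ and $S\cap\off(z_{k+1})=\emptyset$. Hence these claims of yours are false:
\begin{itemize}
\item ``the edge path from $z_j$ down to $z_k$ remains affected'';
\item the path edges above $z_k$ needed only by $x_k$ are ``none'';
\item the exchange ``already accounts for it, since $x_j$ lies above''.
\end{itemize}
Separately, even granting your loss of only $\w((z_k,x_k))$, the step ``the gain is $\w((z_j,x_j))>0$, hence $\PD(S')\ge\PD(S)$'' does not follow. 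You need gain $\ge$ loss.

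The repair uses the telescoped inequality you already wrote, but in full. Take $j$ maximal. All ancestors of $x_k$ other than $\rho$ are in-degree-1 side-vertices, so only edges on the single root-to-$x_k$ path can lose affectedness. Among these:
\begin{itemize}
\item $(z_h,z_{h+1})$ for $j\le h\le k-2$ stays affected via $x_{h+1}\in S'$;
\item $(z_h,x_h)$ for $j<h<k$ stays affected via $x_h\in S'$;
\item every edge above $z_j$ stays affected via $x_j$.
\end{itemize}
So at most $(z_k,x_k)$ and $(z_{k-1},z_k)$ are lost, while $(z_j,x_j)$ is gained:
\[
\PD(S')\ \ge\ \PD(S)+\w((z_j,x_j))-\w((z_{k-1},z_k))-\w((z_k,x_k)).
\]
The case $i=k$ of your telescoped bound gives
\[
\w((z_j,x_j))>\sum_{h=j}^{k-1}\w((z_h,z_{h+1}))+\w((z_k,x_k))\ge\w((z_{k-1},z_k))+\w((z_k,x_k)).
\]
So the point is not merely $\w((z_j,x_j))>\w((z_k,x_k))$, as your ``cleaner bound'' suggests. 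It is that the path-edge sum pays for the possibly lost edge $(z_{k-1},z_k)$. With this fix, the rest of your argument is fine: the case $x_0,\dots,x_k\in S$, and the easy direction via \Cref{lem:save-remove}.

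A minor point on running time: your ``amortized'' argument for exhaustive application is unsupported, since each contraction costs a weight addition. The paper only argues the per-application cost (find $z_k$ in $\Oh(n)$, edit in $\Oh(\log\max_\w)$), and that is what the kernel's running-time analysis uses.
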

\begin{proof}
	\proofparagraph{Correctness}
	Because we applied~\Cref{rr:trees,rr:deg-2,rr:a>a+b} exhaustively, each vertex $z_i$ has exactly one leaf-child~$x_i$ and~$\w((z_i,x_i)) > \w((z_{i+1},x_{i+1})) + \w((z_i,z_{i+1}))$.
	Consequently, we can assume that if~$x_i$ is in a solution then so are~$x_0, \dots, x_{i-1}$.
	Therefore, $x_{k}$ cannot be in a solution of size~$k$, and applying~\Cref{rr:pathlength} is correct.
	
	\proofparagraph{Running Time}
	Similar to the proof of~\Cref{lem:deg-2}, we can find an appropriate vertex $z_k$ in~$\Oh(n)$~time and edit the network in $\Oh(\log\max_\w)$ time.
\end{proof}

Finally, we have everything to proof this section's main theorem.
\begin{proof}[Proof of Theorem~\ref{thm:vkernel}]
	For a given instance~$\Instance = (\Net = (V,E), k, D)$ apply all of the reduction rules exhaustively to receive an instance~$\Instance' = (\Net' = (V',' k', D')$ of \MAPPD in which none of the reduction rules apply.
	As discussed above, this instance may have vertices with in- and out-degree larger than $1$, and edges with weight $0$, neither of which are formally allowed by our definition of \MAPPD. We now discuss how the instance may be edited to fix these issues.

	First, we replace each vertex~$v$ which has an in-degree and an out-degree of larger than~1 with a vertex~$v_{\text{in}}$ receiving all the incoming edges of~$v$ and a vertex~$v_{\text{out}}$ receiving all the outgoing edges, and add an edge~$(v_{\text{in}},v_{\text{out}})$ of weight~0.
	Finally, we multiply each edge weight and $D'$ by $(|E'|+1)$, and the edges with weight~0 to weight~1.
	Let $\Instance^* = (\Net^* = (V^*,E^*), k^*, D^*)$ denote the resulting instance of \MAPPD.

	In what follows we let $R,Q,P$ and $A,B,Y,Z$ denote the sets of vertices in $\Net$ as defined in \Cref{def:cor-side,def:top-strings} respectively. We let $R^*,Q^*,P^*$ and $A^*,B^*,Y^*,Z^*$ denote the corresponding vertices in $\Net^*$.
	
	The correctness follows from the correctness of the reticulation rules.
	
	\proofparagraph{Running time}
	Observe that~\Cref{rr:top-strings,rr:A,rr:pathlength} remove one taxon each, while~\Cref{rr:trees,rr:a>a+b} decrease~$|X\setminus A|$ by $1$.
	As none of the reduction rules increase $|X|$ or $|X\setminus A|$, these rules are applied at most $|X| + |X\setminus A| \leq 2 |X|$ times in total.
	For~\Cref{rr:deg-2}, we note that~\Cref{rr:trees,rr:a>a+b,rr:top-strings,rr:A,rr:pathlength} each create one additional degree-2 vertex, and~\Cref{rr:strings} does not create any.
	Thus, every application of~\Cref{rr:deg-2} is either on at most~$n$ initial degree-2-vertices or occurs after an application of one of \Cref{rr:trees,rr:a>a+b,rr:top-strings,rr:A,rr:pathlength}, and thus~\Cref{rr:deg-2} is applied at most~$2 \cdot |X| + n$ times.

	Each application of \Cref{rr:strings} reduces the number of side-paths starting at a non-root vertex of length of at least~2.
	None of the reduction rules increase this measure, and so the number of application of~\Cref{rr:strings} is bounded by the number of side-paths in the original network, which is $\Oh(\eret)$ by~\Cref{obs:core-retic-bound}.
	
	Therefore, the total number of applications of all reduction rules is $\Oh(|X| + n + \eret) \ \Oh(n)$. 
	Note that a single application of any reduction rule takes $\Oh((m+ n\log n)\cdot \log\max_\w)$ time by~\Cref{lem:trees,lem:deg-2,lem:a>a+b,lem:strings,lem:top-strings,lem:A,lem:pathlength}.	
	An application of~\Cref{rr:deg-2} may increase $\max_\w$, but the total weight of the network never increases and therefore the maximum weight of any edge at any stage in the algorithm is at most the total weight of the network, or at most $m \cdot \max_\w$.
	Therefore, the maximum running time of any application of a reduction rule is $\Oh((m+ n\log n)\cdot (\log m + \log\max_\w))$.
	We conclude that applying all rules exhaustively takes $\Oh((|X| + \eret) \cdot (m+ n\log n)\cdot (\log m +  \log\max_\w))$ time,
    which can be summarized as~$\Oh(m^2 \log^2 m \cdot \log \max_\w)$.
    The final step of replacing vertices with in-and out-degree greater than $1$ and adjusting edge-weights can be done in $\Oh(m \cdot (\log m + \log \max_\w))$, and so the final running time remains~$\Oh(m^2 \log^2 m \cdot \log \max_\w)$.

	\proofparagraph{Size}
	Each application of~\Cref{rr:strings} increases the number of reticulation-edges by one.
	Therefore, note that by \eret we refer to the parameter in the original network.
	
	Observe that none of the reduction rules (except for~\Cref{rr:strings})
	change the number of reticulations or core vertices in the network.
	Reduction Rule \ref{rr:strings} may turn some core vertices into reticulations but otherwise does not create new core vertices or reticulations.
	Therefore, we have $|R^*| + |Q^*| = |R| + |Q| \in \Oh(\eret)$ by Observation~\ref{obs:core-retic-bound}.
	
	As~\Cref{rr:strings} is exhaustively applied, we have that each side-path in~$\Net^*$ not leaving the root has at most one internal vertex, and this vertex has one leaf child. 
	There are $\Oh(\eret)$ side-paths in $\Net^*$ by~\Cref{obs:core-retic-bound} and the fact that none of the reduction rules increase the number of side-paths not leaving the root.
	Thus, we have that the total number of vertices reachable from $R^* \cup (Q^*\setminus \rho)$ is at most $\Oh(|R^*| + |Q^*|) + \Oh(\eret) = \Oh(\eret)$.
	That is, $|B^*| \in \Oh(\eret)$.
	
	The size of~$Y^*$ is the number of side-paths starting at the root.
	There are at most~$|Q| + 2\cdot\eret$ such paths in the original instance.
	Each application of~\Cref{rr:strings} adds one such path.
	We saw that~$|Q| \in \Oh(\eret)$ and~\Cref{rr:strings} can be applied~$\Oh(\eret)$ times.
	We conclude~$|Y^*| \in \Oh(\eret)$.
	
	After~\Cref{rr:top-strings} has been applied exhaustively, we may conclude that~$k^* \le |B^*| + |Y^*| \in \Oh(\eret)$.
	After~\Cref{rr:A} has been applied exhaustively, there are at most~$k^*$ vertices in~$A^*$
	and 
	after~\Cref{rr:pathlength} has been applied exhaustively, each side-path has length at most~$k^*$ such that~$|Y^*| + |Z^*| \le |Y^*| \cdot (k^* + 1) \in \Oh(\eret^2)$.
	We conclude~$|V^*| \in \Oh(\eret^2)$.
	
	We have~$\eretwithoutN_{\Net^*} \le \eret + |Q^*| + |R^*| \in \Oh(\eret)$. Because~$\eret = |E| - |V|$ in any network, we conclude
	$|E^*| = \eretwithoutN_{\Net^*} + |V^*| \in \Oh(\eret^2)$.
\end{proof}

From Theorem~\ref{thm:vkernel}, we have that in polynomial time we can reduce any instance~\Instance of \MAPPD to an equivalent instance $\Instance^* = (\Net^* = (V^*,E^*,{\w^*}), k^*, D^*)$ in which~$|V^*|$, $|E^*|$, and~$k^*$ are all bounded by a polynomial in $\eret$.
This does not guarantee a polynomial kernel, as the encoding size of $D^*$ or $\max_{\w^*}$
could be much larger than~$|V^*|$ or~$|E^*|$.
Fortunately, we can apply a result of \cite{etscheid,frank1987} to bound these values, as follows.

Let~$e_1,\dots,e_{m^*}$ be an order of the edges after applying all reduction rules.
We define~$w_i := \w^*( e_i )$ for each~$i\in [m^*]$ and~$W := D^*$.
In polynomial time, we can compute positive numbers~$w_1', \dots, w_{m^*}'$, and~$W'$ such that the total encoding length is~$\Oh( (m^*)^4 ) \in \Oh( \eret^8 )$ with~$\sum_{i \in S} w_i \ge W$ if and only if~$\sum_{i \in S} w_i' \ge W'$ for every~$S \subseteq [m^*]$ by~\cite[Corollary~2]{etscheid}.

We directly conclude the following.

\begin{theorem}
	\label{thm:kernel}
	\MAPPD admits a polynomial size kernelization algorithm for the number of reticulation-edges~\eret.
\end{theorem}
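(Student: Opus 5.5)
The kernel is obtained by composing \Cref{thm:vkernel} with the weight-compression result of \cite{etscheid,frank1987}, so most of the work is checking that the compressed instance is still a valid, equivalent instance of \MAPPD. First apply \Cref{thm:vkernel} to $\Instance=(\Net,k,D)$. In polynomial time this gives an equivalent instance $\Instance^*=(\Net^*,k^*,D^*)$ with $|V^*|,|E^*|\in\Oh(\eret^2)$ and $k^*\in\Oh(\eret)$. Only the encoding length of the weights $\w^*$ and of $D^*$ is still unbounded.

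Next, order the edges $e_1,\dots,e_{m^*}$, set $w_i:=\w^*(e_i)$ and $W:=D^*$, and apply \cite[Corollary~2]{etscheid}. This yields, in polynomial time, positive integers $w'_1,\dots,w'_{m^*},W'$ of total encoding length $\Oh((m^*)^4)\subseteq\Oh(\eret^8)$. They satisfy $\sum_{i\in S}w_i\ge W \iff \sum_{i\in S}w'_i\ge W'$ for every $S\subseteq[m^*]$. Define $\Instance'=(\Net',k^*,W')$, where $\Net'$ is $\Net^*$ with each edge $e_i$ reweighted to $w'_i$.

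To show that $\Instance'$ is equivalent to $\Instance^*$, fix any $Y\subseteq X$. The set $E_Y$ of affected edges depends only on the graph, not on the weights, so $E_Y$ is the same in $\Net^*$ and $\Net'$. Taking $S:=\{i \mid e_i\in E_Y\}$ gives $\PDsub{\Net^*}(Y)\ge D^* \iff \PDsub{\Net'}(Y)\ge W'$. Since $k^*$ is unchanged, $Y$ is a solution of one instance exactly when it is a solution of the other. Hence $\Instance'$ is equivalent to $\Instance^*$, and by \Cref{thm:vkernel} also to $\Instance$.

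Finally, bound the size and the parameter. The graph has $\Oh(\eret^2)$ vertices and edges, the weights have total encoding length $\Oh(\eret^8)$, and $k^*\in\Oh(\eret)$ takes $\Oh(\log\eret)$ bits, so the whole instance has size polynomial in $\eret$. The new parameter $\eretwithoutN_{\Net'}=|E^*|-|V^*|+1$ is also $\Oh(\eret^2)$.

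The main point needing care is that the compressed weights are positive integers, as the definition of a network requires. The corollary in \cite{etscheid} provides integer weights, and positivity can be ensured by applying it after the final rescaling in \Cref{thm:vkernel}, where zero weights were raised to $1$. If some $w'_i$ came out as $0$, I would instead append the constraint $w_i\ge 1$ to the system being compressed, or use the rescaling trick again: multiply all weights and the threshold by $m^*+1$ and then raise zeros to $1$. This preserves the threshold behaviour with only a polynomial blow-up in encoding length.
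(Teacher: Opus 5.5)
Your proposal is correct and follows essentially the same route as the paper. You apply \Cref{thm:vkernel} and then compress the edge weights and the threshold via \cite[Corollary~2]{etscheid}, and your observation that the set of affected edges $E_Y$ does not depend on the weights makes explicit an equivalence step the paper leaves implicit. Your positivity fallback is not needed: the underlying Frank--Tardos rounding preserves the sign of each individual coordinate (take $b$ to be a unit vector), so positive input weights stay positive.
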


\section{Discussion}
\label{sec:discussion}
In this paper, we considered \MAPPDlong, a problem in maximizing phylogenetic diversity in networks, and analyzed it within the framework of parameterized complexity.
We showed that \MAPPD is \Wh 2-hard parameterized by $k$, the size of the solution.
We further were able to show an equivalence between \MAPPD parameterized by $k$ and \wpSClong parameterized by the size of the solution.
Thus, establishing the exact complexity class of \wpSClong, which seems to be of interest, would also establish the exact complexity class of \MAPPD.
On the positive side, we showed that \MAPPD is \FPT when parameterized with the number of reticulations~$\vret$ and with respect to the treewidth~$\tw$ of the network.
We further showed that \MAPPD admits a kernelization algorithm with respect to the number of reticulation-edges~$\eret$.

Because $\vret$ is smaller than~$\eret$, it is natural to ask if the kernelization result can also be shown for $\vret$.
We also ask whether \MAPPD parameterized by $k$ is \Wh 2-complete.

In the bigger picture, we have not addressed the question of finding a suitable measure for phylogentic diversity in networks.
The all-paths phylogenetic diversity measure $\PD$ considered in this paper is one of four measures considered in~\cite{bordewich}. The second measure, which they call \emph{Network-PD}, requires not only weights on each of the edges in the network, but also an \emph{inheritance proportion} $p(e)$ on each edge $e = (u,v)$ leading into a reticulation. This value denotes the expected number of features that are expected to be passed from $u$ to $v$. Network-PD is a generalization of AllPaths-PD, as the measures are equivalent when all inheritance proportions are $1$.
The authors also consider two additional measures, \emph{MinWeightTree-PD} and \emph{MaxWeightTree-PD}, in which the respective minimum and maximum displayed tree is considered.
\emph{Average-Tree PD}, which uses inheritance proportions to create a weighted average over all displayed trees, has been studied in~\cite{AVGTree}. This measure is bounded from below and above by MinWeightTree-PD and MaxWeightTree-PD respectively. See also~\cite{vanIersel2026averagetree}.

Since the first publication of this paper, Network-PD has been examined from a parameterized point of view and it has been proven that the problem of maximizing Network-PD is also \FPT when parameterized by $\vret$ on binary networks, but is already NP-hard on level-1 networks~\cite{MaxNPD}.
We note that, since Network-PD generalizes AllPaths-PD, our hardness results for $k$ and $\kbar$ also carry across to Network-PD.
The measure Average-Tree-PD is \#P-hard, but \FPT when parameterized by the network's level on networks of bounded degree~\cite{AVGTree}.
Recently, it has been shown that optimizing MaxWeightTree-PD and computing values for MinWeightTree-PD are both \FPT with respect to node scanwidth~\cite{holtgrefe2026tractable}. It remains open whether MinWeightTree-PD can also be optimized efficiently.

Besides computational issues, there is also the important question of which definition of phylogenetic diversity provides the most realistic measure of actual biodiversity. Any useful definition must strike a balance of being as biologically realistic as possible while still being computationally feasible to work with.

\bibliography{ref}
\bibliographystyle{plainurl}   

\end{document}